\renewcommand{\paragraph}{\roman{paragraph}}
\renewcommand\title[1]{\gdef\@title{\reset@font\Large\bfseries #1}}
\renewcommand\section{\@startsection {section}{1}{\z@}%
                                   {-3.5ex \@plus -1ex \@minus -.2ex}%
                                   {2.3ex \@plus.2ex}%
                                   {\normalfont\large\bfseries}}
\renewcommand\subsection{\@startsection{subsection}{2}{\z@}%
                                     {-3ex\@plus -1ex \@minus -.2ex}%
                                     {1.5ex \@plus .2ex}%
                                     {\normalfont\normalsize\bfseries}}
\renewcommand\subsubsection{\@startsection{subsubsection}{3}{\z@}%
                                     {-2.5ex\@plus -1ex \@minus -.2ex}%
                                     {1.5ex \@plus .2ex}%
                                     {\normalfont\normalsize\bfseries}}
\def\@runningauthor{}\newcommand{\runningauthor}[1]{\def\runningauthor{#1}}
\def\@runningtitle{}\newcommand{\runningtitle}[1]{\def\runningtitle{#1}}
\renewcommand{\ps@plain}{%
\renewcommand{\@evenhead}{\footnotesize\scshape \hfill\runningauthor\hfill}
\renewcommand{\@oddhead}{\footnotesize\scshape \hfill\runningtitle\hfill}}
\newcommand{\F}{\mathbb{F}}
\newcommand{\x}{\mathbf{x}}
\newcommand{\T}{{\rm T}}
\newcommand {\D}{\mathfrak{D}}
\newcommand {\C}{{\mathcal{C}}}
\newcommand {\ccc}{{\mathbf{c}}}
\newcommand {\dd}{\mathbf{d}}
\newcommand {\0}{\mathbf{0}}
\newcommand {\aaa}{\alpha}
\newcommand {\bbb}{\beta}
\g@addto@macro\bfseries{\boldmath}
\theoremstyle{plain}
\newtheorem{theorem}{Theorem}
\newtheorem{lemma}[theorem]{Lemma}
\newtheorem{corollary}[theorem]{Corollary}
\theoremstyle{definition}
\newtheorem{definition}[theorem]{Definition}
\newtheorem{example}[theorem]{Example}
\newtheorem{conjecture}[theorem]{Conjecture}
\newtheorem{open}[theorem]{Open Problem}
\newtheorem{prop}[theorem]{Proposition}
\theoremstyle{remark}
\newtheorem{remark}[theorem]{Remark}
\title{The $b$-symbol weight distribution of irreducible cyclic codes and related consequences
}
\runningtitle{The $b$-symbol weight distribution of irreducible cyclic codes and related consequences}
\author{Hongwei Zhu \thanks{ School of Mathematical Sciences, Anhui University, Hefei, China. E-mail: zhwgood66@163.com}
\and Minjia Shi\thanks{School of Mathematical Sciences, Anhui University, Hefei, China. E-mail: smjwcl.good@163.com}
}
\runningauthor{}
\date{}
\begin{document}

\maketitle

\thispagestyle{empty}

\begin{abstract}
The $b$-symbol read channel is motivated by the limitations of the reading process in high
density data storage systems.
The corresponding new metric is a generalization of the Hamming metric known as the $b$-symbol weight metric and has become an important object in coding theory.
In this paper, the general $b$-symbol weight enumerator formula for irreducible cyclic codes is presented by using the Gaussian period and a new invariant $\#U(b,j,N_1)$. The related $b$-symbol weight hierarchies $\{d_1(\C),d_2(\C),\ldots,d_K(\C)\}$ ($K=\dim(\C)$) are given for some cases. The shortened codes which are optimal from some classes of irreducible cyclic codes are given, where the shorten set $\mathcal{T}$ is the complementary set of $b$-symbol support of some codeword with the minimal $b$-symbol weight.
\end{abstract}
{\bf Keywords:} $b$-symbol weight, irreducible cyclic code, Gaussian period, shortened code\\
{\bf MSC(2010):} 94 B15, 94 B25, 05 E30

\section{Introduction}
The theory
of error-control codes aims to recover the original information units when some bound is given on their corruption. These
corruption bounds can be defined at the code-block level, such as a
certain number of errors in Hamming metric codes, or at the
individual-symbol level, such as symbol-transition restrictions in
asymmetric or unidirectional error-correcting codes. The alphabet on which the information unit is defined may change
throughout the coding problem, like in soft-decoding, but
it is still typically the same unit that is tracked and analyzed.
In 2011, Cassuto and Blaum \cite{CB1,CB} proposed a new coding framework for channels whose outputs are overlapping pairs of symbols. Such channels are motivated by storage applications in which the spatial resolution of the reader may be insufficient to isolate adjacent symbols. Codes are still defined by an alphabet, as usual. The goal is to protect against a certain number of pairwise errors, not a certain number of symbol errors. A pair-error is defined as a pair-read in which one or more of the symbols is read in error. Due to physical limitations, individual symbols cannot be read off the channel. Therefore, each channel read contains contributions from two adjacent symbols. The constructions of symbol-pair codes are studied in a series of papers \cite{CL,C+,C+1,CLL,Eli,DGZ,KZL,LG,ML1,ML,SZW,Yaa}.
Later, Yaakobi {\it et al.} \cite{Yaa1} generalized the symbol-pair read channel to the $b$-symbol read channel. The contributions to the $b$-symbol codes can be found in \cite{DZG,SOS,YLF,ZHW,ZHW1,ZHW2} and the references therein.

The definition of $b$-symbol metric will be introduced in detail in Section II of this paper. It is not hard to see that the $b$-symbol metric is a natural generalization of Hamming metric. Another generalization of Hamming metric is the $b$-th generalized Hamming metric, which has appeared as early as in 1970s \cite{Hell2,K1} and has become an important research topic in coding theory after the famous paper \cite{wei} in 1991, where Wei gave a series of wonderful consequences on the $b$-th generalized Hamming metric and indicates that it completely characterizes the performance of a linear code when it is used on the wire-tap channel of type II. For more details on the $b$-th generalized Hamming metric, we refer the readers to \cite{wei}.

 Is there any connection between the two types of generalizations? Shi {\it et al.} \cite{BUG} considered this question and showed many interesting results, especially if $\C$ is constayclic.
 Let $\C$ denote a linear code with dimension $k$. We use $\dd_b(\C)$ to denote the minimum $b$-th generalized Hamming distance of $\C$.
 When $b=1$, $\dd_1(\C)$ is the minimal Hamming distance of $\C$.
 The set $$\{\dd_b(\C)|1\leq b\leq k\}$$ is called the weight hierarchy of $\C$. To distinguish it from the later definition, let us call it the generalized weight hierarchy in the sequel.

 For a code $C$, $d_b(C)$ denotes the minimum $b$-symbol distance of $C$.
When $b=1$, $d_1(C)$ is also the minimum Hamming distance of $C$. The $b$-symbol metric is also called symbol-pair metric if $b=2$. The set
      $$\{d_b(C)|1\leq b\leq n\}$$
      is called the $b$-symbol weight hierarchy of $C$. Note that $C$ could be an unrestricted code under the $b$-symbol metric.
 If $C$ is a cyclic code (or a constacyclic code), then the $b$-symbol weight hierarchy of $C$ has the following property:
 \begin{equation}\label{equa1}
   d_1(C)<d_2(C)<\cdots<d_{k-1}(C)<d_k(C)=d_{k+1}(C)=\cdots=d_n(C)=n.
 \end{equation}
The generalized Hamming weight hierarchy of $C$ has a similar property to (\ref{equa1}), and $C$ could be a linear code but not cyclic.
\begin{theorem}\label{kgeqb}\cite{BUG}
If $\C$ is a cyclic code with length $n$ and dimension $K$, then $d_b(\C)=n$ for $K\leq b\leq n.$ Moreover, if $\C$ is a cyclic code, then $d_b(\C)=\dd_b(\C)$ if $b=1$ or $b=\dim(\C)$.
\end{theorem}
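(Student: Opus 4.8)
The plan is to isolate the combinatorial core of the statement and then read off both conclusions from it. The starting observation is that the $b$-symbol weight of a word $\mathbf{c}$ equals $n$ precisely when no cyclic window $(c_i,c_{i+1},\ldots,c_{i+b-1})$ vanishes, that is, when $\mathbf{c}$ contains no run of $b$ cyclically consecutive zeros. Since there are exactly $n$ windows, proving $d_b(\C)=n$ is equivalent to showing that every nonzero codeword avoids a run of $b$ consecutive zeros. I would therefore reduce everything to a single lemma about runs of zeros in cyclic codewords.

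The main lemma I would establish is: \emph{a nonzero codeword of a cyclic $[n,K]$ code cannot contain $K$ cyclically consecutive zero coordinates.} To prove it, write the check polynomial $h(x)=\sum_{i=0}^{K}h_i x^i$, where $h_0 h_K\neq 0$ because $h(x)\mid x^n-1$ while $x\nmid x^n-1$. The membership condition $c(x)h(x)\equiv 0 \pmod{x^n-1}$ translates into the order-$K$ linear recurrence $\sum_{i=0}^{K} h_i\, c_{k-i}=0$ (indices taken mod $n$) satisfied by the coordinates of every codeword; for $K\le n-1$ the $K+1$ indices $k,k-1,\ldots,k-K$ are distinct, so this is a genuine recurrence. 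Since $h_0$ and $h_K$ are invertible, any $K$ consecutive coordinates determine all the others, propagating both forward and backward around the cycle. Hence a block of $K$ consecutive zero coordinates forces $\mathbf{c}=\0$. This is the step I expect to require the most care, as it is where the cyclic structure is genuinely used; the degenerate case $K=n$ (i.e.\ $\C=\F^n$) is handled separately and is trivial.

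Given the lemma, the first assertion is immediate. For $K\le b\le n$, a vanishing window of length $b$ would produce $b\ge K$, hence at least $K$, consecutive zeros, which is impossible for a nonzero codeword. Thus every window of every nonzero codeword is nonzero, so $w_b(\mathbf{c})=n$ for all $\mathbf{c}\neq\0$, and therefore $d_b(\C)=n$.

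For the second assertion, the case $b=1$ is purely definitional: the $1$-symbol weight is the Hamming weight and $\dd_1(\C)$ is by definition the minimum Hamming weight, so both coincide with the minimum Hamming distance. For $b=K$, the first assertion already gives $d_K(\C)=n$, so it remains to show $\dd_K(\C)=n$. Because $K=\dim(\C)$, the only $K$-dimensional subcode of $\C$ is $\C$ itself, whence $\dd_K(\C)$ equals the size of the support of $\C$. If some coordinate were zero in every codeword, cyclic invariance (applying all cyclic shifts) would force every coordinate to be zero in every codeword, i.e.\ $\C=\{\0\}$, contradicting $K\ge 1$. Hence the support of $\C$ is the full set of $n$ coordinates and $\dd_K(\C)=n=d_K(\C)$, completing the argument.
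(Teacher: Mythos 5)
Your argument is correct and complete. One point of comparison is moot, however: the paper does not prove Theorem \ref{kgeqb} at all --- it is quoted without proof from the reference \cite{BUG} --- so there is no in-paper argument to measure yours against. What you have supplied is a clean, self-contained derivation whose core is the classical fact that any $K$ cyclically consecutive coordinates of a cyclic $[n,K]$ code form an information set: your check-polynomial recurrence $\sum_{i=0}^{K}h_i c_{k-i}=0$ with $h_0h_K\neq 0$ propagates a block of $K$ consecutive zeros around the whole cycle, so no nonzero codeword contains such a block, and hence every length-$b$ window of a nonzero codeword with $b\geq K$ is nonzero, giving $d_b(\C)=n$. The two boundary cases of the second claim are also handled correctly: $b=1$ is purely definitional, and for $b=K$ you rightly identify $\dd_K(\C)$ with the support size of the unique $K$-dimensional subcode (namely $\C$ itself) and use shift-invariance to show that the support is all of $\{0,1,\ldots,n-1\}$. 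The only delicate points --- the degenerate case $K=n$, and the distinctness modulo $n$ of the indices $k,k-1,\ldots,k-K$ needed for the recurrence to be genuine --- are both flagged and disposed of properly, so I see no gap.
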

It is worth mentioning that there are other interesting connections between the two metrics. Liu and Pan \cite{LP1,LP2} considered the superposition of two metrics. They call this superposition a generalized $b$-weight (we prefer to call it a generalized $b$-symbol weight). For more details, we refer the readers to \cite{LP1,LP2}.

The following result shows that $\dd_b(\C)$ is a lower bound of $d_b(\C)$.
\begin{theorem}\label{dddd}\cite{BUG}
If $\C$ is a cyclic code, then $d_b(\C)\geq \dd_b(\C).$
\end{theorem}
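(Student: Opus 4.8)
The plan is to produce, from a codeword of minimum $b$-symbol weight, a $b$-dimensional subcode of $\C$ whose Hamming support has size exactly $d_b(\C)$. Since $\dd_b(\C)$ is by definition the least possible size of $\chi(D):=\bigcup_{v\in D}\mathrm{Supp}(v)$ over all $b$-dimensional subcodes $D\subseteq\C$, exhibiting one such $D$ with $|\chi(D)|=d_b(\C)$ immediately gives $\dd_b(\C)\le d_b(\C)$.

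First I would fix a nonzero $c\in\C$ with $w_b(c)=d_b(\C)$, where $w_b$ is the $b$-symbol weight and $\mathrm{Supp}_b(c)$ the $b$-symbol support, so that $|\mathrm{Supp}_b(c)|=d_b(\C)$. If $d_b(\C)=n$ there is nothing to prove, as $\dd_b(\C)\le n$ always holds, so I may assume $d_b(\C)<n$. Then $\mathrm{Supp}_b(c)$ is a proper subset of the coordinate set, which, unwinding the definition of the $b$-symbol support, means precisely that $c$ has a cyclic run of at least $b$ consecutive zero coordinates. Because $\C$ is cyclic I may replace $c$ by a cyclic shift, which changes neither membership in $\C$ nor the $b$-symbol weight, and thereby arrange that $c_0=c_1=\cdots=c_{b-1}=0$.

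The key step is to consider the $b$ left cyclic shifts $c,\sigma^{-1}c,\dots,\sigma^{-(b-1)}c$, all lying in $\C$. On one hand, a direct computation of supports as unions gives $\bigcup_{j=0}^{b-1}\mathrm{Supp}(\sigma^{-j}c)=\mathrm{Supp}_b(c)$, so the subcode $D$ spanned by these shifts satisfies $\chi(D)=\mathrm{Supp}_b(c)$, whence $|\chi(D)|=d_b(\C)$. On the other hand I claim these $b$ shifts are linearly independent, and this is exactly where the zero run is exploited: identifying codewords with elements of $\F[x]/(x^n-1)$, the codeword $c$ corresponds to a polynomial $c(x)$ all of whose terms have degree $\ge b$, so each shift corresponds to a genuine polynomial $x^{-j}c(x)$ of degree $<n$ for $0\le j\le b-1$, with no reduction modulo $x^n-1$ taking place. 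A nontrivial relation $\sum_j a_j\,\sigma^{-j}c=\0$ would then yield $\bigl(\sum_j a_j x^{\,b-1-j}\bigr)c(x)=0$ in the integral domain $\F[x]$, forcing every $a_j=0$. Hence $\dim D=b$, and $\dd_b(\C)\le|\chi(D)|=d_b(\C)$.

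The step I expect to be the main obstacle is precisely the linear-independence claim: a priori the cyclic shifts of $c$ may well be dependent (they become so once their number exceeds the dimension of the cyclic subcode generated by $c$), so the argument genuinely requires the hypothesis $d_b(\C)<n$ to guarantee a run of $b$ zeros and thus the ``no wraparound'' phenomenon that makes the shifts independent. In writing it up I would be careful with the boundary bookkeeping, namely the exact placement of the zero run and the verification that the support union is taken over the correct $b$ shifts, and I would record the trivial case $d_b(\C)=n$ (in particular $b=\dim(\C)$) separately.
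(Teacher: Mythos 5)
Your argument is correct, but note that the paper itself does not prove this theorem: it is imported from \cite{BUG} as a quoted result, so there is no in-paper proof to measure against. The closest internal analogue is the proof of Proposition \ref{prop1}, which relies on exactly the construction you rediscovered: the subcode $V_b(\ccc)$ spanned by $\ccc$ and its first $b-1$ cyclic shifts, whose union of Hamming supports is the $b$-symbol support $\mathcal{I}_b(\ccc)$ of size $w_b(\ccc)$, together with the rank claim that the shift matrix $G_b(\ccc)$ has rank $b$ --- which the paper also outsources to \cite[Lemma 16]{BUG}. Your write-up improves on this in two respects. First, you actually prove the independence of the $b$ shifts instead of citing it, via the observation that $w_b(\ccc)<n$ forces a cyclic run of $b$ zeros, so that after re-aligning, the shifts correspond to the genuine polynomials $x^{-j}\ccc(x)$ with no reduction modulo $x^n-1$, and a dependence $\bigl(\sum_j a_j x^{b-1-j}\bigr)\ccc(x)=0$ in $\F_q[x]$ kills all $a_j$. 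Second, you correctly quarantine the case $d_b(\C)=n$, where the independence claim can genuinely fail (the all-ones codeword has all shifts equal, rank $1$), and where the inequality is instead trivial from $\dd_b(\C)\leq n$; the paper's Proposition \ref{prop1}, as stated, glosses over this degenerate case. The only cosmetic caveat is that with the paper's right-shift convention $\tau$ the set $\bigcup_{j=0}^{b-1}supp(\tau^j(\ccc))$ is a cyclic translate of $supp(\pi_b(\ccc))$ rather than equal to it (your left-shift choice gives exact equality); this affects nothing since only cardinalities enter.
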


It is very meaningful to determine the $b$-symbol weight hierarchy of cyclic codes, since it provides a nice upper bound on their generalized weight hierarchy. Moreover, there is another application for determining the $b$-symbol weight hierarchy of cyclic codes. If the $b$-symbol weight hierarchy of cyclic codes is known, we can obtain a shortened code by shortening some coordinates associated with the codeword with the minimum $b$-symbol weight. We will discuss this in detail in Section V of this paper.

Besides the weight hierarchy of cyclic codes, the Hamming weight structure of cyclic codes is also a hot topic in coding theory.
The Hamming weight structure of irreducible cyclic codes has been a research topic since the first works of McEliece and others \cite{LN,McE,D} due to their connection to Gaussian sums and $L$-functions, and its intrinsic complexity. As we all know, it is very difficult to determine the Hamming weight distribution of irreducible cyclic codes. Predictably, determining the $b$-symbol weight distribution of an irreducible cyclic code is even more difficult. To the best of our knowledge, there are only a few papers dealing with the $b$-symbol weight distribution of some cyclic codes:
\begin{itemize}
  \item Sun {\it et al.} \cite{SZW} considered the symbol-pair distance distribution of a class of repeated-root cyclic codes;
  \item Ma and Luo \cite{ML} considered the symbol-pair weight distribution of MDS codes and Simplex codes;
  \item Shi {\it et al.} \cite{SOS} gave some bounds on the $b$-symbol minimum distance of cyclic codes by a geometric approach;
  \item Zhu {\it et al.} considered the complete $b$-symbol weight distribution of a class of irreducible cyclic codes \cite{ZHW} and the $b$-symbol weight hierarchy of a class of reducible cyclic codes called Kasami codes \cite{ZHW1}.
\end{itemize}
This paper is a further study of the paper \cite{ZHW}. We give a formula for computing the $b$-symbol weight of a codeword of an irreducible cyclic code by using the Gaussian period and a new invariant $\#U(b,j,N_1)$. The definitions of the Gaussian period and $\#U(b,j,N_1)$ will be defined in Section II and Section III, respectively. The formula is a generalization of the formula for computing the Hamming weight of a codeword of an irreducible cyclic code given in \cite{DY2013}. We consider the $b$-symbol weight hierarchy of some irreducible cyclic codes. In particular, the two types of weight hierarchies of the same irreducible cyclic code are equal under some restrictions. Some optimal shortened codes are obtained by shortening some special coordinates, where these special coordinates are related to the codeword with the minimum $b$-symbol distance.

The paper is organized as follows. In Section II, we introduce various notations, definitions, and basic facts. Then, in Section III, we present a general formula for the $b$-symbol weight distribution of irreducible cyclic codes and some specific cases. In Section IV, we compute the $b$-symbol weight hierarchy of some classes of irreducible cyclic codes and compare these results with the known results on the generalized hierarchy of irreducible cyclic codes. In Section V, we present an application of the $b$-symbol weight hierarchy of cyclic codes in the shortening technique and construct some new shortened codes with nice parameters. Section VI concludes this paper.
\section{Preliminaries}
Throughout this paper we assume and fix the following:
\begin{itemize}
  \item Let $q=p^s$, $Q=q^m$, where $p$ is a prime number, $s,m$ are positive integers.
  \item Let $n$ denote the length of the code, where $n|Q-1$ and $\gcd(n,q)=1$. Let $k_0$ be the multiplicative order of $q$ modulo $n$ and $k_0| m$.
  \item Let $N=\frac{Q-1}{n}$, $\alpha$ be a primitive element of $\F_Q$ and $\theta=\alpha^{N}$.
  \item Let $T_{Q/q}$ denote the trace function from $\F_Q$ to $\F_q$.
  \item Let $supp(\x)$ denote the support of the vector $\x$.
\end{itemize}
\subsection{The $b$-symbol metric}
Let $b$ be a positive integer with $1\leq b\leq n.$ For any $\x\in\F_q^n$, the Hamming weight $w_H(\x)$ is defined as the number of nonzero coordinates in $\x$. Let $\pi_b(\x)$ denote the vector
$$\pi_b(\x)=((x_0,\ldots,x_{b-1}),(x_{1},\ldots,x_{b}),
\cdots,(x_{n-1},\ldots,x_{b+n-2}))\in \left(\F_q^b\right)^n,$$
where the indices are taken modulo $n$.
 The $b$-symbol weight of $\x$ is defined as
$$w_b(\x)=w_H(\pi_b(\x)).$$
\begin{example}
Let $\x=(0,0,a,0,0,0,b,0,0,0,0,c,0,a)\in\F_q^{14},$ where $a,b,c \in\F_q^*.$ Then the $b$-symbol weight of $\x$ are the following.
\begin{itemize}
  \item [{\rm(i)}] $w_1(\x)=w_H(\x)=4;$
  \item [{\rm(ii)}] $w_2(\x)=w_H(\pi_2(\x))=w_H\big((0,0),(0,a),(a,0),(0,0),(0,0),(0,b),
      (b,0),(0,0),(0,0),\\(0,0),(0,c),(c,0),(0,a),(a,0)\big)=8;$
  \item [{\rm(iii)}]
  $w_3(\x)=w_H(\pi_3(\x))=w_H\big((0,0,a),(0,a,0),(a,0,0),(0,0,0),(0,0,b),
      (0,b,0),\\(b,0,0),(0,0,0),(0,0,0),(0,0,c),(0,c,0),
      (c,0,a),(0,a,0),(a,0,0)\big)=11$;
  \item [{\rm(iv)}]
  $w_4(\x)=w_H(\pi_4(\x))=w_H\big((0,0,a,0),(0,a,0,0),(a,0,0,0),(0,0,0,b),(0,0,b,0),
      (0,b,\\0,0),(b,0,0,0),(0,0,0,0),(0,0,0,c),(0,0,c,0),(0,c,0,a),
      (c,0,a,0),(0,a,0,0),(a,0,\\0,a)\big)=13$;
  \item [{\rm(iv)}] $w_b(\x)=14$ if $b\geq 5$.   
\end{itemize}
\end{example}
For any $\x,\mathbf{y}\in\F_q^n$, the $b$-symbol distance between $\x$ and $\mathbf{y}$ is defined as
$$d_b(\x,\mathbf{y})=w_b(\x-\mathbf{y}).$$
When $b=1$, $w_1(\x)=w_H(\x)$ and $d_1(\x,\mathbf{y})=d_H(\x,\mathbf{y}).$ For convenience, we adopt $w_1(\x)$ and $d_1(\x,\mathbf{y})$ to represent the Hamming weight of $\x$ and the Hamming distance between $\x$ and $\mathbf{y}$, respectively. Let $E$ be a subset of $\F_q^n$. The minimum $b$-symbol distance of $d_b(E)$ is defined as
$$d_b(E)=\min\{d_b(\x,\mathbf{y})|\x, \mathbf{y}\in E {\hbox{~and~} } \x\neq \mathbf{y}\}.$$

A linear $[n,K,d_b(\C)]$ code $\C$ over $\F_q$ is a $K$-dimensional subspace of $\F_q^n$ with minimum $b$-symbol distance $d_b(\C)$. Let $A_i^{b}$ denote the number of codewords with $b$-symbol weight $i$ in a code of length $n$. The $b$-symbol weight enumerator of $\C$ is defined by
$$1+A_{1}^{b}T+A_{1}^{b}T^2+\cdots+A_{n}^{b}T^n.$$
In fact, the $b$-symbol weight of a nonzero vector will never less than $b$ by the definition of $b$-symbol metric. Therefore, the $b$-symbol weight enumerator of $\C$ is better to write as
$$1+A_b^{b}T^b+\cdots+A_{n}^bT^n.$$
\subsection{Cyclic codes}
Let $\tau(x_0,x_1,\ldots,x_{n-1})$ denote the vector $(x_{n-1},x_0, \ldots,x_{n-2})$ obtained from $(x_0,x_1,\ldots,x_{n-1})$ by the cyclic shift of the coordinates $i\mapsto i+1 {~\rm mod~} n$.
A linear $[n,k]$ code $\C$ over $\F_q$ is called cyclic if $\ccc\in \C$ implies $\tau(\ccc)\in \C.$
 Let $\gcd(n,q)=1.$
The set
\begin{equation}\label{eq1}
  \C(Q,N)=\left\{\ccc(\beta)=(T_{Q/q}(\beta),T_{Q/q}(\beta\theta)),\ldots,
T_{Q/q}(\beta\theta^{n-1})|\beta\in\F_Q\right\}
\end{equation}
is called an irreducible cyclic code over $\F_q$ with parameters $[n,k_0]$.
It is worth mentioning that the celebrated Golay code is an irreducible cyclic code and was used on the Mariner Jupiter-Saturn Mission.
An irreducible cyclic code is said to be semi-primitive if $n=\frac{Q-1}{N}$ where $N>2$ divides $q^j+1$ for some $j\geq1.$

\subsection{Group character, Gaussian sum, Gaussian periods}
An additive character of $\F_q$ is a nonzero function $\chi$ from $\F_q$ to the set of complex numbers such that $\chi(x+y)=\chi(x)\chi(y)$ for any $(x,y)\in\F_q^2.$ For each $b\in\F_q$, the function
$$\chi_b(c)=e^{2\pi\sqrt{-1}T_{q/p}(bc)/p}, ~~{\hbox{for all $c\in\F_q$}}$$
defines an additive character of $\F_q.$ $\chi_1$ is called the canonical additive character of $\F_q$.

A multiplicative character of $\F_q$ is a nonzero $\psi$ from $\F_q^*$ to the set of complex numbers such that $\psi(xy)=\psi(x)\psi(y)$ for all pairs $(x,y)\in\F_q^*\times\F_q^*$. Let $g$ be a fixed primitive element of $\F_q$. For each $j\in\{1,2,\ldots,q-1\}$, the function $\psi_j$ with
$$\psi_j(g^k)=e^{2\pi \sqrt{-1}jk/(q-1)}, \hbox{~~for $k\in\{0,1,2,\ldots,q-1\}$}$$
defines a multiplicative character with order $\frac{q-1}{\gcd(q-1,j)}$ of $\F_q$.

Let $\psi$ be a multiplicative character with order $k$ where $k|(q-1)$ and $\chi$ an additive character of $\F_q$. Then the Gaussian sum $G(\psi,\chi)$ of order $k$ is defined by
$$G(\psi,\chi)=\sum_{c\in \F_q^*}\psi(c)\chi(c).$$
For convenience, let $G(\psi)$ denote $G(\psi,\chi_1)$ in the sequel.

Let $C_i^{(k,Q)}=\alpha^i\langle\alpha^k\rangle$ for $i\in\{0,1,2,\ldots,k-1\}$, where $\langle\alpha^k\rangle$ denotes the subgroup of $\F_Q^*$ generated by $\alpha^k$. The cosets $C_i^{(k,Q)}$ are called the cyclotomic classes of order $k$ in $\F_Q$.

The Guassian periods are defined by
$$\eta_i^{(k,Q)}=\sum_{x\in C_{i}^{(k,Q)}}\chi_1(x), ~~~i\in\{0,1,\ldots,k-1\}.$$
By the discrete Fourier transform, Gaussian periods and Gaussian sum have the following relationship,
$$\eta_i^{(k,Q)}=\frac{-1+\sum_{j=1}^{k-1}\xi_k^{-ij}G(\psi^j)}{k},$$
where $\xi_k=e^{2\pi\sqrt{-1}}/k$ and $\psi$ is a primitive multiplicative character of order $k$ over $\F_Q^*.$

\begin{lemma}\label{lemeta}
Let symbols be the same as before. Then we have
\begin{itemize}
  \item [{\rm (1)}]\cite{Stor} $\sum_{i=0}^{k-1}\eta_i^{(k,Q)}=-1.$
  \item [{\rm (2)}]\cite{Stor} $\sum_{i=0}^{k-1}\left(\eta_i^{(k,Q)}\right)^2=Q\theta_j-\frac{Q-1}{k}$ for all $j=\{0,1,\ldots,k-1\}$, where
\begin{equation*}
  \theta_j=\left\{
             \begin{array}{ll}
               1 & \hbox{if $\frac{Q-1}{k}$ is even and $j=0$} \\
               1, & \hbox{if $\frac{Q-1}{k}$ is odd and $j=\frac{k}{2}$}\\
               0, & \hbox{otherwise,}
             \end{array}
           \right.
\end{equation*}
and equivalently $\theta_j=1$ if and only if $-1\in C_j^{(k,Q)}.$
  \item [{\rm (3)}]$\eta_0^{(k,Q)}, \eta_1^{(k,Q)}, \ldots, \eta_{k-1}^{(k,Q)}$ can not be all the same if and only if $k\geq 2.$
  \item [{\rm (4)}]Let $f(x)=\sum_{i=0}^{k-1}\eta^{(k,Q)}_ix^i$ and $\omega$ denote a primitive $k$-th root of unity.
   The circulant matrix
  \begin{equation}\label{circulant}
    A=\left(
       \begin{array}{cccc}
         \eta_0^{(k,Q)} & \eta_1^{(k,Q)} & \cdots & \eta_{k-1}^{(k,Q)} \\
         \eta_{k-1}^{(k,Q)} & \eta_0^{(k,Q)} & \cdots & \eta_{k-2}^{(k,Q)} \\
         \vdots & \vdots & \ddots & \vdots \\
         \eta_1^{(k,Q)} & \eta_2^{(k,Q)} & \cdots & \eta_{0}^{(k,Q)}
       \end{array}
     \right)_{k\times k}
  \end{equation}
  is invertible if and only if $f(\omega^i)\neq0$ for all $i\in \{0,1,\ldots,k-1\}.$
\end{itemize}
\end{lemma}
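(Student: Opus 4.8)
Parts (1) and (2) are quoted from \cite{Stor}, so I take them as given and concentrate on establishing (3) and (4).

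For (3), I would argue by a dichotomy on $k$. When $k=1$ there is a single Gaussian period $\eta_0^{(1,Q)}=\sum_{x\in\F_Q^*}\chi_1(x)=-1$, so the list is trivially constant; this settles the direction ``if the periods cannot all coincide then $k\geq 2$'' by contraposition. For the substantive direction, suppose $k\geq 2$ and, for contradiction, that $\eta_0^{(k,Q)}=\cdots=\eta_{k-1}^{(k,Q)}=c$. Part (1) forces $kc=-1$, hence $c=-1/k$, and then part (2) gives $\sum_{i=0}^{k-1}\big(\eta_i^{(k,Q)}\big)^2=kc^2=1/k$. Equating this with $Q\theta_j-\frac{Q-1}{k}$ and solving yields $Q\theta_j=Q/k$, i.e.\ $\theta_j=1/k$. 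Since $\theta_j\in\{0,1\}$ while $1/k\in(0,1)$ for $k\geq 2$, this is impossible, so the periods cannot all be equal. I note that this route avoids splitting into the two cases for $\theta_j$ and uses only the already-cited identities (1) and (2); alternatively one could invoke $|G(\psi^j)|=\sqrt{Q}\neq 0$ for the nontrivial characters $\psi^j$ ($1\leq j\leq k-1$) in the discrete Fourier expansion $\eta_i^{(k,Q)}=\big(-1+\sum_{j=1}^{k-1}\xi_k^{-ij}G(\psi^j)\big)/k$ to reach the same conclusion.

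For (4), I would appeal to the classical spectral description of circulant matrices. The matrix $A$ has entries $A_{r,c}=\eta^{(k,Q)}_{(c-r)\bmod k}$, so the vectors $v_i=(1,\omega^i,\ldots,\omega^{(k-1)i})^{\T}$ for $i=0,\ldots,k-1$ form a complete eigenbasis: a direct reindexing of $\sum_c A_{r,c}\omega^{ci}$ shows $Av_i=f(\omega^i)\,v_i$, where $f(x)=\sum_{j=0}^{k-1}\eta^{(k,Q)}_j x^j$. Equivalently, $A$ is diagonalized by the Fourier matrix $(\omega^{ij})$, whence $\det A=\prod_{i=0}^{k-1}f(\omega^i)$. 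Therefore $A$ is invertible exactly when this product is nonzero, i.e.\ when $f(\omega^i)\neq 0$ for every $i\in\{0,1,\ldots,k-1\}$, which is the claim.

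I expect the only delicate points to be bookkeeping rather than genuine difficulty. In (4) one must match the stated circulant orientation (first row $(\eta_0^{(k,Q)},\ldots,\eta_{k-1}^{(k,Q)})$, each subsequent row a right cyclic shift) to the eigenvalue formula, tracking the index shift $c\mapsto(c-r)\bmod k$ carefully so that the eigenvalue emerges as $f(\omega^i)$ rather than $f(\omega^{-i})$; since the two produce the same multiset of values, the invertibility criterion is unaffected either way. In (3) the one thing to verify is that the single-period case $k=1$ is genuinely covered by the ``if and only if,'' which the computation $\eta_0^{(1,Q)}=-1$ confirms.
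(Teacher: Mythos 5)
Your proposal is correct and follows essentially the same route as the paper: part (3) by combining identities (1) and (2) to contradict the all-equal assumption (you avoid the case split on $\theta_j$ by deriving $\theta_j=1/k\notin\{0,1\}$, while the paper fixes the class containing $-1$, but the algebra is the same), and part (4) by the standard spectral theory of circulants, which is exactly what underlies the paper's representation $A=f(P)$ with $P$ the cyclic permutation matrix and the criterion that $f(x)$ be coprime to $x^k-1$. No gaps; your explicit computation $Av_i=f(\omega^i)v_i$ even fills in the small step the paper leaves implicit between coprimality with $x^k-1$ and the condition $f(\omega^i)\neq 0$.
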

\begin{proof}
The first two properties of Gaussian periods are from \cite{Stor}. Assume that
\begin{equation}\label{etaj}
  \eta_0^{(k,Q)}=\eta_1^{(k,Q)}=\cdots=\eta_{k-1}^{(k,Q)}=\lambda.
\end{equation} From assertions (1) and (2) of this Lemma, we have
\begin{equation*}
  ~\left\{
     \begin{array}{ll}
       k\lambda=-1, \\
       k\lambda^2=Q-\frac{Q-1}{k}.
     \end{array}
   \right.
\end{equation*}
Then the Eq. (\ref{etaj}) holds if and only if $k=1$.

Let $P$ denote the permutation matrix
\begin{equation*}
  P=\left(
    \begin{array}{ccccc}
      0 & 1 & 0 & \cdots & 0 \\
      0 & 0 & 1 & \cdots & 0 \\
      \vdots & \vdots & \vdots & \vdots & \vdots \\
      0 & 0 & 0 & \cdots & 1 \\
      1 & 0 & 0 & \cdots & 0 \\
    \end{array}
  \right).
\end{equation*}
Then $P^k=I_k$ and $A=\eta_0^{(k,Q)} I_k+\eta_1^{(k,Q)}P+\cdots+\eta_{k-1}^{(k,Q)}P^{k-1},$ where $I_k$ denotes the unit matrix. Therefore, the circulant matrix $A$ is invertible if and only if $f(x)$ is coprime to $x^k-1.$
 This completes the proof.
\end{proof}
Since the values of the Gaussian sums in general are very hard to compute, the values of $\eta_i^{(k,Q)}$ are also hard to compute. Some known results on $\eta_i^{(k,Q)}$ are the following \cite{Ber,Mye}.
\begin{lemma}\label{lem1}
The known results on the Gaussian periods are the following.
\begin{enumerate}
  \item If $k=2$, then
\begin{equation*}
  \eta_0^{(2,Q)}=\left\{
                   \begin{array}{ll}
                     \frac{-1+(-1)^{sm-1}Q^{\frac{1}{2}}}{2}, & \hbox{if $p\equiv1~({\rm{mod}~4})$} \\
                     \frac{-1+(-1)^{sm-1}(\sqrt{-1})^{sm}
                     Q^{\frac{1}{2}}}{2}, & \hbox{if $p\equiv3~({\rm{mod}~4})$}
                   \end{array}
                 \right.
\end{equation*}
and $\eta_1^{(2,Q)}=-1-\eta_0^{(2,Q)}.$
  \item  If $k=3$ and $p\equiv2({\rm{mod}~3})$, then

\begin{equation}\label{3Q2}
  \eta_0^{(3,Q)}=\frac{-1-(\sqrt{-1})^{sm}2Q^{\frac{1}{2}}}{3},~
  \eta_1^{(3,Q)}=\eta_2^{(3,Q)}=\frac{-1+(\sqrt{-1})^{sm}Q^{\frac{1}{2}}}{3}.
\end{equation}

  \item If $k=4$ and $p\equiv3({\rm{mod}~4})$, then
          \begin{equation}\label{4Q2}
  \eta_0^{(4,Q)}=\frac{-1-(\sqrt{-1})^{sm}3Q^{\frac{1}{2}}}{4},
  \eta_1^{(4,Q)}=\eta_2^{(4,Q)}=\eta_3^{(4,Q)}=
  \frac{-1+(\sqrt{-1})^{sm}Q^{\frac{1}{2}}}{4}.
          \end{equation}
  \item $($Semi-primitive case$)$ If $k>2$ and there exists a positive integer $j$ such that $p^j\equiv-1({{\rm{mod}}~k})$, and the $j$ is the least such. Let $Q=p^{2j\gamma}$ for some integer $\gamma$.
      \begin{itemize}
        \item When $\gamma, p$ and $\frac{p^j+1}{k}$ are all odd, then
\begin{equation}\label{eta7}
  \eta_{\frac{k}{2}}^{(k,Q)}=\frac{(k-1)Q^{\frac{1}{2}}-1}{k},~
  \eta_{i}^{(k,Q)}=-\frac{Q^{\frac{1}{2}}+1}{k} \hbox{~~for $i\neq \frac{k}{2}.$}
\end{equation}
        \item In all other cases,
        \begin{equation}\label{eta8}
  \eta_{0}^{(k,Q)}=\frac{(-1)^{\gamma+1}(k-1)Q^{\frac{1}{2}}-1}{k},
  \eta_{i}^{(k,Q)}=\frac{(-1)^{\gamma}Q^{\frac{1}{2}}-1}{k} \hbox{~~for $i\neq 0.$}
\end{equation}
      \end{itemize}
\end{enumerate}
\end{lemma}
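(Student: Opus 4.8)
The plan is to reduce every item to an evaluation of Gaussian sums and then feed those values into the Fourier-type identity $\eta_i^{(k,Q)}=\big(-1+\sum_{j=1}^{k-1}\xi_k^{-ij}G(\psi^j)\big)/k$ recorded above, where $\psi$ is a fixed multiplicative character of order $k$. Once the sums $G(\psi^j)$ are pinned down, each displayed formula is a matter of substitution, because the quantity $\sum_{j=1}^{k-1}\xi_k^{-ij}$ equals $k-1$ when $i=0$ and $-1$ otherwise, so any Gaussian sums that are constant (or constant up to a clean sign twist) across $j$ force the geometric sum to collapse onto a single cyclotomic class. Each regime in the lemma is precisely one of the classically evaluable families, so I would cite the relevant Gaussian-sum values from \cite{Ber,Mye} and then run the substitution.

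For item (1), $k=2$ forces $\psi$ to be the quadratic character, so $G(\psi)$ is the quadratic Gaussian sum. First I would recall Gauss's sign determination over the prime field $\F_p$, giving $G(\psi_0)=\sqrt{p}$ or $\sqrt{-p}$ according to $p\bmod 4$, and then lift to $\F_Q$ via the Davenport--Hasse relation $G(\psi)=(-1)^{sm-1}G(\psi_0)^{sm}$ (the extension degree being $sm$). Splitting on $p\equiv 1$ versus $p\equiv 3 \pmod 4$ and substituting into $\eta_0^{(2,Q)}=\tfrac12\big(-1+G(\psi)\big)$ yields the two displayed expressions, while $\eta_1^{(2,Q)}=-1-\eta_0^{(2,Q)}$ is immediate from assertion (1) of Lemma~\ref{lemeta}.

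The heart of the argument is item (4). The hypothesis $p^j\equiv -1\pmod k$ with $j$ minimal and $Q=p^{2j\gamma}$ is exactly the condition under which every nontrivial Gaussian sum of order $k$ is \emph{pure}, i.e. a rational multiple of $\sqrt{Q}$; I would quote the semi-primitive evaluation of \cite{Ber,Mye}. In the generic subcase one has $G(\psi^a)=(-1)^{\gamma-1}\sqrt{Q}$ for every $a$, so the Fourier identity isolates the class $i=0$ and produces (\ref{eta8}). In the exceptional subcase, where $\gamma,p$ and $(p^j+1)/k$ are all odd (so $\gamma$ is odd and $(-1)^{\gamma-1}=1$), the sums carry an extra sign twist, $G(\psi^a)=(-1)^{a}\sqrt{Q}$; writing $(-1)^{a}=\xi_k^{(k/2)a}$ shifts the index of collapse from $0$ to $k/2$, and substituting gives $\eta_{k/2}^{(k,Q)}=\big((k-1)\sqrt{Q}-1\big)/k$ with all other periods equal to $-(\sqrt{Q}+1)/k$, which is (\ref{eta7}). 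Items (2) and (3) then need no separate work: $p\equiv 2\pmod 3$ with $k=3$ and $p\equiv 3\pmod 4$ with $k=4$ are the semi-primitive situation with $j=1$ and $\gamma=sm/2$, so after rewriting $(-1)^{\gamma}=(\sqrt{-1})^{sm}$ they are specializations of (\ref{eta8}).

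The main obstacle is not any single computation but the sign and parity bookkeeping in item (4): one must track the Davenport--Hasse lift correctly, decide exactly when the ``all odd'' subcase is in force, and verify that the collapsed geometric sum attaches the anomalous value to the correct cyclotomic class, namely the class containing $-1$, matched through assertion (2) of Lemma~\ref{lemeta}. Getting the index of the exceptional period right --- $k/2$ in (\ref{eta7}) versus $0$ in (\ref{eta8}) --- is where the care lies, and this is precisely the point where I would lean on the explicit statements in \cite{Ber,Mye} rather than re-deriving the pure Gaussian-sum sign from scratch.
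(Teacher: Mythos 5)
The paper itself offers no proof of this lemma: it is quoted as a collection of known results with a pointer to \cite{Ber,Mye}, so your reconstruction via the identity $\eta_i^{(k,Q)}=\bigl(-1+\sum_{j=1}^{k-1}\xi_k^{-ij}G(\psi^j)\bigr)/k$ is essentially the standard literature argument rather than a departure from anything in the paper. Your treatment of items (1) and (4) is sound: the Davenport--Hasse lift $G(\psi)=(-1)^{sm-1}G(\psi_0)^{sm}$ combined with Gauss's sign determination gives item (1), and the two pure-Gauss-sum sign patterns $G(\psi^a)=(-1)^{\gamma-1}\sqrt{Q}$ (generic subcase) and $G(\psi^a)=(-1)^{a}\sqrt{Q}$ (all-odd subcase) do collapse the geometric sum at $i=0$ and at $i=k/2$ respectively, exactly as you describe.

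The genuine gap is your final step, the claim that items (2) and (3) ``need no separate work'' because they are specializations of (\ref{eta8}). That reduction requires verifying that the all-odd subcase of item (4) never occurs under the hypotheses of items (2) and (3). For $k=3$ the verification succeeds but must be written down: for odd $p\equiv 2\pmod 3$ one has $6\mid p+1$, so $(p+1)/3$ is even, and for $p=2$ the prime itself is even, so (\ref{eta8}) indeed always applies. For $k=4$ the verification fails: if $p\equiv 3\pmod 8$ (so that $(p+1)/4$ is odd) and $\gamma=sm/2$ is odd, then $\gamma$, $p$ and $(p+1)/4$ are all odd, so (\ref{eta7}) applies and places the exceptional period at index $k/2=2$, not at $0$. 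The smallest instance is $Q=9$, $p=3$, $sm=2$: the fourth powers in $\F_9^*$ are $\{\pm 1\}$, with traces $2$ and $1$, so a direct computation gives $\eta_0^{(4,9)}=\eta_1^{(4,9)}=\eta_3^{(4,9)}=-1$ and $\eta_2^{(4,9)}=2$, whereas formula (\ref{4Q2}) predicts $\eta_0^{(4,9)}=2$. So your argument cannot deliver item (3) as stated---and in fact no argument can, since in this regime item (3) is false and inconsistent with item (4); it tacitly requires excluding the all-odd subcase (e.g.\ assuming $sm/2$ even or $(p+1)/4$ even). This parity bookkeeping is precisely the point you flagged as ``where the care lies,'' but the proposal then skips it at the one place where it actually changes the answer.
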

\section{The complete $b$-symbol weight enumerators}
Ding and Yang \cite{DY2013} proved that the determination of the Hamming weight distribution of an irreducible cyclic codes is equivalent to that of the Gaussian periods of order $\gcd\left(\frac{Q-1}{q-1},N\right)$. McEliece \cite{McE} gave another proof by Gaussian sums.
\begin{lemma}{\rm \cite{DY2013,McE}}\label{Dingyang2013}
Let $\ccc(\beta)$ be a codeword of the irreducible cyclic code $\C(Q,N)$ as in (\ref{eq1}).
If $0\neq\beta\in C_i^{\left(\gcd\left(\frac{Q-1}{q-1},N\right),Q\right)}$, then the Hamming weight of $\ccc(\beta)$ is
$$w_1(\ccc(\beta))=\frac{(q-1)(Q-1)}{qN}-\frac{(q-1)\gcd\left(
\frac{Q-1}{q-1},N\right)\eta_i^{\left(\gcd\left(
\frac{Q-1}{q-1},N\right),Q\right)}}{qN}.$$
\end{lemma}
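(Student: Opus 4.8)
The plan is to count the number of \emph{zero} coordinates of $\ccc(\beta)$ and subtract it from $n$, using orthogonality of additive characters together with the subgroup structure of $\F_Q^*$. For $0\neq\beta$ I would start from
$$w_1(\ccc(\beta))=n-\#\{\,0\le j\le n-1 : T_{Q/q}(\beta\theta^j)=0\,\}.$$
To detect the vanishing of the trace I would use the orthogonality relation for the additive characters of $\F_q$: writing $\chi$ for the canonical additive character of $\F_q$ and $\chi'$ for that of $\F_Q$, one has $\frac{1}{q}\sum_{a\in\F_q}\chi(a\,T_{Q/q}(x))=1$ when $T_{Q/q}(x)=0$ and $0$ otherwise. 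Since $T_{Q/q}$ is $\F_q$-linear and $\chi(T_{Q/q}(y))=\chi'(y)$ by transitivity of the trace, this rewrites the count of zero coordinates as
$$\frac{1}{q}\sum_{j=0}^{n-1}\sum_{a\in\F_q}\chi'(a\beta\theta^j)
=\frac{n}{q}+\frac{1}{q}\sum_{a\in\F_q^*}\sum_{j=0}^{n-1}\chi'(a\beta\theta^j),$$
where I have separated the $a=0$ term, which contributes $n$.

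The heart of the argument is the inner double sum, and I would handle it by passing from the index pair $(a,j)$ to the element $a\theta^j\in\F_Q^*$. Both $\F_q^*=\langle\alpha^{(Q-1)/(q-1)}\rangle$ and $\langle\theta\rangle=\langle\alpha^N\rangle$ are subgroups of the cyclic group $\F_Q^*$, so their product is the subgroup $\langle\alpha^d\rangle$ with $d=\gcd\!\big(\tfrac{Q-1}{q-1},N\big)$, and standard counting in an abelian group shows that every element of $\langle\alpha^d\rangle$ is represented as $a\theta^j$ exactly $|\F_q^*\cap\langle\theta\rangle|$ times. A short $\mathrm{lcm}$/$\gcd$ computation gives $|\F_q^*\cap\langle\theta\rangle|=\frac{(q-1)d}{N}$. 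Hence the double sum equals $\frac{(q-1)d}{N}\sum_{g\in\langle\alpha^d\rangle}\chi'(\beta g)$, and since $\beta\in C_i^{(d,Q)}=\alpha^i\langle\alpha^d\rangle$, the set $\{\beta g : g\in\langle\alpha^d\rangle\}$ is precisely the cyclotomic class $C_i^{(d,Q)}$; by the very definition of the Gaussian period, the remaining sum is $\eta_i^{(d,Q)}$.

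Assembling the pieces, the number of zero coordinates is $\frac{n}{q}+\frac{(q-1)d}{qN}\eta_i^{(d,Q)}$, and substituting $n=\frac{Q-1}{N}$ into $w_1(\ccc(\beta))=n-\big(\cdots\big)$ yields the claimed formula. I expect the main obstacle to be the combinatorial bookkeeping in the middle step---correctly identifying the product subgroup $\langle\alpha^d\rangle$ and, above all, pinning down the covering multiplicity $\frac{(q-1)d}{N}$ through the intersection $\F_q^*\cap\langle\theta\rangle$; once that is in place, the surrounding orthogonality and trace manipulations are routine.
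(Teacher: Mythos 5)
Your proof is correct, and it is essentially the standard argument: the paper itself states this lemma without proof (citing Ding--Yang and McEliece), and your derivation---counting zero coordinates via orthogonality of additive characters, identifying the product $\F_q^*\langle\theta\rangle=\langle\alpha^{N_1}\rangle$ with covering multiplicity $\frac{(q-1)N_1}{N}$, and recognizing the resulting sum over $\beta\langle\alpha^{N_1}\rangle=C_i^{(N_1,Q)}$ as the Gaussian period $\eta_i^{(N_1,Q)}$---is precisely the proof given in those cited sources. All the key computations (the transitivity $\chi(T_{Q/q}(y))=\chi'(y)$, the intersection order $|\F_q^*\cap\langle\theta\rangle|=\frac{(q-1)N_1}{N}$, and the final assembly with $n=\frac{Q-1}{N}$) check out.
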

Very recently, Shi {\it et al.} \cite{BUG} studied the relationship between the $b$-th generalized Hamming weight metric and $b$-symbol weight metric. A very interesting expression on the $b$-symbol weight of a vector $\ccc$ is given in that paper, and we present it in the following.
It is very important for giving the expression of the $b$-symbol weight of a codeword in $\C(Q,N)$.
\begin{lemma}{\rm\cite{BUG}}\label{lem3}
Let $\mathbf{c}\in \F_q^n$ and denote by $V_b(\mathbf{c})$ the codewords generated by all linear combinations of $\ccc$ and its first $b-1$ cyclic shifts. Then
$$w_b(\mathbf{c})=\frac{1}{q^{b-1}(q-1)}\sum_{\mathbf{c}^{\prime}\in V_b(\mathbf{c})}w_1(\mathbf{c}^{\prime}).$$
\end{lemma}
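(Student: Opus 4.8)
The plan is to reduce the $b$-symbol weight, which counts nonzero length-$b$ \emph{windows}, to a sum of ordinary Hamming weights by a double-counting argument over the coefficients of the linear combinations. I would parametrize the combinations by coefficient vectors: for $\mathbf{a}=(a_0,\ldots,a_{b-1})\in\F_q^b$ set $\mathbf{c}_{\mathbf{a}}=\sum_{t=0}^{b-1}a_t\,\tau^t(\mathbf{c})$, where $\tau$ is the cyclic shift. Since the definition of $\tau$ gives $(\tau^t(\mathbf{c}))_j=c_{j-t}$ (indices mod $n$), the $j$-th coordinate of $\mathbf{c}_{\mathbf{a}}$ is the value at $\mathbf{a}$ of the linear functional $L_j(\mathbf{a})=\sum_{t=0}^{b-1}a_t c_{j-t}$ on $\F_q^b$, whose coefficient vector is exactly the window $(c_j,c_{j-1},\ldots,c_{j-b+1})$.

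First I would interchange the order of summation and evaluate, for each fixed coordinate $j$, how many $\mathbf{a}\in\F_q^b$ give $(\mathbf{c}_{\mathbf{a}})_j\neq 0$:
$$\sum_{\mathbf{a}\in\F_q^b} w_1(\mathbf{c}_{\mathbf{a}})=\sum_{j=0}^{n-1}\#\{\mathbf{a}\in\F_q^b: L_j(\mathbf{a})\neq 0\}.$$
The key observation is that $L_j$ is the zero functional precisely when the window $(c_{j-b+1},\ldots,c_j)$ is entirely zero; otherwise $L_j$ is a nonzero linear functional on $\F_q^b$, so its kernel has size $q^{b-1}$ and $\#\{\mathbf{a}: L_j(\mathbf{a})\neq 0\}=q^{b}-q^{b-1}=q^{b-1}(q-1)$. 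Hence each coordinate $j$ contributes either $0$ or the constant $q^{b-1}(q-1)$, and the coordinates contributing the nonzero amount are exactly those whose associated $b$-window is nonzero.

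Next I would identify that count with the $b$-symbol weight itself. Re-indexing the windows by their starting position ($i=j-b+1$) shows that the number of $j$ for which $(c_{j-b+1},\ldots,c_j)\neq\mathbf{0}$ equals the number of nonzero overlapping $b$-tuples in $\pi_b(\mathbf{c})$, which is by definition $w_b(\mathbf{c})$. Combining this with the previous display yields $\sum_{\mathbf{a}\in\F_q^b}w_1(\mathbf{c}_{\mathbf{a}})=q^{b-1}(q-1)\,w_b(\mathbf{c})$, and dividing by $q^{b-1}(q-1)$ gives the claimed formula.

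The main obstacle is the bookkeeping in passing from the sum over coefficient vectors $\mathbf{a}\in\F_q^b$ to the sum over the codewords $\mathbf{c}'\in V_b(\mathbf{c})$ as written in the statement. These two sums coincide term-by-term exactly when the $b$ shifts $\mathbf{c},\tau(\mathbf{c}),\ldots,\tau^{b-1}(\mathbf{c})$ are linearly independent, so that $\mathbf{a}\mapsto\mathbf{c}_{\mathbf{a}}$ is a bijection onto $V_b(\mathbf{c})$. If the shifts degenerate (e.g. the all-ones vector, where every shift equals $\mathbf{c}$), the parametrization becomes many-to-one and the two conventions differ by the multiplicity $q^{b-\dim V_b(\mathbf{c})}$; one should then read the right-hand sum as taken over the coefficient vectors, or restrict to the independent case. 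Confirming that the codewords arising later indeed have independent shifts is the step that will need the most care.
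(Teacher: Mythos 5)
Your argument is correct, but note that this paper never proves the lemma itself: it is quoted verbatim from the reference \cite{BUG}, so there is no in-paper proof to compare against. Your double counting --- fix a coordinate $j$, observe that the $j$-th entry of $\sum_{t}a_t\tau^t(\mathbf{c})$ is a linear functional $L_j(\mathbf{a})$ whose coefficient vector is the $b$-window of $\mathbf{c}$ ending at position $j$, and that a nonzero functional on $\F_q^b$ takes nonzero values at exactly $q^b-q^{b-1}=q^{b-1}(q-1)$ points --- is the standard proof and is essentially the argument given in that reference.

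The caveat you raise at the end is genuine and is the one substantive point of divergence from the statement as printed. With $V_b(\mathbf{c})$ read as a \emph{set}, the identity is false in general: for the all-ones word over $\F_2$ with $b=2$ one has $V_2(\mathbf{c})=\{\mathbf{0},\mathbf{c}\}$, so the right-hand side is $n/2$ while $w_2(\mathbf{c})=n$. The sum must therefore be read over coefficient vectors $\mathbf{a}\in\F_q^b$, i.e. counting each codeword of $V_b(\mathbf{c})$ with multiplicity $q^{\,b-\dim V_b(\mathbf{c})}$. This is in fact how the present paper uses the lemma: in the proof of Theorem \ref{thm4} the sum is taken over $(u_1,\ldots,u_b)\in\F_q^b\setminus\{\mathbf{0}\}$, and the degenerate case does not arise there because, for a nonzero codeword of an irreducible cyclic code with $b\leq k_0-1$, the shifts $\mathbf{c},\tau(\mathbf{c}),\ldots,\tau^{b-1}(\mathbf{c})$ are linearly independent --- the rank fact from \cite[Lemma 16]{BUG} that the paper invokes again in Proposition \ref{prop1}. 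So your proof is complete once one fixes the coefficient-vector reading, and your flagged ``obstacle'' is exactly the hypothesis under which the literal statement is valid.
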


Combining the two lemmas above, we obtain the following result, which is the key observation of this paper.
For convenience, let $N_1=\gcd\left(\frac{Q-1}{q-1},N\right)$  in the sequel.

\begin{theorem}\label{thm4}
Let $\ccc(\beta)$ be a codeword of the irreducible cyclic code $\C(Q,N)$ as in (\ref{eq1}).
Let $\alpha$ be a primitive element of $\F_Q$, $\theta=\alpha^N$ and $(u_1,\ldots,u_b)\in\F_q^b\setminus\{\mathbf{0}\}$. Assume that $1\leq b\leq k_0-1$ and the nonzero elements $\sum_{i=1}^bu_i\theta^{i-1}$ belongs to the cyclotomic class $C^{(N_1,Q)}_{k_{(u_1,\ldots,u_b)}}$.
If $0\neq \beta\in C_i^{(N_1,Q)}$, then the $b$-symbol weight of $\ccc(\beta)$ is
\begin{equation*}
  w_b(\ccc(\beta))=\frac{(q^b-1)(Q-1)}{q^bN}-
\frac{N_1}{q^bN}\sum_{(u_1,\ldots,u_b)\in \F_q^b\setminus\{\mathbf{0}\}}
\eta^{(N_1,Q)}_{i+k_{(u_1,\ldots,u_b)}},
\end{equation*}
where the indices are taken modulo $N_1$.
\end{theorem}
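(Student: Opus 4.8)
The plan is to combine the two structural results recalled just above. Lemma~\ref{lem3} rewrites the $b$-symbol weight of $\ccc(\beta)$ as a normalised sum of ordinary Hamming weights ranging over the $\F_q$-space $V_b(\ccc(\beta))$ spanned by $\ccc(\beta)$ together with its first $b-1$ cyclic shifts, while Lemma~\ref{Dingyang2013} evaluates each such Hamming weight by a single Gaussian period once the cyclotomic class of the relevant field element is known. The whole argument is therefore a matter of (i) describing $V_b(\ccc(\beta))$ inside the trace model of $\C(Q,N)$, (ii) reading off the cyclotomic classes, and (iii) substituting and simplifying.

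First I would translate cyclic shifts into multiplication by powers of $\theta$. The $j$-th coordinate of $\ccc(\beta)$ is $T_{Q/q}(\beta\theta^{j})$, so by $\F_q$-linearity of $T_{Q/q}$ a cyclic shift merely multiplies the running argument by a power of $\theta$, and for any $(u_1,\ldots,u_b)\in\F_q^b$
\begin{equation*}
\sum_{i=1}^{b}u_i\,\tau^{\,i-1}\!\big(\ccc(\beta)\big)=\ccc\!\Big(\beta\sum_{i=1}^{b}u_i\theta^{i-1}\Big),
\end{equation*}
up to the orientation of the shift. Thus $V_b(\ccc(\beta))$ consists of the codewords $\ccc(\beta w)$ with $w$ in the $\F_q$-span of $1,\theta,\ldots,\theta^{b-1}$. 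I would then locate the argument in the cyclotomic partition. Because $N_1\mid N$, the element $\theta=\alpha^{N}$ lies in the principal class $C_0^{(N_1,Q)}=\langle\alpha^{N_1}\rangle$, so powers of $\theta$ never alter a class index; in particular the orientation of the shift is immaterial. Since $C_\ell^{(N_1,Q)}=\alpha^\ell\langle\alpha^{N_1}\rangle$, the classes add in the index, so $\beta\in C_i^{(N_1,Q)}$ and $\sum_i u_i\theta^{i-1}\in C_{k_{(u_1,\ldots,u_b)}}^{(N_1,Q)}$ give $\beta\sum_i u_i\theta^{i-1}\in C_{i+k_{(u_1,\ldots,u_b)}}^{(N_1,Q)}$, indices modulo $N_1$.

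With these facts I would feed everything into Lemma~\ref{lem3}. The coefficient vector $(u_1,\ldots,u_b)=\0$ yields the zero codeword and contributes nothing; for every other coefficient vector Lemma~\ref{Dingyang2013} gives
\begin{equation*}
w_1\!\Big(\ccc\big(\beta\sum_i u_i\theta^{i-1}\big)\Big)=\frac{(q-1)(Q-1)}{qN}-\frac{(q-1)N_1}{qN}\,\eta^{(N_1,Q)}_{i+k_{(u_1,\ldots,u_b)}}.
\end{equation*}
Summing the $q^b-1$ nonzero contributions and dividing by $q^{b-1}(q-1)$ as Lemma~\ref{lem3} prescribes, the constant part collapses to $\tfrac{(q^b-1)(Q-1)}{q^bN}$ and the period part to $\tfrac{N_1}{q^bN}\sum_{(u_1,\ldots,u_b)\neq\0}\eta^{(N_1,Q)}_{i+k_{(u_1,\ldots,u_b)}}$, which is precisely the asserted identity.

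The step I expect to be decisive is the tacit claim that summing over $\ccc'\in V_b(\ccc(\beta))$ is the same as summing over the $q^b-1$ nonzero coefficient vectors, i.e.\ that the map $(u_1,\ldots,u_b)\mapsto\ccc(\beta\sum_i u_i\theta^{i-1})$ is a bijection onto $V_b(\ccc(\beta))$; any failure of injectivity would introduce an unwanted multiplicity $q^{\,b-\dim V_b}$ and destroy the clean form of the formula. Injectivity is equivalent to the $\F_q$-linear independence of the $b$ shifts $\ccc(\beta),\ldots,\ccc(\beta\theta^{b-1})$, and here the hypothesis $1\le b\le k_0-1$ is used twice. First, the minimal polynomial of $\theta$ over $\F_q$ has degree $k_0$, so $1,\theta,\ldots,\theta^{b-1}$ are independent and $w=\sum_i u_i\theta^{i-1}\neq\0$ for every nonzero $(u_1,\ldots,u_b)$ — this is exactly the nonvanishing that makes Lemma~\ref{Dingyang2013} applicable. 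Second, to pass from a dependence of the shifts to a dependence of the field elements I would use transitivity of the trace, $T_{Q/q}=T_{q^{k_0}/q}\circ T_{Q/q^{k_0}}$, to show that $\ccc(\gamma)=\0$ forces $T_{Q/q^{k_0}}(\gamma)=0$; since $w\in\F_{q^{k_0}}$ is a scalar for $T_{Q/q^{k_0}}$ we have $T_{Q/q^{k_0}}(\beta w)=w\,T_{Q/q^{k_0}}(\beta)$, so a nontrivial dependence ($w\neq\0$) would force $\ccc(\beta)=\0$, contrary to hypothesis. Equivalently, as $\C(Q,N)$ is a minimal (irreducible) cyclic code, the cyclic shifts of the nonzero word $\ccc(\beta)$ already span the whole $k_0$-dimensional code, so any $b\le k_0-1$ consecutive shifts are independent. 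Granting this, the remaining simplification of the constants is routine, while the cases $b\ge k_0$ lie outside the stated range and are governed by Theorem~\ref{kgeqb}.
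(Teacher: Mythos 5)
Your proposal follows the paper's proof essentially step for step: you describe $V_b(\ccc(\beta))$ as the set of codewords $\ccc\bigl(\beta\sum_{i=1}^b u_i\theta^{i-1}\bigr)$, use that $\theta=\alpha^N\in C_0^{(N_1,Q)}$ so that cyclotomic class indices simply add, and then substitute Lemma \ref{Dingyang2013} into Lemma \ref{lem3}; the arithmetic of the constants also matches the paper exactly. So the core of the argument is correct and is the same route.

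The one place you go beyond the paper is the injectivity discussion, and there is a genuine (though repairable) slip in it. You close your injectivity argument by deriving $\ccc(\beta)=\0$ and declaring this ``contrary to hypothesis'' --- but the hypothesis is only $\beta\neq\0$, not $\ccc(\beta)\neq\0$. The paper's setup allows $k_0\mid m$ with $k_0<m$, in which case the map $\beta\mapsto\ccc(\beta)$ has a nonzero kernel (precisely the $\beta$ with $T_{Q/q^{k_0}}(\beta)=0$), so nonzero $\beta$ with $\ccc(\beta)=\0$ do exist and injectivity genuinely fails for them. The cleaner repair is to drop injectivity altogether: Lemma \ref{lem3} comes from a counting argument in which each nonzero length-$b$ window of $\ccc$ is detected by exactly $q^b-q^{b-1}$ coefficient tuples, so the identity it encodes is really
\begin{equation*}
w_b(\ccc)=\frac{1}{q^{b-1}(q-1)}\sum_{(u_1,\ldots,u_b)\in\F_q^b}w_1\Bigl(\sum_{i=1}^b u_i\tau^{i-1}(\ccc)\Bigr),
\end{equation*}
a sum over coefficient vectors counted with multiplicity. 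Read this way (which is also how the paper implicitly uses it, since its proof sums over $(u_1,\ldots,u_b)\in\F_q^b\setminus\{\0\}$ rather than over distinct codewords), no bijectivity is needed, and the theorem even survives the degenerate case: if $\ccc(\beta)=\0$ with $\beta\neq\0$, then every $\ccc\bigl(\beta\sum_i u_i\theta^{i-1}\bigr)$ is zero as well, Lemma \ref{Dingyang2013} forces $\eta^{(N_1,Q)}_{i+k_{(u_1,\ldots,u_b)}}=\frac{Q-1}{N_1}$ for every tuple, and both sides of the claimed formula vanish. In short: your argument is correct whenever $\ccc(\beta)\neq\0$ (in particular whenever $k_0=m$), but as written it does not cover all cases allowed by the hypotheses, whereas the coefficient-vector reading of Lemma \ref{lem3} does.
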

\begin{proof}
 For any nonzero element $\beta\in\F_Q$, according to the definition of $V_b(\ccc(\beta))$ in Lemma \ref{lem3}
and the property of the trace function, we have
\begin{eqnarray*}
  V_b(\ccc(\beta)) &=& \left\{\left.\sum_{j=1}^bu_j\tau(\ccc(\beta))\right|
  (u_1,\ldots,u_b)\in\F_q^b\right\} \\
  ~ &=& \left\{\left.\sum_{j=1}^bu_j\ccc(\beta\theta^{j-1})\right|
  (u_1,\ldots,u_b)\in\F_q^b\right\} \\
  ~ &=& \left\{\left.\ccc(\beta\sum_{j=1}^bu_j\theta^{j-1})\right|
  (u_1,\ldots,u_b)\in\F_q^b\setminus\{\mathbf{0}\}\right\}\cup
  \{\mathbf{0}\}.
\end{eqnarray*}
Since $\beta\in C_i^{(N_1,Q)}$ and $\sum_{j=1}^bu_j\theta^{j-1}\in C^{(N_1,Q)}_{k_{(u_1,\ldots,u_b)}}$,  we have $$\beta\sum_{j=1}^bu_j\theta^{j-1}\in C^{(N_1,Q)}_{i+k_{(u_1,\ldots,u_b)}},$$
where the indices are taken modulo $N_1$.
 Combining Lemma \ref{Dingyang2013} and Lemma \ref{lem3}, the $b$-symbol weight of $\ccc(\beta)$ equals
\begin{eqnarray*}
  w_b(\ccc(\beta)) &=& \frac{1}{q^{b-1}(q-1)}\left(\sum_{\ccc^{\prime}\in V_b(\ccc(\beta))\setminus\{\mathbf{0}\}}w_1(\ccc^{\prime})+w_1(\mathbf{0})\right) \\
  ~ &=& \frac{(q^b-1)(Q-1)}{q^bN}- \sum_{(u_1,\ldots,u_b)\in \F_q^b\setminus\{\mathbf{0}\}}
  \frac{N_1\eta^{(N_1,Q)}_{i+k_{(u_1,\ldots,u_b)}}}{q^bN},
\end{eqnarray*}
where the indices are taken modulo $N_1$.
This completes the proof.
\end{proof}
\begin{definition}\label{UB}
Define $U(b,i,N_1)$ be the set
\begin{equation*}\label{UBIN}
  U(b,i,N_1)=\left\{(u_1,\ldots,u_b)\left|
  \sum_{i=1}^{b}u_i\theta^{i-1}\in C_i^{(N_1,Q)}\right. {\hbox{~and~}} (u_1,\ldots,u_b)\in\F_q^b\setminus\{\mathbf{0}\}\right\},
\end{equation*}
where $\theta=\alpha^N.$
\end{definition}
\begin{lemma}\label{lem7}
We have the following properties on $U(b,i,N_1)$.

\begin{itemize}
  \item[ {\rm (1)}] $\F_q^b=\{\mathbf{0}\}\cup\bigcup_{i=0}^{N_1-1}U({b,i,N_1}).$
  \item[ {\rm (2)}] $\sum_{i=0}^{N_1-1}\#U(b,i,N_1)=q^b-1$, where $\#U(b,j,N_1)$ denotes the size of $U(b,j,N_1).$
  \item[ {\rm (3)}] $U(1,0,N_1)=\F_q^*$ and $U(1,j,N_1)=\emptyset$ for all $j\in\{1,2,\ldots,N_1-1\}$.
  \item[ {\rm (4)}]$\#U(k_0,i,N_1)=\frac{q^{k_0}-1}{N_1}$ for all $i\in\{0,1,\ldots,N_1-1\},$ where $k_0$ is the multiplicative order of $q$ modulo $n$.
      \item[ {\rm (5)}] $\#U(b,0,N_1)\geq b(q-1)$. Moreover, $\#U(b,i,N_1)\leq \frac{Q-1}{N_1}$  and
          $\#U\left(b,0,\frac{Q-1}{(q-1)b}\right)=b(q-1)$ if $b\leq k_0.$
  \item[ {\rm (6)}] $\F_q^*=U(1,0,N_1)\subset U(2,0,N_2)
  \subset\cdots\subset U(k_0,0,N_1)=C_0^{(N_1,q^{k_0})}$ and
  $\emptyset=U(1,j,N_1)\subset U(2,j,N_2)\subset\cdots\subset U(k_0,j,N_1)=C_j^{(N_1,q^{k_0})}$ for all $j\in \{1,2,\ldots,N_1-1\}.$
\end{itemize}
\end{lemma}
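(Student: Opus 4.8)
The common engine for all six parts is the evaluation map
$$\Phi_b:\F_q^b\to\F_{q^{k_0}},\qquad \Phi_b(u_1,\ldots,u_b)=\sum_{j=1}^b u_j\theta^{j-1},$$
which is $\F_q$-linear. Since $\theta=\alpha^N$ has multiplicative order $n$ and $k_0$ is the order of $q$ modulo $n$, the minimal polynomial of $\theta$ over $\F_q$ has degree $k_0$; hence $1,\theta,\ldots,\theta^{k_0-1}$ are linearly independent over $\F_q$ and $\F_q(\theta)=\F_{q^{k_0}}$. The plan is to read every assertion off the behaviour of $\Phi_b$: it is injective for $b\le k_0$, and for $b=k_0$ it is an $\F_q$-linear bijection onto $\F_{q^{k_0}}$.

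First I would dispose of (1), (2), (3). For $b\le k_0$ the injective map $\Phi_b$ sends every nonzero tuple to a nonzero element of $\F_{q^{k_0}}^*\subseteq\F_Q^*$; since the classes $C_0^{(N_1,Q)},\ldots,C_{N_1-1}^{(N_1,Q)}$ partition $\F_Q^*$, each nonzero tuple lies in exactly one $U(b,i,N_1)$, which is (1), and summing cardinalities over the partition gives (2). For (3) I note that $N_1=\gcd\bigl(\tfrac{Q-1}{q-1},N\bigr)$ divides $\tfrac{Q-1}{q-1}$, so $\F_q^*=\langle\alpha^{(Q-1)/(q-1)}\rangle\subseteq\langle\alpha^{N_1}\rangle=C_0^{(N_1,Q)}$; hence for $b=1$ every value $\Phi_1(u_1)=u_1$ sits in class $0$, forcing $U(1,0,N_1)=\F_q^*$ and $U(1,j,N_1)=\emptyset$ otherwise.

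Next comes the structural claim (6), which I expect to be the crux. The chain of inclusions is the easy half: the zero-padding embedding $(u_1,\ldots,u_b)\mapsto(u_1,\ldots,u_b,0)$ satisfies $\Phi_{b+1}(u_1,\ldots,u_b,0)=\Phi_b(u_1,\ldots,u_b)$, so it carries $U(b,j,N_1)$ into $U(b+1,j,N_1)$, the inclusions being strict because the image of $\Phi_b$ gains a dimension at each step. The real content is the top of the chain, $U(k_0,j,N_1)=C_j^{(N_1,q^{k_0})}$: here $\Phi_{k_0}$ is a bijection $\F_q^{k_0}\to\F_{q^{k_0}}$, so $U(k_0,j,N_1)$ is identified with $C_j^{(N_1,Q)}\cap\F_{q^{k_0}}^*$, and I must show this equals the order-$N_1$ cyclotomic class $C_j^{(N_1,q^{k_0})}$ computed inside $\F_{q^{k_0}}$. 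Writing $\gamma=\alpha^{(Q-1)/(q^{k_0}-1)}$ for the induced primitive element of $\F_{q^{k_0}}$, this reduces to checking that passing from the order-$N_1$ cyclotomy of $\F_Q$ to that of $\F_{q^{k_0}}$ preserves class indices, i.e.\ that $N_1\mid q^{k_0}-1$ and that the exponent rescaling $\ell\mapsto\ell\,\tfrac{Q-1}{q^{k_0}-1}$ acts as the identity on $\Z/N_1\Z$. Pinning down this compatibility between the two cyclotomies — equivalently, the exact divisibility relations among $n$, $q-1$, $N$ and $q^{k_0}-1$ — is the main obstacle; everything else is bookkeeping. Once (6) is secured, (4) is immediate, since $C_j^{(N_1,q^{k_0})}$ has $\tfrac{q^{k_0}-1}{N_1}$ elements for every $j$.

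Finally (5). For the lower bound I would exhibit $b(q-1)$ explicit members of $U(b,0,N_1)$, namely the single-entry tuples $(0,\ldots,0,c,0,\ldots,0)$ with $c\in\F_q^*$ in position $i$: their value $c\theta^{i-1}$ lies in $C_0^{(N_1,Q)}$ because $N_1$ divides both $\tfrac{Q-1}{q-1}$ (placing $c$ in class $0$) and $N$ (placing $\theta^{i-1}=\alpha^{N(i-1)}$ in class $0$), and class $0$ is a subgroup. The upper bound $\#U(b,i,N_1)\le\tfrac{Q-1}{N_1}=|C_i^{(N_1,Q)}|$ follows at once from the injectivity of $\Phi_b$. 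The equality $\#U\bigl(b,0,\tfrac{Q-1}{(q-1)b}\bigr)=b(q-1)$ is then obtained by squeezing: with modulus $\tfrac{Q-1}{(q-1)b}$ the class size is exactly $(q-1)b$, so the lower and upper bounds coincide and pin the count.
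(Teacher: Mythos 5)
Your overall strategy coincides with the paper's: read everything off the $\F_q$-linear evaluation map (injective for $b\le k_0$, bijective onto $\F_{q^{k_0}}$ at $b=k_0$), get (1)--(3) from the partition of $\F_Q^*$ into cyclotomic classes together with $N_1\mid\frac{Q-1}{q-1}$, and prove (5) exactly as the paper does (single-entry tuples for the lower bound, injectivity plus class size for the upper bound, squeezing for the equality). Those parts are correct, and if anything more careful than the paper's ``trivial''. The genuine gap is the step you yourself call ``the main obstacle'': you reduce (4) and the top of the chain in (6) to the identification $C_j^{(N_1,Q)}\cap\F_{q^{k_0}}^*=C_j^{(N_1,q^{k_0})}$, but you never prove it, so your argument stops exactly at its crux. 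Moreover, this is not routine bookkeeping: under the paper's standing hypotheses the required compatibility can fail. Take $q=2$, $Q=2^4$, $N=5$; then $n=3$, $k_0=\mathrm{ord}_3(2)=2$ divides $m=4$, and $N_1=\gcd(15,5)=5$, so $N_1\nmid q^{k_0}-1=3$. Here $\F_{q^{k_0}}^*=\langle\alpha^{(Q-1)/(q^{k_0}-1)}\rangle=\langle\alpha^{5}\rangle=C_0^{(5,Q)}$, so the three nonzero values $1$, $\theta$, $1+\theta$ all lie in class $0$, and $\#U(2,i,5)$ takes the values $3,0,0,0,0$ rather than the claimed $\frac{q^{k_0}-1}{N_1}=\frac{3}{5}$. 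A complete proof must therefore impose and use a hypothesis forcing $N_1\mid q^{k_0}-1$ and $\frac{Q-1}{q^{k_0}-1}\equiv1\pmod{N_1}$ (for instance $k_0=m$, when both are automatic). To be fair, the paper shares this hole --- it simply asserts $C_i^{(N_1,Q)}\cap\F_{q^{k_0}}^*=C_i^{(N_1,q^{k_0})}$ without justification and settles (6) by ``combining (3), (4) and (5)'' --- but a blind proof cannot lean on that.

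There is a second, smaller gap in (6): you justify strictness of the inclusions $U(b,j,N_1)\subset U(b+1,j,N_1)$ by saying the image of $\Phi_b$ ``gains a dimension at each step''. That only produces new field elements; it does not place any of them in the specific class $C_j^{(N_1,Q)}$. For $j=0$ the fix is the observation already used in (5), namely $\theta^{b}\in C_0^{(N_1,Q)}$ because $N_1\mid N$. For $j\neq0$, strictness at every step genuinely needs an argument --- in the example above all new elements fall into $C_0$ --- and neither your proposal nor the paper provides one.
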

\begin{proof}
The first two statements are trivial.
When $b=1$, $u_1$ has to be a nonzero elements of $\F_q$. Since $\F_q^*=\langle\alpha^{\frac{Q-1}{q-1}}\rangle$ and $N_1=\gcd\left(\frac{Q-1}{q-1},N\right)$, we obtain $u_1\in C_0^{(N_1,Q)}$, $k_{(u_1)}=0$, $U(1,0,N_1)=\F_q^*$ and $U(1,i,N_1)=\emptyset$ for $i\neq0$.

When $b=k_0,$ the set $\left\{\left.\sum_{i=1}^{k_0}u_i\theta^{i-1}\right|(u_1,\ldots,u_{k_0})
\in\F_q^{k_0}\setminus\{\mathbf{0}\}\right\}=\F_{q^{k_0}}^*$ since $k_0$ is the multiplicative order of $q$ modulo $n$. Then $\#U(k_0,i,N_1)$ equals the size of the following set $$\left\{\alpha\left|\alpha\in C_i^{\left(N_1,{Q}\right)}\cap\F_{q^{k_0}}^*\right.\right\}=
\left\{\alpha\left|\alpha\in C_i^{\left(N_1,{q^{k_0}}\right)}\right.\right\}.$$
Therefore, $\#U(k_0,i,N_1)=\frac{q^{k_0}-1}{N_1}$ for all $i\in\{0,1,\ldots,N_1-1\}.$

For any $i\in \{1,2,\ldots,b\}$, we have $u_i\theta^{i-1}\in C_0^{(N_1,Q)}$ if $u_i\neq 0$. Then
$$(0,\ldots,0,u_i,0,\ldots,0)\in U(b,0,N_1).$$
Therefore, $\#U(b,0,N_1)\geq b(q-1).$ If $b\leq k_0$, then $\#U(b,i,N_1)\leq \left|C_i^{(N_1,Q)}\right|=\frac{Q-1}{N_1}.$ Moreover, $$b(q-1)\leq\#U\left(b,0,\frac{Q-1}{(q-1)b}\right)\leq \frac{Q-1}
{\frac{Q-1}{(q-1)b}}=b(q-1).$$
Therefore, $\#U\left(b,0,\frac{Q-1}{(q-1)b}\right)=b(q-1).$

Combining the parts (3), (4) and (5) of this lemma, we obtain the last desired result.
\end{proof}
\begin{example}
The numerical examples in  Table 1 are computed by Magma. In these examples, we let $b\left|\frac{Q-1}{q-1}\right.$, $1\leq b\leq k_0$ and $N_1=\frac{Q-1}{q-1}$. The value of $\#U(b,0,N_1)$ computed by Magma is consistent with the part (5) of Lemma \ref{lem7}.

\begin{table}[H]
  \centering
  \caption{Numeral examples of Lemma \ref{lem7}}\label{A}

\begin{tabular}{c|c|c|c|c|c}
  \hline
  $Q$ & $q$ & $b$ & $N$ & $N_1$ & $\#U(b,0,N_1)$ \\
  \hline
  $2^4$ & $2$ & $3$ & $5$ & $5$ & $3$ \\
  $2^6$ & $2$ & $3$ & $21$ & $21$ & $3$ \\
  $2^8$ & $2$ & $5$ & $51$ & $51$ & $5$ \\
  $2^{10}$ & $2$ & $3$ & $341$ & $341$ & $3$ \\
  $4^{6}$ & $4$ & $3$ & $455$ & $455$ & $9$ \\
  $4^{6}$ & $4$ & $5$ & $273$ & $273$ & $15$ \\
  $4^{8}$ & $4$ & $5$ & $4369$ & $4369$ & $15$ \\
  $3^{4}$ & $3$ & $2$ & $20$ & $20$ & $4$ \\
  $3^{6}$ & $3$ & $2$ & $182$ & $182$ & $4$ \\
  $3^{8}$ & $3$ & $2$ & $1640$ & $1640$ & $4$ \\
  $3^{8}$ & $3$ & $4$ & $820$ & $820$ & $8$ \\
  $3^{8}$ & $3$ & $5$ & $656$ & $656$ & $10$ \\
  \hline
\end{tabular}
\end{table}
\end{example}
Therefore, the determination of the $b$-symbol weight distribution of an irreducible cyclic code is equivalent to the values of $\#U(b,i,N_1)$ and $\eta_i^{(N_1,Q)}$. The following result is a generalization of Lemma \ref{Dingyang2013}.

\begin{corollary}
Let $1\leq b\leq k_0$ and let $\ccc(\beta)$ be a codeword of the irreducible cyclic code $\C(Q,N)$ as in (\ref{eq1}).
Let $\alpha$ be a primitive element of $\F_Q$, $\theta=\alpha^N$, $N_1=\gcd\left(\frac{Q-1}{q-1},N\right)$ and $(u_1,\ldots,u_b)\in\F_q^b\setminus\{\mathbf{0}\}$.
If $0\neq \beta\in C_i^{(N_1,Q)}$, then the $b$-symbol weight of $\ccc(\beta)$ is
\begin{equation*}
  w_b(\ccc(\beta))=\frac{(q^b-1)(Q-1)}{q^bN}-
\frac{N_1}{q^bN}\sum_{i=0}^{N_1-1}\#U(b,i,N_1)\eta_i^{(N_1,Q)}.
\end{equation*}
\end{corollary}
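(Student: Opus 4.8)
The plan is to deduce the Corollary directly from Theorem \ref{thm4} by reorganizing the sum over $\F_q^b\setminus\{\mathbf{0}\}$ into a sum indexed by cyclotomic classes. In Theorem \ref{thm4}, each nonzero tuple $(u_1,\ldots,u_b)$ contributes a single term $\eta^{(N_1,Q)}_{i+k_{(u_1,\ldots,u_b)}}$, where the shift $k_{(u_1,\ldots,u_b)}$ records which cyclotomic class contains $\sum_{j=1}^b u_j\theta^{j-1}$. The key observation is that many tuples produce the same value of $k_{(u_1,\ldots,u_b)}$, and the bookkeeping of exactly how many tuples land in class $C_j^{(N_1,Q)}$ is precisely what the invariant $\#U(b,j,N_1)$ counts, by Definition \ref{UB}.

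Concretely, I would partition $\F_q^b\setminus\{\mathbf{0}\}$ according to the value $j=k_{(u_1,\ldots,u_b)}\in\{0,1,\ldots,N_1-1\}$. By Definition \ref{UB}, the set of tuples with $k_{(u_1,\ldots,u_b)}=j$ is exactly $U(b,j,N_1)$, and part (1) of Lemma \ref{lem7} guarantees that these sets partition $\F_q^b\setminus\{\mathbf{0}\}$ (disjointly, since a nonzero element of $\F_Q$ lies in exactly one cyclotomic class). Hence I can rewrite
\begin{equation*}
  \sum_{(u_1,\ldots,u_b)\in\F_q^b\setminus\{\mathbf{0}\}}
  \eta^{(N_1,Q)}_{i+k_{(u_1,\ldots,u_b)}}
  =\sum_{j=0}^{N_1-1}\#U(b,j,N_1)\,\eta^{(N_1,Q)}_{i+j}.
\end{equation*}

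The final step is to observe that the summation index on the right can be shifted. Since $\beta\in C_i^{(N_1,Q)}$ is fixed and the Gaussian periods $\eta^{(N_1,Q)}_{i+j}$ range over all of $\eta^{(N_1,Q)}_0,\ldots,\eta^{(N_1,Q)}_{N_1-1}$ as $j$ runs modulo $N_1$, re-indexing via $\ell=i+j\bmod N_1$ converts the weighted sum into $\sum_{\ell=0}^{N_1-1}\#U(b,\ell,N_1)\,\eta^{(N_1,Q)}_{\ell}$. Substituting this into the expression from Theorem \ref{thm4} yields the stated formula. I do not anticipate a serious obstacle here: the result is essentially a regrouping of a finite sum, and all the structural facts needed (the partition in Lemma \ref{lem7}(1), the definition of $\#U(b,j,N_1)$, and the modular indexing convention already adopted in Theorem \ref{thm4}) are in place. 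The only point requiring mild care is the index shift in the last step, where one must confirm that the dependence of the final sum on $i$ genuinely disappears because summing over a complete residue system modulo $N_1$ is invariant under the shift by $i$.
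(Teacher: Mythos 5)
Your first step is exactly the paper's proof. By Definition \ref{UB}, the fibers of the map $(u_1,\ldots,u_b)\mapsto k_{(u_1,\ldots,u_b)}$ are precisely the sets $U(b,j,N_1)$, so the sum in Theorem \ref{thm4} regroups as
\begin{equation*}
\sum_{(u_1,\ldots,u_b)\in\F_q^b\setminus\{\mathbf{0}\}}
\eta^{(N_1,Q)}_{i+k_{(u_1,\ldots,u_b)}}
=\sum_{j=0}^{N_1-1}\#U(b,j,N_1)\,\eta^{(N_1,Q)}_{i+j},
\end{equation*}
with indices taken modulo $N_1$. Had you stopped there, you would be done: this is what the corollary actually asserts. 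The printed statement reuses the letter $i$ both for the cyclotomic class of $\beta$ and as the summation index; the intended reading, which is how the paper itself invokes the corollary later (see the proof of Theorem \ref{thm8}), is $\sum_{k=0}^{N_1-1}\#U(b,k,N_1)\eta^{(N_1,Q)}_{i+k\,({\rm mod~}N_1)}$, i.e.\ the shift by the class of $\beta$ is retained.

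Your final ``re-indexing'' step is a genuine error. Substituting $\ell=i+j \bmod N_1$ yields $\sum_{\ell=0}^{N_1-1}\#U(b,\ell-i,N_1)\,\eta^{(N_1,Q)}_{\ell}$, not $\sum_{\ell=0}^{N_1-1}\#U(b,\ell,N_1)\,\eta^{(N_1,Q)}_{\ell}$: the shift does not disappear, it merely moves into the argument of $\#U$. Shift-invariance of a sum over a complete residue system applies to a single sequence, $\sum_{\ell}c_{\ell+i}=\sum_{\ell}c_{\ell}$, but here you have a correlation $\sum_{j}a_j b_{i+j}$ of two sequences, which depends on $i$ unless one of them is constant. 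Indeed, if the dependence on $i$ truly vanished, every nonzero codeword of every irreducible cyclic code would have the same $b$-symbol weight; this contradicts Theorem \ref{thm8}, which shows (under the invertibility hypothesis on $A$) that constancy holds precisely when all the $\#U(b,j,N_1)$ coincide, and it contradicts Theorems \ref{thm1515}, \ref{thm11} and \ref{thm1717}, where codewords from different cyclotomic classes have distinct weights $u_1\neq u_2$ in general. The correct proof is your regrouping alone; delete the last step and keep the formula with the shift $i$ (equivalently, with $\#U(b,\ell-i,N_1)$ if you insist on summing against $\eta^{(N_1,Q)}_{\ell}$).
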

\begin{proof}
The desired result follows from the definition of $U(b,i,N_1)$ and Theorem \ref{thm4}.
\end{proof}
The two keys to determine the $b$-symbol weight distribution of irreducible cyclic codes are $\#U(b,i,N_1)$ and $\eta_i^{(N_1,Q)}$.
By Lemma \ref{lem7}, $\#U(1,i,N_1)$ are determined for $i\in \{0,1,\ldots,N_1-1\}$.
However, there are very few known results on $\#U(b,i,N_1)$ for $b\geq 2$. Under some restrictions on $b, i, N_1$ and $k_0$, Lemma \ref{lem7} gives an upper bound for $\#U(b,i,N_1)$, a lower bound for $\#U(b,0,N_1)$. When $b\leq k_0$ and $N_1=\frac{Q-1}{(q-1)b}$, the value of $\#U(b,i,N_1)$ is determined.
\begin{open}
Determine the value of $\#U(b,i,N_1)$ for all $i\in \{0,1,\ldots,N_1-1\}.$ Or give some strong bounds for $\#U(b,i,N_1)$.
\end{open}
The following theorem gives a necessary and sufficient condition for an irreducible cyclic code to be a constant $b$-symbol weight code under some assumption.
\begin{theorem}\label{thm8}
Let $1\leq b\leq m-1.$
Assume that the matrix $A$ defined as in (\ref{circulant}) is invertible. Then
the irreducible cyclic code $\C(Q,N)$ is a constant $b$-symbol code with length $\frac{Q-1}{N}$ and dimension $m$ if and only if $\#U(b,i,N_1)=\#U(b,j,N_1)$ for all $i\neq j.$ Moreover, the constant $b$-symbol weight is $\frac{(q^b-1)Q}{q^bN}.$
\end{theorem}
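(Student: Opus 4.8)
The plan is to reduce the whole statement to a single linear-algebra fact about the circulant matrix $A$ of (\ref{circulant}) taken with $k=N_1$. By Theorem \ref{thm4} together with Definition \ref{UB} (grouping the tuples $(u_1,\ldots,u_b)$ according to the class $C^{(N_1,Q)}_{k_{(u_1,\ldots,u_b)}}$ of $\sum_i u_i\theta^{i-1}$), a codeword $\ccc(\beta)$ with $\beta\in C_i^{(N_1,Q)}$ satisfies
$$ w_b(\ccc(\beta))=\frac{(q^b-1)(Q-1)}{q^bN}-\frac{N_1}{q^bN}\,s_i,\qquad s_i:=\sum_{j=0}^{N_1-1}\#U(b,j,N_1)\,\eta^{(N_1,Q)}_{i+j}, $$
where indices are read modulo $N_1$. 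Thus the weight of $\ccc(\beta)$ depends only on the class index $i$. Since the code has dimension $m$, we have $k_0=m$, so $\beta\mapsto\ccc(\beta)$ is a bijection from $\F_Q$ onto $\C(Q,N)$; as the classes $C_0^{(N_1,Q)},\ldots,C_{N_1-1}^{(N_1,Q)}$ partition $\F_Q^*$ into nonempty sets, the set of $b$-symbol weights of nonzero codewords is governed exactly by $\{s_0,\ldots,s_{N_1-1}\}$. Hence $\C(Q,N)$ is a constant $b$-symbol weight code if and only if $s_0=s_1=\cdots=s_{N_1-1}$.

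For the easy implication I would assume all $c_j:=\#U(b,j,N_1)$ coincide. By Lemma \ref{lem7}(2) their common value is $\frac{q^b-1}{N_1}$, and by Lemma \ref{lemeta}(1) we have $\sum_{\ell=0}^{N_1-1}\eta_\ell^{(N_1,Q)}=-1$, so each $s_i=\frac{q^b-1}{N_1}\sum_\ell\eta_\ell^{(N_1,Q)}=-\frac{q^b-1}{N_1}$, independently of $i$. Substituting this back into the weight formula yields the announced constant weight $\frac{(q^b-1)Q}{q^bN}$. Note this direction uses neither invertibility of $A$ nor anything beyond the validity of the formula.

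The substance is the converse. I would encode the relation $s_i=\sum_j c_j\eta^{(N_1,Q)}_{i+j}$ as $\mathbf{s}=M\mathbf{c}$, where $M=\bigl(\eta^{(N_1,Q)}_{i+j}\bigr)_{0\le i,j\le N_1-1}$ and $\mathbf{c}=(c_0,\ldots,c_{N_1-1})^{\T}$. The crucial observation is that $M$ is a row-permutation of $A$: since $A_{ij}=\eta^{(N_1,Q)}_{j-i}$, applying the reversal permutation $i\mapsto -i\bmod N_1$ to the rows of $M$ produces precisely $A$, so $\det M=\pm\det A$ and the hypothesis "$A$ invertible" is exactly "$M$ invertible." Now suppose $\C(Q,N)$ has constant $b$-symbol weight, i.e. $\mathbf{s}=t\,\mathbf{1}$ for the all-ones vector $\mathbf{1}$ and a scalar $t$. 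Summing the $N_1$ scalar equations and using Lemma \ref{lemeta}(1) and Lemma \ref{lem7}(2) forces $N_1 t=-\sum_j c_j=-(q^b-1)$, hence $t=-\frac{q^b-1}{N_1}$. But the constant vector $\mathbf{c}_0=\frac{q^b-1}{N_1}\mathbf{1}$ already solves $M\mathbf{c}_0=t\,\mathbf{1}=\mathbf{s}$ by the computation of the easy direction; invertibility of $M$ then gives $\mathbf{c}=\mathbf{c}_0$, that is, $\#U(b,i,N_1)=\#U(b,j,N_1)$ for all $i\neq j$.

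The only genuinely delicate point is this last reduction: recognizing that the Hankel-type convolution matrix $M$ controlling the weights is a row-permutation of the circulant $A$, so that the abstract invertibility hypothesis on $A$ is precisely the nondegeneracy needed to pin down $\mathbf{c}$ uniquely from $\mathbf{s}$. Everything else is bookkeeping with the two identities $\sum_\ell\eta^{(N_1,Q)}_\ell=-1$ and $\sum_j\#U(b,j,N_1)=q^b-1$. I would also record explicitly that "dimension $m$" means $k_0=m$, which guarantees both the bijectivity of $\beta\mapsto\ccc(\beta)$ and that all $N_1$ values $s_i$ are genuinely attained as weights.
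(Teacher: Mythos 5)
Your proof is correct, and while it shares the paper's overall frame---express the $b$-symbol weights via Theorem \ref{thm4} as a linear system in the unknowns $\#U(b,j,N_1)$ and use invertibility to pin them down---your finish is genuinely different and cleaner. The paper's necessity argument writes the system as $A\mathbf{c}=\zeta_2\mathbf{1}$ and solves it by Cramer's rule, concluding from the equality $\det(A(0))=\cdots=\det(A(N_1-1))$, which it asserts without proof; you instead observe that the constant vector $\frac{q^b-1}{N_1}\mathbf{1}$ is already a solution of the system (by the same computation as your easy direction) and invoke uniqueness of solutions, which avoids Cramer's rule and the determinant claim altogether. You also make explicit a point the paper silently elides: the matrix actually governing the weights is the Hankel-type $M=\bigl(\eta^{(N_1,Q)}_{i+j}\bigr)$, which is a row permutation (via $i\mapsto -i \bmod N_1$) of the circulant $A=\bigl(\eta^{(N_1,Q)}_{j-i}\bigr)$, so that ``$A$ invertible'' is equivalent to ``$M$ invertible''; the paper writes the system directly with $A$. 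Finally, your value $t=-\frac{q^b-1}{N_1}$ for the common value of $\sum_j\#U(b,j,N_1)\eta^{(N_1,Q)}_{i+j}$ corrects a small slip in the paper, which states $\zeta_2=-q^b+1$ (that would require $N_1=1$); the slip is harmless there, since only the constancy of the right-hand side enters the paper's argument, but your bookkeeping is the correct one. Both proofs rest on the same two identities, $\sum_\ell\eta^{(N_1,Q)}_\ell=-1$ (Lemma \ref{lemeta}) and $\sum_j\#U(b,j,N_1)=q^b-1$ (Lemma \ref{lem7}), and your explicit remark that dimension $m$ forces $k_0=m$ (so every cyclotomic class index is realized by a nonzero codeword) supplies a hypothesis that both directions quietly rely on and the paper never records.
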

\begin{proof}
We prove the sufficient condition at first. Assume that $\#U(b,i,N_1)=\#U(b,j,N_1)$ for all $i\neq j.$ According to Theorem \ref{thm4} and the definition of $U(b,i,N_1)$, for any two nonzero elements $\beta_1,\beta_2\in\F_Q$,
\begin{itemize}
  \item if $\beta_1$ and $\beta_2$ belong to the same cyclotomic class, then $w_b(\ccc(\beta_1))=w_b(\ccc(\beta_2));$
  \item if $\beta_1\in C_i^{(N_1,Q)}$ and $\beta_2\in C_j^{(N_1,Q)}$ where $i\neq j$, then
\begin{eqnarray*}
  w_b(\ccc(\beta_1)) &=& \frac{(q^b-1)(Q-1)}{q^bN}-\frac{N_1}{q^bN}\sum_{k=0}^{N_1-1}
\#U(b,k,N_1)\eta_{i+k({\rm mod~N_1})}^{(N_1,Q)} \\
~&=&\frac{(q^b-1)(Q-1)}{q^bN}-\frac{N_1}{q^bN}\#U(b,0,N_1)\sum_{k=0}^{N_1-1}
\eta_{i+k({\rm mod~N_1})}^{(N_1,Q)}\\
~&=&\frac{(q^b-1)(Q-1)}{q^bN}-\frac{N_1\#U(b,0,N_1)}{q^bN}(-1)~~~
({\hbox{from Lemma \ref{lemeta}}})\\
~&=&\frac{(q^b-1)(Q-1)}{q^bN}+\frac{N_1\#U(b,0,N_1)}{q^bN}.
\end{eqnarray*}
Similarly, we can prove that
\begin{eqnarray*}
  w_b(\ccc(\beta_2)) &=& \frac{(q^b-1)(Q-1)}{q^bN}-\frac{N_1}{q^bN}\sum_{k=0}^{N_1-1}
\#U(b,k,N_1)\eta_{j+k({\rm mod~N_1})}^{(N_1,Q)} \\
~&=&\frac{(q^b-1)(Q-1)}{q^bN}+\frac{N_1\#U(b,0,N_1)}{q^bN}\\
~&=& w_b(\ccc(\beta_1)).
\end{eqnarray*}
\end{itemize}
We now prove the necessity of the condition. Assume that $\C(Q,N)$ is a constant $b$-symbol weight code and the constant $b$-symbol weight is $\zeta_1$. Then  for any $i\in\{0,1,\ldots,N_1-1\}$, $\sum_{k=0}^{N_1-1}
\#U(b,k,N_1)\eta_{i+k({\rm mod~N_1})}^{(N_1,Q)}$ is a constant which equals $\zeta_2.$
Assume that $\beta_i\in C_i^{(N_1,Q)}$ for $i \in \{0,1,\ldots,N_1-1\}$ and $A(i)$ denotes the matrix
\begin{equation*}
  A(i)=\left(
    \begin{array}{ccccccc}
      \eta_0^{(N_1,Q)} & \cdots & \eta_{i-2}^{(N_1,Q)} & \zeta_2 & \eta_{i}^{(N_1,Q)} & \cdots & \eta_{N_1-1}^{(N_1,Q)} \\
       \eta_{N_1-1}^{(N_1,Q)} & \cdots & \eta_{i-3}^{(N_1,Q)} & \zeta_2 & \eta_{i-1}^{(N_1,Q)} & \cdots & \eta_{N_1-2}^{(N_1,Q)} \\
      \vdots & \ddots & \vdots & \vdots & \vdots & \ddots & \vdots \\
       \eta_1^{(N_1,Q)} & \cdots & \eta_{i-1}^{(N_1,Q)} & \zeta_2 & \eta_{i+1}^{(N_1,Q)} & \cdots & \eta_{0}^{(N_1,Q)} \\
    \end{array}
  \right)_{N_1\times N_1}.
\end{equation*}
According to Lemma \ref{lemeta}, we have
\begin{eqnarray*}
  \sum_{i=0}^{N_1-1}w_b(\ccc(\beta_i)) &=& \frac{(q^b-1)(Q-1)N_1}{q^bN}-\frac{N_1}{q^bN}\sum_{i=0}^{N_1-1}
\sum_{k=0}^{N_1-1}\#U(b,k,N_1)\eta_{i+k({\rm mod }~N_1)}^{(N_1,Q)} \\
  ~ &=& \frac{(q^b-1)(Q-1)N_1}{q^bN}-\frac{N_1}{q^bN}(q^b-1)\cdot(-1) \\
  ~ &=& \frac{(q^b-1)QN_1}{q^bN}.
\end{eqnarray*}
Then $\zeta_1=\frac{(q^b-1)Q}{q^bN}$ and $\zeta_2=-q^b+1$. Therefore, the constant $b$-symbol weight is $\frac{(q^b-1)Q}{q^bN}$.
Solving the following system of equations, we obtain
\begin{equation}\label{eqs1}
  \left\{
    \begin{array}{ll}
      w_b(\ccc(\beta_0))=\zeta_1, \\
      w_b(\ccc(\beta_1))=\zeta_1, \\
      \vdots \\
      w_b(\ccc(\beta_{N_1-1}))=\zeta_1, \\
    \end{array}
  \right.
\Longrightarrow
A\cdot\left(
        \begin{array}{c}
          \#U(b,0,N_1) \\
          \#U(b,1,N_1) \\
          \vdots \\
          \#U(b,N_1-1,N_1) \\
        \end{array}
      \right)=\left(
                \begin{array}{c}
                  \zeta_2 \\
                  \zeta_2 \\
                  \vdots \\
                  \zeta_2 \\
                \end{array}
              \right),
\end{equation}
where the matrix $A$ is defined as in Lemma \ref{lemeta}.
  Combing the assumption that $A$ is invertible and the Cramer's rule, the solutions of (\ref{eqs1}) are
\begin{equation*}
  \#U(b,i,N_1)=\frac{\det(A(i))}{\det(A)} \hbox{~for all $i\in\{0,1,\ldots,N_1-1\}$}.
\end{equation*}
The desired result follows since $\det(A(0))=\cdots=\det(A(N_1-1)).$
\end{proof}

Theorem \ref{thm8} gives a complete characterization of constant $b$-symbol weight irreducible cyclic codes in the general case that $N$ is any divisor of $Q-1$. It easy to check that $\#U(b,i,N_1)=\#U(b,j,N_1)$ for all $i\neq j$ if $N_1=1$ or $b=m$. Shi {\it et al.} \cite{SOS} give the $b$-symbol weight enumerator of the irreducible cyclic codes $\C(Q,N)$ when $N_1=1$. By Theorem \ref{kgeqb}, for any cyclic code $\C$ with dimension $m$, $\C$ is a constant $b$-symbol weight code with $b$-symbol weight $n$ if $b\geq m$.

Recall that $f(x)=\sum_{i=0}^{k-1}\eta^{(k,Q)}_ix^i.$
From a number of numerical results computed by Magma, we find the assumption $f(w^i)\neq 0$ always holds for all $i \in\{0,1,\ldots,N_1-1\}$. It is reasonable to conjecture that $A$ is invertible by the part (4) of Lemma \ref{lemeta}. This conjecture is of interest for cyclotomy.
\begin{conjecture}
The circulant matrix $A$ defined as in (\ref{circulant}) is invertible.
\end{conjecture}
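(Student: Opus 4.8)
The plan is to diagonalize the circulant matrix $A$ and show that none of its eigenvalues can vanish. By part (4) of Lemma \ref{lemeta}, $A$ is invertible exactly when $f(\omega^i)\neq 0$ for every $i\in\{0,1,\ldots,k-1\}$, where $f(x)=\sum_{i=0}^{k-1}\eta_i^{(k,Q)}x^i$ and $\omega$ is a primitive $k$-th root of unity. Since the eigenvalues of a circulant matrix are precisely the values $f(\omega^i)$ (they are the images under $f$ of the eigenvalues $\omega^i$ of the cyclic-shift matrix $P$), it suffices to evaluate $f$ at each $k$-th root of unity and verify that the result is nonzero. I would take $\omega=\xi_k=e^{2\pi\sqrt{-1}/k}$ and compute $f(\xi_k^j)$ for $j\in\{0,1,\ldots,k-1\}$ directly from the definition of the Gaussian periods, the point being that these eigenvalues turn out to be Gaussian sums.

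Concretely, fix a primitive multiplicative character $\psi$ of order $k$ on $\F_Q^*$ normalized so that $\psi(\alpha)=\xi_k$; then $\psi$ is constant on each cyclotomic class, with $\psi^j(x)=\xi_k^{ij}$ for every $x\in C_i^{(k,Q)}$. Grouping the defining sum of the Gaussian sum $G(\psi^j)=\sum_{c\in\F_Q^*}\psi^j(c)\chi_1(c)$ according to the cyclotomic classes and inserting $\eta_i^{(k,Q)}=\sum_{x\in C_i^{(k,Q)}}\chi_1(x)$ yields
\begin{equation*}
  G(\psi^j)=\sum_{i=0}^{k-1}\xi_k^{ij}\,\eta_i^{(k,Q)}=f(\xi_k^j).
\end{equation*}
Thus for $j\in\{1,\ldots,k-1\}$ the eigenvalue $f(\xi_k^j)$ is exactly $G(\psi^j)$, while for $j=0$ part (1) of Lemma \ref{lemeta} (equivalently, the trivial-character computation $G(\psi^0)=-1$) gives $f(1)=-1$.

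It then remains to check that each eigenvalue is nonzero. The value $f(1)=-1$ is plainly nonzero, and for $j\in\{1,\ldots,k-1\}$ the character $\psi^j$ has order $k/\gcd(k,j)>1$, hence is nontrivial; since the canonical additive character $\chi_1$ is also nontrivial, the classical modulus identity $|G(\psi^j)|=Q^{1/2}$ forces $G(\psi^j)\neq 0$. Therefore every eigenvalue $f(\xi_k^i)$ is nonzero and $A$ is invertible, which would upgrade the conjecture to a theorem. I do not expect a serious obstacle here: the only external ingredient is the standard fact that a Gaussian sum of a nontrivial multiplicative character against a nontrivial additive character has absolute value $Q^{1/2}$, and the rest is the bookkeeping that identifies the circulant eigenvalues with Gaussian sums. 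The one hypothesis to track is that $k\mid Q-1$, which is what guarantees the existence of an order-$k$ multiplicative character and holds automatically in the intended application $k=N_1$.
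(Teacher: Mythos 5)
The paper never proves this statement: it appears only as a conjecture, supported by Magma computations and the remark that invertibility ``is reasonable to conjecture'' via part (4) of Lemma \ref{lemeta}. So there is no proof in the paper to compare against --- what you have written is an actual proof, and it is correct; it settles the conjecture affirmatively. Each step checks out: by part (4) of Lemma \ref{lemeta} it suffices to show $f(\xi_k^j)\neq 0$ for all $j$; the character $\psi$ with $\psi(\alpha)=\xi_k$ is well defined of order $k$ exactly because $k\mid Q-1$ (the hypothesis you flag, which is implicit whenever the cyclotomic classes of order $k$ in $\F_Q$ are defined, and automatic for $k=N_1$); $\psi^j$ is constant on cyclotomic classes, so grouping $G(\psi^j)=\sum_{c\in\F_Q^*}\psi^j(c)\chi_1(c)$ by classes gives
\begin{equation*}
  f(\xi_k^j)=\sum_{i=0}^{k-1}\xi_k^{ij}\eta_i^{(k,Q)}=G(\psi^j),
\end{equation*}
which is precisely the forward transform whose inverse the paper already quotes in its preliminaries; $f(1)=-1$ by part (1) of Lemma \ref{lemeta}; and for $1\leq j\leq k-1$ both $\psi^j$ and $\chi_1$ are nontrivial, so the classical evaluation $|G(\psi^j)|=Q^{1/2}$ (Lidl--Niederreiter) forces $G(\psi^j)\neq 0$. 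Two points worth making explicit in a write-up: state $k\mid Q-1$ as a standing hypothesis, and note the degenerate case $k=1$, where invertibility is just $\eta_0^{(1,Q)}=-1\neq 0$. Your argument buys quite a lot relative to the paper: it upgrades the conjecture to a theorem, and as a by-product it removes the invertibility assumption from Theorem \ref{thm8}, whose necessity direction (the Cramer's-rule step) currently rests on exactly this unproved hypothesis.
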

\subsection{The complete $b$-symbol weight enumerator when $N_1=2$}
Zhu {\it et al.} \cite{ZHW} considered the $b$-symbol weight enumerator of the irreducible cyclic codes $\C(Q,N)$ where $N_1=2$. For completeness, we list these results as follows.
\begin{theorem}\label{ZHWDCC}
\cite{ZHW}
Let $\mathcal{P}(b)$ be the subset of cardinality $\frac{q^b-1}{q-1}$ in $\F_{Q}^*$ defined as
\begin{equation*}
  \begin{array}{rcl}
\mathcal{P}(b)  =   \bigcup_{j=1}^{b-1} \left\{\theta^{(j-1)} + x_1 \theta^{j} + \cdots + x_{b-j} \theta^{(b-1)}| (x_1, \ldots, x_j) \in \F_q^j \right\}
  \cup \left\{ \theta^{(b-1)} \right\}
\end{array}
\end{equation*}
and let $$\mu(b)=\#\left\{ x  \in \mathcal{P}(b)| x \; \mbox{is a square in} \; \F_{Q}^*\right\}.$$
If $N_1=2$ and $1\leq b\leq m-1$, then the $b$-symbol weight distribution of $\C(Q,N)$ is
$$1+\frac{Q-1}{2}(T^{u_1}+T^{u_2}),$$
where
\begin{equation*}
  u_1=
  \left\{
    \begin{array}{ll}
      \frac{q^b-1}{N(q-1)q^{b-1}} \left( Q - \frac{Q+(q-1)Q^{\frac{1}{2}}}{q}\right)
+\frac{2 \mu(b) (q-1) Q^{\frac{1}{2}}}{Nq^b} & \mbox{if $p \equiv 1 \mod 4$}, \\
      \frac{q^b-1}{N(q-1)q^{b-1}} \left( Q - \frac{Q+(-1)^{\frac{sm}{2}}(q-1)Q^{\frac{1}{2}}}{q}\right)
+\frac{2 \mu(b) (-1)^{\frac{sm}{2}} (q-1) Q^{\frac{1}{2}}}{Nq^b} & \mbox{if $p \equiv 3 \mod 4$},
    \end{array}
  \right.
\end{equation*}
and
\begin{equation*}
  u_2=
  \left\{
    \begin{array}{ll}
      \frac{q^b-1}{N(q-1)q^{b-1}} \left( Q - \frac{Q-(q-1)Q^{\frac{1}{2}}}{q}\right)
-\frac{2 \mu(b) (q-1) Q^{\frac{1}{2}}}{Nq^b}& \mbox{if $p \equiv 1 \mod 4$}, \\
      \frac{q^b-1}{N(q-1)q^{b-1}} \left( Q - \frac{Q-(-1)^{\frac{sm}{2}}(q-1)Q^{\frac{1}{2}}}{q}\right)
-\frac{2 \mu(b) (-1)^{\frac{sm}{2}} (q-1) Q^{\frac{1}{2}}}{Nq^b}& \mbox{if $p \equiv 3 \mod 4$}.
    \end{array}
  \right.
\end{equation*}
\end{theorem}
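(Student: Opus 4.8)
The plan is to specialize the machinery of Theorem \ref{thm4} (equivalently, the corollary immediately following it) to the case $N_1 = 2$, where $\F_Q^*$ splits into only two cyclotomic classes, $C_0^{(2,Q)}$ and $C_1^{(2,Q)}$, namely the squares and the non-squares. Since the weight of $\ccc(\beta)$ depends on $\beta$ only through the class containing it, the code automatically carries at most two nonzero weights. Using that $k_0 = m$ here (so $\beta \mapsto \ccc(\beta)$ is a bijection $\F_Q \to \C$ and each class has $\frac{Q-1}{2}$ members), these two weights are each attained exactly $\frac{Q-1}{2}$ times, which already accounts for the shape $1 + \frac{Q-1}{2}(T^{u_1} + T^{u_2})$. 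It then remains to evaluate the two weights in closed form.

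First I would write, for $\beta \in C_i^{(2,Q)}$ (indices mod $2$),
\[
w_b(\ccc(\beta)) = \frac{(q^b-1)(Q-1)}{q^b N} - \frac{2}{q^b N}\left(\#U(b,0,2)\,\eta_i^{(2,Q)} + \#U(b,1,2)\,\eta_{i+1}^{(2,Q)}\right).
\]
Setting $S = \#U(b,0,2) + \#U(b,1,2)$ and $D = \#U(b,0,2) - \#U(b,1,2)$, the bracketed term becomes $\tfrac{S}{2}\bigl(\eta_0^{(2,Q)} + \eta_1^{(2,Q)}\bigr) \pm \tfrac{D}{2}\bigl(\eta_0^{(2,Q)} - \eta_1^{(2,Q)}\bigr)$, the sign depending on $i$. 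By Lemma \ref{lemeta}(1), $\eta_0^{(2,Q)} + \eta_1^{(2,Q)} = -1$, and by Lemma \ref{lem1}(1) the value of $\eta_0^{(2,Q)}$ gives $\eta_0^{(2,Q)} - \eta_1^{(2,Q)} = \pm Q^{1/2}$ with a sign determined by $p \bmod 4$ and the parity of $sm$. Substituting $S = q^b - 1$ from Lemma \ref{lem7}(2) isolates the two candidate weights in terms of the single unknown $D$.

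The crux is the combinatorial identification $\#U(b,0,2) = (q-1)\mu(b)$. Because $N_1 = 2 \mid \frac{Q-1}{q-1}$, we have $\F_q^* \subseteq C_0^{(2,Q)}$, so scaling a nonzero $\sum_{i=1}^b u_i\theta^{i-1}$ by an element of $\F_q^*$ never changes whether it is a square; hence membership in $C_0^{(2,Q)}$ is constant along each $\F_q^*$-orbit. I would then verify that $\mathcal{P}(b)$ is exactly a transversal of these orbits: normalizing each nonzero tuple so its first nonzero entry equals $1$ yields one representative per orbit, and collecting those with leading index $j$ produces $\theta^{j-1} + x_1\theta^{j} + \cdots + x_{b-j}\theta^{b-1}$, matching the definition of $\mathcal{P}(b)$. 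Here the linear independence of $1, \theta, \ldots, \theta^{b-1}$ over $\F_q$ (valid since $b \le m-1 = k_0 - 1$) guarantees that distinct tuples give distinct field elements, so $|\mathcal{P}(b)| = \frac{q^b-1}{q-1}$ and each of the $\mu(b)$ square representatives contributes exactly $q-1$ tuples to $U(b,0,2)$. This gives $\#U(b,0,2) = (q-1)\mu(b)$ and $D = 2(q-1)\mu(b) - (q^b-1)$.

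Finally I would substitute $S$, $D$, and the value of $\eta_0^{(2,Q)} - \eta_1^{(2,Q)}$ into the two expressions ($i=0$ and $i=1$) for $w_b$ and simplify, matching the two resulting weights with $u_1$ and $u_2$. The main obstacle is not conceptual but bookkeeping: carrying the quadratic-character sign cleanly through the two cases $p \equiv 1$ and $p \equiv 3 \pmod 4$ of Lemma \ref{lem1}(1), in particular checking in the latter case that $sm$ is even so that $(\sqrt{-1})^{sm} = (-1)^{sm/2}$ is real, and then reconciling the resulting closed forms with the stated $u_1$ and $u_2$.
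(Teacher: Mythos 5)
Your proposal is correct and follows essentially the same route the paper takes: this theorem is quoted from \cite{ZHW}, but the paper re-derives it (as Theorem \ref{thm1515}) from exactly the machinery you use --- Theorem \ref{thm4} specialized to $N_1=2$, the order-$2$ Gaussian periods of Lemma \ref{lem1}(1) together with $\eta_0^{(2,Q)}+\eta_1^{(2,Q)}=-1$, and the identification $\#U(b,0,2)=(q-1)\mu(b)$ via the transversal property of $\mathcal{P}(b)$, which is the same pattern as the paper's explicit proofs for $N_1=3$ and $N_1=4$. Your sum/difference parametrization $S=q^b-1$, $D=2(q-1)\mu(b)-(q^b-1)$ is only a cosmetic variant of the paper's bookkeeping, and your simplifications do reproduce the stated $u_1$ and $u_2$ in both congruence cases.
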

The expressions of Theorem \ref{ZHWDCC} depends on the invariant $\mu(b)$. By the definitions of $\#U(b,0,2)$ and $\mu(b)$, we obtain $\#U(b,0,2)=(q-1)\mu(b).$ For the sake of consistency, we give the expression by using $\#U(b,0,2)$ rather than $\mu(b)$ in the following.
\begin{theorem}\label{thm1515}
 If $N_1=2$ and $1\leq b\leq m-1$, then the $b$-symbol weight distribution of $\C(Q,N)$ is
$$1+\frac{Q-1}{2}(T^{u_1}+T^{u_2}),$$
where
\begin{equation*}
  u_1=
  \left\{
    \begin{array}{ll}
      \frac{(q^b-1)(Q-Q^{\frac{1}{2}})}{Nq^{b}}
+\frac{2   Q^{\frac{1}{2}}}{Nq^b}\#U(b,0,2), & \hbox{if $p\equiv1 ({\rm mod~}4)$,} \\
      \frac{(q^b-1)(Q-(-1)^{\frac{sm}{2}}Q^{\frac{1}{2}})}{Nq^{b}}
+\frac{2 (-1)^{\frac{sm}{2}} Q^{\frac{1}{2}}}{Nq^b}\#U(b,0,2), & \hbox{if $p\equiv3 ({\rm mod~}4)$,}
    \end{array}
  \right.
\end{equation*}
and
\begin{equation*}
  u_2=
  \left\{
    \begin{array}{ll}
      \frac{(q^b-1)(Q+Q^{\frac{1}{2}})}{Nq^{b}}
-\frac{2  Q^{1/2}}{Nq^b}\#U(b,0,2), & \hbox{if $p\equiv1 ({\rm mod~}4)$,} \\
      \frac{(q^b-1)(Q+(-1)^{\frac{sm}{2}}Q^{\frac{1}{2}})}{Nq^{b}}
-\frac{2 (-1)^{\frac{sm}{2}} Q^{\frac{1}{2}}}{Nq^b}\#U(b,0,2), & \hbox{if $p\equiv3 ({\rm mod~}4)$.}
    \end{array}
  \right.
\end{equation*}
Moreover, \begin{equation*}
  w_b(\ccc(\beta))=\left\{
                     \begin{array}{ll}
                       0, & \hbox{if $\beta=0$;} \\
                       u_1, & \hbox{if $\beta\in C_0^{(2,Q)}$;} \\
                       u_2, & \hbox{if $\beta\in C_1^{(2,Q)}$.}
                     \end{array}
                   \right.
\end{equation*}
\end{theorem}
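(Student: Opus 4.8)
The plan is to run the single-codeword formula of Theorem \ref{thm4} through the two simplifications that are special to the case $N_1=2$: the counting identity $\#U(b,0,2)+\#U(b,1,2)=q^b-1$ of Lemma \ref{lem7}(2), and the explicit evaluation of $\eta_0^{(2,Q)},\eta_1^{(2,Q)}$ in Lemma \ref{lem1}(1). Write $A:=\#U(b,0,2)$ for brevity. Since $N_1=2$ the group $\F_Q^*$ is partitioned into the two cyclotomic classes $C_0^{(2,Q)}$ (the squares) and $C_1^{(2,Q)}$ (the non-squares), each of size $\frac{Q-1}{2}$, and in the present two-class setting the code has dimension $m$, so $\beta\mapsto\ccc(\beta)$ is injective. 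Hence the $Q$ choices of $\beta$ yield the zero codeword together with $\frac{Q-1}{2}$ codewords over each class; once $w_b(\ccc(\beta))$ is shown to be constant on each class, the enumerator is forced into the shape $1+\frac{Q-1}{2}(T^{u_1}+T^{u_2})$.

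Specializing Theorem \ref{thm4} to $N_1=2$ and grouping the sum over $(u_1,\ldots,u_b)$ by the class of $\sum_{j}u_j\theta^{j-1}$, the inner sum equals $A\,\eta_0^{(2,Q)}+(q^b-1-A)\,\eta_1^{(2,Q)}$ when $\beta\in C_0^{(2,Q)}$, and the same expression with the two Gaussian periods interchanged when $\beta\in C_1^{(2,Q)}$; this already proves that $w_b$ depends only on the class of $\beta$. I would then eliminate $\eta_1^{(2,Q)}$ via $\eta_1^{(2,Q)}=-1-\eta_0^{(2,Q)}$ from Lemma \ref{lem1}(1). The $(q^b-1)$-part collapses to a class-independent constant while the $A$-part carries the factor $2\eta_0^{(2,Q)}+1$, so that after collecting terms one obtains, for $\beta\in C_0^{(2,Q)}$, $w_b(\ccc(\beta))=\frac{(q^b-1)\left(Q+(2\eta_0^{(2,Q)}+1)\right)}{q^bN}-\frac{2\left(2\eta_0^{(2,Q)}+1\right)}{q^bN}\,A$, and the complementary expression for $\beta\in C_1^{(2,Q)}$.

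Finally I would substitute the explicit value of $\eta_0^{(2,Q)}$ from Lemma \ref{lem1}(1) in each residue case, so that $2\eta_0^{(2,Q)}+1$ becomes a clean multiple of $Q^{1/2}$ and the displayed $u_1,u_2$ drop out. The step needing care is the sign bookkeeping. The key observation is that $N_1=2$ forces $m$ to be even: here $p$ is necessarily odd (if $p=2$ then $\frac{Q-1}{q-1}$ is odd, so $N_1$ cannot equal $2$), whence $\frac{Q-1}{q-1}=1+q+\cdots+q^{m-1}\equiv m\pmod 2$, and $2\mid N_1$ demands $2\mid m$, so $sm$ is even. This is exactly what is needed to replace the factor $(-1)^{sm-1}$ by $-1$ and to make $(\sqrt{-1})^{sm}=(-1)^{sm/2}$ real in the two formulas for $\eta_0^{(2,Q)}$, thereby reconciling Lemma \ref{lem1}(1) with the cleaner factors in the statement. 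The last point is to confirm that the class of squares $C_0^{(2,Q)}$ matches $u_1$ rather than $u_2$: carrying out the substitution in the $C_0$ expression above, one checks directly that it equals the stated $u_1$ in both cases $p\equiv1$ and $p\equiv3\pmod 4$, and symmetrically that $C_1^{(2,Q)}$ gives $u_2$. Everything beyond this parity observation is routine algebraic simplification, and the relation $\#U(b,0,2)=(q-1)\mu(b)$ recovers the original form in Theorem \ref{ZHWDCC}.
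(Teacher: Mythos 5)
Your proposal is correct, but it takes a genuinely different route from the paper. The paper never derives Theorem \ref{thm1515} from its own machinery: it quotes Theorem \ref{ZHWDCC} from \cite{ZHW} and obtains the statement purely by the notational translation $\#U(b,0,2)=(q-1)\mu(b)$, so its ``proof'' is a restatement of an external result. You instead specialize the paper's key formula (Theorem \ref{thm4}) to $N_1=2$: grouping the sum over $(u_1,\ldots,u_b)$ by cyclotomic class gives $A\eta_0^{(2,Q)}+(q^b-1-A)\eta_1^{(2,Q)}$ for $\beta\in C_0^{(2,Q)}$ and the swapped expression for $C_1^{(2,Q)}$, eliminating $\eta_1^{(2,Q)}$ via $\eta_0^{(2,Q)}+\eta_1^{(2,Q)}=-1$ (part (1) of Lemma \ref{lemeta}) yields your formula $w_b(\ccc(\beta))=\frac{(q^b-1)\left(Q+(2\eta_0^{(2,Q)}+1)\right)}{q^bN}-\frac{2\left(2\eta_0^{(2,Q)}+1\right)}{q^bN}A$ on $C_0^{(2,Q)}$, and substituting part (1) of Lemma \ref{lem1} finishes; I checked these manipulations and they reproduce $u_1,u_2$ exactly, including the pairing of $C_0^{(2,Q)}$ with $u_1$. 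This is precisely the template the paper itself uses for $N_1=3$, $N_1=4$ and the semi-primitive case (Theorems \ref{thm11}, \ref{thm1717}, \ref{thm16}), so your argument makes the $N_1=2$ case uniform with those instead of outsourcing it. Your route also surfaces two points the paper silently assumes: first, the parity fact that $N_1=2$ forces $p$ odd and $m$ even, hence $sm$ even, which is what turns $(-1)^{sm-1}$ into $-1$ and makes $(\sqrt{-1})^{sm}=(-1)^{sm/2}$ real--without this the signs in Lemma \ref{lem1}(1) do not match the clean signs in the statement; second, that the multiplicities $\frac{Q-1}{2}$ require injectivity of $\beta\mapsto\ccc(\beta)$, i.e.\ $k_0=m$ (your assertion of this is correct: $\frac{Q-1}{q^{k_0}-1}$ divides both $N$ and $\frac{Q-1}{q-1}$, hence divides $N_1=2$, forcing $k_0=m$; alternatively it follows a posteriori from $u_1,u_2>0$). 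The only blemish is your phrase that the ``$(q^b-1)$-part collapses to a class-independent constant''--it does not, since its $Q^{\frac{1}{2}}$-term changes sign with the class--but the displayed formulas you give are the correct ones, so this is loose wording rather than an error. In short: the paper's route buys brevity and consistency with the published source, while yours buys a self-contained derivation from Theorem \ref{thm4} that treats all values of $N_1$ by one method.
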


\subsection{The complete $b$-symbol weight enumerator when $N_1=3$}
\begin{theorem}\label{thm11}
Let $N_1=3$ and $1\leq b\leq m-1$.
When $p\equiv2({\rm mod~3})$, the $b$-symbol weight enumerator of $\C(Q,N)$ is
\begin{equation*}
  A_b(T)=1+\frac{Q-1}{3}(T^{u_1}+T^{u_2}+T^{u_3}),
\end{equation*}
where
\begin{equation*}
  u_i=\left\{
        \begin{array}{ll}
          \frac{(q^b-1)(Q-Q^{\frac{1}{2}})}{q^bN}
      +\frac{3Q^{\frac{1}{2}}}{q^bN}\#U(b,j,3), & \hbox{if $\frac{sm}{2}$ is even,} \\
      \frac{(q^b-1)(Q+Q^{\frac{1}{2}})}{q^bN}
      -\frac{3Q^{\frac{1}{2}}}{q^bN}\#U(b,j,3), & \hbox{if $\frac{sm}{2}$ is odd,}
        \end{array}
      \right.
\end{equation*}
and $i+j\equiv0~(~{\rm mod} ~3).$
Moreover,
\begin{equation*}
  w_b(\ccc(\beta))=\left\{
                     \begin{array}{ll}
                       0, & \hbox{if $\beta=0$;} \\
                       u_1, & \hbox{if $\beta\in C_0^{(3,Q)}$;} \\
                       u_2, & \hbox{if $\beta\in C_1^{(3,Q)}$;} \\
                       u_3, & \hbox{if $\beta\in C_2^{(3,Q)}$.}
                     \end{array}
                   \right.
\end{equation*}
\end{theorem}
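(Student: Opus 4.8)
The plan is to specialize the general weight formula of Theorem~\ref{thm4} to $N_1=3$ and substitute the explicit Gaussian periods for the case $p\equiv 2\pmod 3$ recorded in Lemma~\ref{lem1}(2). First I would observe that $N_1=3$ forces $3\mid Q-1$; since $p\equiv 2\equiv -1\pmod 3$ we have $p^{sm}\equiv(-1)^{sm}\pmod 3$, so $3\mid p^{sm}-1$ holds only when $sm$ is even. Consequently $(\sqrt{-1})^{sm}=(-1)^{sm/2}$ is real, and the periods of Lemma~\ref{lem1}(2) read
$$\eta_0^{(3,Q)}=\frac{-1-2(-1)^{sm/2}Q^{\frac12}}{3},\qquad \eta_1^{(3,Q)}=\eta_2^{(3,Q)}=\frac{-1+(-1)^{sm/2}Q^{\frac12}}{3}.$$
The two structural facts I would exploit are that two of the three periods coincide, and that $\#U(b,0,3)+\#U(b,1,3)+\#U(b,2,3)=q^b-1$ by Lemma~\ref{lem7}(2).

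Next, for $\beta\in C_i^{(3,Q)}$ I would group the sum in Theorem~\ref{thm4} by cyclotomic class (as in the Corollary following it) to obtain the inner sum $\sum_{k=0}^{2}\#U(b,k,3)\,\eta_{i+k}^{(3,Q)}$, indices mod $3$. Writing $A=\eta_0^{(3,Q)}$ and $B=\eta_1^{(3,Q)}=\eta_2^{(3,Q)}$, exactly one summand carries $A$ while the other two carry $B$; the coefficient of $A$ is $\#U(b,j,3)$ with $j$ the unique residue satisfying $i+j\equiv 0\pmod 3$. Using $\sum_k\#U(b,k,3)=q^b-1$, the sum collapses to
$$\#U(b,j,3)\,(A-B)+(q^b-1)B,$$
and since $A-B=-(-1)^{sm/2}Q^{\frac12}$ this is an explicit expression in $\#U(b,j,3)$, $Q$ and $q^b$.

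Finally I would feed this back into the formula of Theorem~\ref{thm4} and simplify. The terms free of $\#U(b,j,3)$ combine, via $3B=-1+(-1)^{sm/2}Q^{\frac12}$, to $\tfrac{(q^b-1)\left(Q-(-1)^{sm/2}Q^{1/2}\right)}{q^bN}$, while the remaining term contributes $\pm\tfrac{3Q^{1/2}}{q^bN}\#U(b,j,3)$; splitting by the parity of $sm/2$, which fixes the sign $(-1)^{sm/2}=\pm1$, yields precisely the two displayed expressions for $u_i$. Since the weight depends on $\beta$ only through its class $C_i^{(3,Q)}$ (via $\#U(b,j,3)$ with $i+j\equiv 0\pmod 3$), each of $i=0,1,2$ gives one weight value shared by the $\tfrac{Q-1}{3}$ elements of $C_i^{(3,Q)}$, and $\beta=0$ contributes the leading $1$, so the enumerator $1+\tfrac{Q-1}{3}(T^{u_1}+T^{u_2}+T^{u_3})$ follows. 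I do not anticipate a genuine obstacle: the content is a careful substitution, and the only point demanding care is tracking the index shift $i\mapsto i+k$ together with the reflection $i+j\equiv 0\pmod 3$, so that each weight $u_{i+1}$ is attached to the correct $\#U(b,j,3)$.
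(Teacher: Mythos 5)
Your proposal is correct and follows essentially the same route as the paper's proof: specialize Theorem~\ref{thm4} to $N_1=3$, substitute the Gaussian periods of Lemma~\ref{lem1}(2), and use the coincidence $\eta_1^{(3,Q)}=\eta_2^{(3,Q)}$ together with $\sum_{k}\#U(b,k,3)=q^b-1$ to collapse the sum, splitting by the parity of $sm/2$. Your added observation that $N_1=3$ and $p\equiv 2\pmod 3$ force $sm$ to be even is a small justification the paper leaves implicit, but it does not change the argument.
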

\begin{proof}
Assume the $0\neq \beta \in C_i^{(3,Q)}$ and
$j\equiv-i~({\rm mod} ~3).$
When $p\equiv2~(~{\rm mod~3})$, from the known results on the Gaussian periods, $\eta_i^{(3,Q)}$ take only two values. Combining the values of $\eta_i^{(3,Q)}$ (see (\ref{3Q2}) of Lemma \ref{lem1}) and Theorem \ref{thm4}, we have
\begin{eqnarray*}
  w_b(\ccc(\beta)) &=& \frac{(q^b-1)(Q-1)}{q^bN}-\#U(b,j,3)\frac{3\eta_0^{(3,Q)}}{q^bN}\\
  ~&~&-(q^b-1-\#U(b,j,3))\frac{3\eta_{j_1}^{(3,Q)}}{q^bN}~~~~~~~(\hbox{where $j_1\neq 0$}) \\
  ~ &=&\left\{
    \begin{array}{ll}
      \frac{(q^b-1)(Q-Q^{\frac{1}{2}})}{q^bN}
      +\frac{3Q^{\frac{1}{2}}}{q^bN}\#U(b,j,3), & \hbox{if $\frac{sm}{2}$ is even;} \\
      \frac{(q^b-1)(Q+Q^{\frac{1}{2}})}{q^bN}
      -\frac{3Q^{\frac{1}{2}}}{q^bN}\#U(b,j,3), & \hbox{if $\frac{sm}{2}$ is odd.}
    \end{array}
  \right.
\end{eqnarray*}
This completes the proof.
\end{proof}
\subsection{The complete $b$-symbol weight enumerator when $N_1=4$}
\begin{theorem}\label{thm1717}
Let $N_1=4$ and $1\leq b\leq m-1$.
 When $p\equiv3({\rm mod~4})$, the $b$-symbol weight enumerator of $\C(Q,N)$ is
\begin{equation*}
  A_b(T)=1+\frac{Q-1}{4}(T^{u_1}+T^{u_2}+T^{u_3}+T^{u_4}),
\end{equation*}
where
\begin{equation*}
  u_i=\left\{
        \begin{array}{ll}
          \frac{(q^b-1)(Q-Q^{\frac{1}{2}})}{q^bN}
      +\frac{4Q^{\frac{1}{2}}}{q^bN}\#U(b,j,4) & \hbox{if $\frac{sm}{2}$ is even,} \\
      \frac{(q^b-1)(Q+Q^{\frac{1}{2}})}{q^bN}
      -\frac{4Q^{\frac{1}{2}}}{q^bN}\#U(b,j,4) & \hbox{if $\frac{sm}{2}$ is odd,}
        \end{array}
      \right.
\end{equation*}
and $i+j\equiv0~(~{\rm mod} ~4).$
  Moreover,
\begin{equation*}
  w_b(\ccc(\beta))=\left\{
                     \begin{array}{ll}
                       0, & \hbox{if $\beta=0$;} \\
                       u_1, & \hbox{if $\beta\in C_0^{(4,Q)}$;} \\
                       u_2, & \hbox{if $\beta\in C_1^{(4,Q)}$;} \\
                       u_3, & \hbox{if $\beta\in C_2^{(4,Q)}$;} \\
                       u_4, & \hbox{if $\beta\in C_3^{(4,Q)}$.}
                     \end{array}
                   \right.
\end{equation*}
\end{theorem}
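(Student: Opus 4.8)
The plan is to transcribe, essentially line for line, the argument used for Theorem \ref{thm11}, since the case $N_1=4$ with $p\equiv 3\pmod 4$ shares the decisive feature of the $N_1=3$ case: by part (3) of Lemma \ref{lem1} (equation \ref{4Q2}) the Gaussian periods $\eta_0^{(4,Q)},\eta_1^{(4,Q)},\eta_2^{(4,Q)},\eta_3^{(4,Q)}$ take only two distinct values, with $\eta_0^{(4,Q)}$ isolated and $\eta_1^{(4,Q)}=\eta_2^{(4,Q)}=\eta_3^{(4,Q)}$ equal to a common value. This collapse of the period spectrum to two values is exactly what forces the weight to depend on the single invariant $\#U(b,j,4)$.

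First I would fix a nonzero $\beta\in C_i^{(4,Q)}$ and set $j\equiv -i\pmod 4$, so that the shifted index $i+k$ appearing in Theorem \ref{thm4} equals $0$ precisely when $k=j$. Grouping the sum $\sum_{(u_1,\ldots,u_b)}\eta^{(4,Q)}_{i+k_{(u_1,\ldots,u_b)}}$ according to the value $k$ of $k_{(u_1,\ldots,u_b)}$, the $k=j$ contribution is $\#U(b,j,4)\,\eta_0^{(4,Q)}$, while by part (2) of Lemma \ref{lem7} the remaining $q^b-1-\#U(b,j,4)$ tuples all produce a nonzero index and hence all contribute the common value $\eta_1^{(4,Q)}$. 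Thus the sum reduces to $\#U(b,j,4)\,\eta_0^{(4,Q)}+(q^b-1-\#U(b,j,4))\,\eta_1^{(4,Q)}$, the same two-term split exploited in the $N_1=3$ proof.

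Next I would substitute the explicit values from \ref{4Q2} and simplify. Writing $\epsilon=(\sqrt{-1})^{sm}$, one has $\eta_0^{(4,Q)}=\frac{-1-3\epsilon Q^{1/2}}{4}$ and $\eta_1^{(4,Q)}=\frac{-1+\epsilon Q^{1/2}}{4}$. The $-\tfrac14$ contributions combine with the leading term $\frac{(q^b-1)(Q-1)}{q^bN}$ to yield $\frac{(q^b-1)Q}{q^bN}$; the coefficient of $Q^{1/2}$ collapses to $\epsilon\big((q^b-1)-4\#U(b,j,4)\big)$; and the overall factor $\frac{4}{q^bN}$ from Theorem \ref{thm4} cancels the $4$ in the period denominators. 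Finally I would note that $\epsilon=(\sqrt{-1})^{sm}=(-1)^{sm/2}$, so $\epsilon=+1$ when $\frac{sm}{2}$ is even and $\epsilon=-1$ when $\frac{sm}{2}$ is odd; splitting these two parities produces exactly the two displayed expressions for $u_i$, and the assignment of $u_1,u_2,u_3,u_4$ to $C_0,C_1,C_2,C_3$ is read off by tracking $j\equiv -i\pmod 4$.

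There is no substantive obstacle here, as the computation is a routine adaptation of the proof of Theorem \ref{thm11} with $3$ replaced by $4$ and the period values drawn from \ref{4Q2} in place of \ref{3Q2}. The only point that demands care is the sign bookkeeping: identifying $(\sqrt{-1})^{sm}$ with $(-1)^{sm/2}$ and verifying that the coefficient of $Q^{1/2}$ carries the correct sign in each parity case, so that the opposite signs in $u_1$ versus $u_2$ emerge precisely as stated.
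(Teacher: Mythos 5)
Your proposal is correct and follows essentially the same route as the paper's own proof: fix $0\neq\beta\in C_i^{(4,Q)}$ with $i+j\equiv 0\pmod 4$, apply Theorem \ref{thm4}, split the sum into the $\#U(b,j,4)$ tuples hitting $\eta_0^{(4,Q)}$ and the remaining $q^b-1-\#U(b,j,4)$ tuples hitting the common value $\eta_{j_1}^{(4,Q)}$ ($j_1\neq 0$), then substitute the values from (\ref{4Q2}) and split by the parity of $\frac{sm}{2}$. The algebraic simplification you outline (the $-\tfrac14$ terms absorbing into $\frac{(q^b-1)Q}{q^bN}$ and the sign $(\sqrt{-1})^{sm}=(-1)^{sm/2}$ governing the two cases) matches the paper's computation exactly.
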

\begin{proof}
Assume that $0\neq \beta\in C_i^{(4,Q)}$ and $i+j\equiv0({\rm mod~} 4).$ From (\ref{4Q2}) of Lemma \ref{lem1}, the values of $\eta_i^{(4,Q)}$ for $i=0,1,2,3$ are known. When $p\equiv3({\rm mod~} 4)$, we obtain
\begin{eqnarray*}
  w_b(\ccc(\beta)) &=& \frac{(q^b-1)(Q-1)}{q^bN}-\frac{4\eta_0^{(4,Q)}}{q^bN}\#U(b,j,4) \\
  ~ &~& -\frac{4\eta_{j_1}^{(4,Q)}}{q^bN}(q^b-1-\#U(b,j,4))~~~({\hbox {where $j_1
  \neq 0$}}) \\
  ~ &=& \left\{
        \begin{array}{ll}
          \frac{(q^b-1)(Q-Q^{\frac{1}{2}})}{q^bN}
      +\frac{4Q^{\frac{1}{2}}}{q^bN}\#U(b,j,4), & \hbox{if $\frac{sm}{2}$ is even;} \\
      \frac{(q^b-1)(Q+Q^{\frac{1}{2}})}{q^bN}
      -\frac{4Q^{\frac{1}{2}}}{q^bN}\#U(b,j,4), & \hbox{if $\frac{sm}{2}$ is odd.}
        \end{array}
      \right.
\end{eqnarray*}
This completes the proof.
\end{proof}
The values of Gaussian periods of some small order, such as, $\eta_{i}^{(5,Q)}$, $\eta_{i}^{(6,Q)}$, $\eta_{i}^{(8,Q)}$ and $\eta_{i}^{(12,Q)}$ are determined in \cite{Gura, Hoshi}. Mimicking the proofs above, we can give the $b$-symbol weight enumerators of $\C(Q,N)$ when $N_1=5,6,8$ and $12$. These results depend on the values of $\#U(b,j,N_1)$ and have very tedious expression of the Gaussian periods. For the sake of brevity, we omit these cases.

\subsection{The complete $b$-symbol weight enumerator in the semi-primitive cases}
The following theorem gives the $b$-symbol weight enumerators for a class of irreducible cyclic codes.
\begin{theorem}\label{thm16}
Let $s\cdot m$ be even and $N_1>2$. Assume that there exists a positive integer $j$ such that $p^j\equiv-1({{\rm{mod}}~N_1})$, and the $j$ is the least such. Let $Q=p^{2j\gamma}$ for some integer $\gamma$.

If $\gamma, p$ and $\frac{p^j+1}{N_1}$ are all odd, we have the following result. Assume that $N_1\leq  Q^{\frac{1}{2}}$.
\begin{itemize}
  \item If $\beta=0$, then $w_b(\ccc(\beta))=0;$
  \item If $\beta\in C_i^{(N_1,Q)}$, then
  $$w_b(\ccc(\beta))=\frac{(q^b-1)(Q+Q^{\frac{1}{2}})}{q^bN}-\frac{N_1Q^{\frac{1}{2}}}{q^bN}
  \#U(b,j_1,N_1),$$
  where $i+j_1\equiv\frac{N_1}{2}({\rm mod~} N_1).$
\end{itemize}
Moreover, the $b$-symbol weight enumerator of $\C(Q,N)$ is
$$
A_b(T)=1+\frac{Q-1}{N_1}\sum_{j=1}^{N_1}T^{u_j},
$$
where
$$
u_i=\frac{(q^b-1)(Q+Q^{\frac{1}{2}})}{q^bN}-\frac{N_1Q^{\frac{1}{2}}}{q^bN}
  \#U(b,j_1,N_1).
$$

In all other cases, we have the following. Assume that Assume that $N_1\leq Q^{\frac{1}{2}}$ if $\gamma$ is odd.
\begin{itemize}
  \item If $\beta=0$, then $w_b(\ccc(\beta))=0;$
  \item If $\beta\in C_i^{(N_1,Q)}$, then
  $$w_b(\ccc(\beta))=\frac{(q^b-1)(Q-(-1)^{\gamma}Q^{\frac{1}{2}})}
  {q^bN}+\frac{(-1)^{\gamma}N_1Q^{\frac{1}{2}}}{q^bN}
  \#U(b,j_2,N_1),$$
  where $i+j_2\equiv0({\rm mod~} N_1).$
\end{itemize}
Moreover, the $b$-symbol weight enumerator of $\C(Q,N)$ is
$$
A_b(T)=1+\frac{Q-1}{N_1}\sum_{j=1}^{N_1}T^{u_j},
$$
where
$$
u_i=\frac{(q^b-1)(Q-(-1)^{\gamma}Q^{\frac{1}{2}})}
  {q^bN}+\frac{(-1)^{\gamma}N_1Q^{\frac{1}{2}}}{q^bN}
  \#U(b,j_2,N_1).
$$
\end{theorem}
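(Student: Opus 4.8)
The plan is to specialize the master formula of Theorem~\ref{thm4} to the semi-primitive Gaussian periods recorded in Lemma~\ref{lem1}(4), following verbatim the template already used for the $N_1=3$ and $N_1=4$ cases (Theorems~\ref{thm11} and~\ref{thm1717}). First I would fix a nonzero $\beta \in C_i^{(N_1,Q)}$ and rewrite the character sum appearing in Theorem~\ref{thm4} by collecting the $q^b-1$ vectors $(u_1,\ldots,u_b)\in\F_q^b\setminus\{\mathbf{0}\}$ according to the cyclotomic class of $\sum_{\ell} u_\ell\theta^{\ell-1}$. By Definition~\ref{UB} this turns the sum into $\sum_{k=0}^{N_1-1}\#U(b,k,N_1)\,\eta_{i+k}^{(N_1,Q)}$ with indices taken mod $N_1$, which is exactly the reformulation given in the Corollary following Lemma~\ref{lem7}.

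The key structural input is that in each of the two semi-primitive subcases the periods $\eta_0^{(N_1,Q)},\ldots,\eta_{N_1-1}^{(N_1,Q)}$ take only two values: a distinguished value $\eta_*$ (at index $N_1/2$ when $\gamma,p,(p^j+1)/N_1$ are all odd, and at index $0$ in all other cases) together with a common value $\eta_{\mathrm{com}}$ at every other index. The sum therefore splits as $\#U(b,j_*,N_1)\,\eta_* + (q^b-1-\#U(b,j_*,N_1))\,\eta_{\mathrm{com}}$, where $j_*$ is the unique shift with $i+j_*\equiv N_1/2$ (respectively $\equiv 0$) mod $N_1$; here I invoke $\sum_{k}\#U(b,k,N_1)=q^b-1$ from Lemma~\ref{lem7}(2). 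Regrouping as $\#U(b,j_*,N_1)\,(\eta_*-\eta_{\mathrm{com}})+(q^b-1)\,\eta_{\mathrm{com}}$ isolates the whole $\#U$-dependence into a single term.

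The decisive simplification is that the gap $\eta_*-\eta_{\mathrm{com}}$ collapses to $\pm Q^{1/2}$, the factor $N_1$ cancelling against $(N_1-1)+1$: in the all-odd subcase $\frac{(N_1-1)Q^{1/2}-1}{N_1}-\left(-\frac{Q^{1/2}+1}{N_1}\right)=Q^{1/2}$, and in the remaining subcase $\frac{(-1)^{\gamma+1}(N_1-1)Q^{1/2}-1}{N_1}-\frac{(-1)^{\gamma}Q^{1/2}-1}{N_1}=-(-1)^{\gamma}Q^{1/2}$. Substituting back into Theorem~\ref{thm4} and absorbing the $(q^b-1)\eta_{\mathrm{com}}$ contribution into the constant $\frac{(q^b-1)(Q-1)}{q^bN}$ yields precisely the two claimed closed forms for $w_b(\ccc(\beta))$, with the indices $j_1,j_2$ identified as above. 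The hypothesis $N_1\le Q^{1/2}$ is what licenses the period values of Lemma~\ref{lem1}(4) and, in the subcases carrying a $-Q^{1/2}$ term, keeps the resulting weights non-negative.

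Finally I would read off the enumerator: each class $C_i^{(N_1,Q)}$ has $\frac{Q-1}{N_1}$ elements and contributes the single exponent $u_i$, while $\beta=0$ supplies the leading $1$, so $A_b(T)=1+\frac{Q-1}{N_1}\sum_{j=1}^{N_1}T^{u_j}$, the total codeword count $1+(Q-1)=Q$ serving as a sanity check. I expect no genuine obstacle here: the argument is a direct computation parallel to the $N_1=3,4$ cases, and the only points that demand care are the bookkeeping of the index shift $i+k \bmod N_1$ — in particular pinning down $j_1$ and $j_2$ correctly — and verifying the $\pm Q^{1/2}$ cancellation that produces the clean coefficient of $\#U$.
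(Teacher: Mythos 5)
Your proposal is correct and follows essentially the same route as the paper: specialize Theorem~\ref{thm4} (in the form $\sum_{k}\#U(b,k,N_1)\,\eta_{i+k}^{(N_1,Q)}$), exploit the two-valued structure of the semi-primitive Gaussian periods in Lemma~\ref{lem1}(4) to split the sum into the distinguished index $j_1$ (resp.\ $j_2$) and the $q^b-1-\#U$ remaining terms via Lemma~\ref{lem7}(2), and simplify---your gap computation $\eta_*-\eta_{\mathrm{com}}=Q^{1/2}$ (resp.\ $-(-1)^{\gamma}Q^{1/2}$) is just a reorganization of the same algebra. One small correction: the hypothesis $N_1\leq Q^{1/2}$ is not needed to ``license'' the period values of Lemma~\ref{lem1}(4) (those hold under the semi-primitive conditions alone); as in the paper, its only role is to guarantee $w_b(\ccc(\beta))\geq b$ for all $\beta\in\F_Q^*$, i.e.\ the nonnegativity/consistency point you also mention.
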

\begin{proof}
To ensure the dimension of $\C(Q,N)$ is $2j\gamma$, there is not any $\beta\in\F_Q^*$ such that $w_b(\ccc(\beta))=0$. Then $w_b(\ccc(\beta))\geq b$ for any $\beta\in\F_Q^*$.

If $\gamma, p$ and $\frac{p^j+1}{k}$ are all odd, by (\ref{eta7}) and (\ref{eta8}) of Lemma \ref{lem1}, then
\begin{eqnarray*}
  w_b(\ccc(\beta)) &=& \frac{(q^b-1)(Q-1)}{q^bN}-\frac{N_1\eta_{\frac{N_1}{2}}^{(N_1,Q)}}{q^bN}\#U(b,j_1,N_1) \\
  ~ &~& -\frac{N_1\eta_{j_1^{\prime}}^{(N_1,Q)}}{q^bN}(q^b-1-\#U(b,j_1,N_1))~~~({\hbox {where $j_1^{\prime}
  \neq \frac{N_1}{2}$}}) \\
  ~ &=& \frac{(q^b-1)(Q+Q^{\frac{1}{2}})}{q^bN}-\frac{N_1Q^{\frac{1}{2}}}{q^bN}
  \#U(b,j_1,N_1),
\end{eqnarray*}
where $i+j_1\equiv\frac{N_1}{2}({\rm mod~} N_1).$ We need the assumption that $N_1\leq Q^{\frac{1}{2}}= \left\lfloor\frac{Q-b}{Q^{\frac{1}{2}}}\right\rfloor+1$ since $\#U(b,j_1,N_1)\leq q^b-1$ and $w_b(\ccc(\beta))\geq b$.

In all other cases, we obtain
\begin{eqnarray*}
  w_b(\ccc(\beta)) &=& \frac{(q^b-1)(Q-1)}{q^bN}-\frac{N_1\eta_{0}^{(N_1,Q)}}{q^bN}\#U(b,j_2,N_1) \\
  ~ &~& -\frac{N_1\eta_{j_2^{\prime}}^{(N_1,Q)}}{q^bN}(q^b-1-\#U(b,j_2,N_1))
  ~~~({\hbox {where $j_2^{\prime}
  \neq 0$}}) \\
  ~ &=& \frac{(q^b-1)(Q-(-1)^{\gamma}Q^{\frac{1}{2}})}
  {q^bN}+\frac{(-1)^{\gamma}N_1Q^{\frac{1}{2}}}{q^bN}
  \#U(b,j_2,N_1),
\end{eqnarray*}
where $i+j_2\equiv0({\rm mod~} N_1).$ Similarly, we need the assumption that $N_1\leq Q^{\frac{1}{2}}= \left\lfloor\frac{Q-b}{Q^{\frac{1}{2}}}\right\rfloor+1$ to ensure $w_b(\ccc(\beta))\geq b$ if $\gamma$ is odd.
\end{proof}
It is easy to check that Theorem \ref{thm1515}, Theorem \ref{thm11}, and Theorem \ref{thm1717} are special cases of Theorem \ref{thm16}.
When $N_1=N$, this is the classical semi-primitive case.
The Hamming weight enumerator of the semi-primitive irreducible cyclic codes are studied by Delsarte and Goethals \cite{DG}, McEliece \cite{McEliece}, and Baumert and McEliece \cite{BM}. Ding and Yang \cite{DY2013} considered the more flexible case where $N_1=\gcd\left(\frac{Q-1}{q-1},N\right)>2.$ Theorem \ref{thm16} generalizes their results to $b$-symbol metric.

\section{The $b$-symbol weight hierarchy of some irreducible cyclic codes}
Yang {\it et al.} \cite{YLFL} studied Hamming weight hierarchy $\dd_b(\C)$ of irreducible cyclic codes in 2015.
In this section, we study the $b$-symbol weight hierarchy $d_b(\C)$ of some classes of irreducible cyclic codes and compare the two hierarchies. Since $d_b(\C)=\dd_b(\C)$ for any cyclic code if $b=1$ or $b=\dim(\C)$, we omit the two trivial cases in this section.
\subsection{The $b$-symbol weight hierarchy when $N_1=1$}
The following result on the $b$-th generalized weight hierarchy of $\C(Q,N)$ is from \cite[Corollary 7]{YLFL}.
\begin{corollary}\cite{YLFL}\label{YLFLth10}
If $N_1=1$ and $2\leq b\leq m-1$, then
$$\dd_b(\C(Q,N))=\frac{\left(q^b-1\right)Q}{q^bN}.$$
\end{corollary}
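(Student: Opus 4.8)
The plan is to prove both inequalities at once by exploiting the fact that, when $N_1=1$, the code $\C(Q,N)$ has a single nonzero Hamming weight. First I would record this one-weight property. Since $N_1=1$ there is only one cyclotomic class, namely $C_0^{(1,Q)}=\F_Q^*$, so every nonzero $\beta$ lies in it, and $\eta_0^{(1,Q)}=\sum_{x\in\F_Q^*}\chi_1(x)=-1$. Substituting $i=0$, $N_1=1$ and $\eta_0^{(1,Q)}=-1$ into Lemma \ref{Dingyang2013} gives $w_1(\ccc(\beta))=\frac{(q-1)Q}{qN}=:w$ for every $\beta\in\F_Q^*$. In particular no nonzero $\beta$ produces the zero codeword, so $\beta\mapsto\ccc(\beta)$ is injective, $\dim\C(Q,N)=m$ (which is why the range $2\leq b\leq m-1$ is the meaningful one), and $\C(Q,N)$ is a genuine one-weight code.

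Next I would compute the support size of an arbitrary $b$-dimensional subcode $D\subseteq\C(Q,N)$, recalling that $\dd_b(\C)$ is by definition the minimum, over all $b$-dimensional subcodes, of the number of coordinates not identically zero on the subcode. I would double count $\sum_{\ccc\in D}w_1(\ccc)$. On one hand, $D$ has $q^b-1$ nonzero codewords, each of weight $w$, so the sum equals $(q^b-1)w$. On the other hand, writing it coordinatewise, $\sum_{\ccc\in D}w_1(\ccc)=\sum_{i=0}^{n-1}\#\{\ccc\in D:\ccc_i\neq0\}$. For a fixed coordinate $i$, the evaluation map $D\to\F_q$, $\ccc\mapsto\ccc_i$, is $\F_q$-linear; it is the zero map exactly when $i\notin supp(D)$, and otherwise it is surjective with kernel of size $q^{b-1}$, contributing $q^b-q^{b-1}=q^{b-1}(q-1)$. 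Hence $\sum_{\ccc\in D}w_1(\ccc)=|supp(D)|\,q^{b-1}(q-1)$.

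Equating the two expressions yields $|supp(D)|=\frac{(q^b-1)w}{q^{b-1}(q-1)}$, which is remarkably independent of the chosen subcode $D$. Substituting $w=\frac{(q-1)Q}{qN}$ simplifies this to $\frac{(q^b-1)Q}{q^bN}$. Since every $b$-dimensional subcode has exactly this support size, the minimum over all such subcodes --- that is, $\dd_b(\C(Q,N))$ --- equals $\frac{(q^b-1)Q}{q^bN}$, as claimed.

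The only real content is the one-weight step; once that is in hand the counting is uniform across subcodes and delivers the exact value rather than merely a bound, so I do not expect a genuine obstacle here. As a consistency check (and an alternative route to the upper bound $\dd_b\leq\frac{(q^b-1)Q}{q^bN}$), note that for $N_1=1$ the balance hypothesis of Theorem \ref{thm8} holds trivially, so $\C(Q,N)$ is also a constant $b$-symbol weight code with $d_b(\C(Q,N))=\frac{(q^b-1)Q}{q^bN}$; combined with $\dd_b(\C)\leq d_b(\C)$ from Theorem \ref{dddd} this reproves the bound $\leq$, and shows en passant that the two hierarchies coincide when $N_1=1$.
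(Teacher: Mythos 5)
Your proof is correct, and it is worth noting that the paper does not actually prove this statement at all: it is imported verbatim as a citation of \cite[Corollary 7]{YLFL}, where it is obtained from that paper's general Gauss-sum machinery for generalized Hamming weights of irreducible cyclic codes. Your argument is a self-contained, more elementary substitute. The two steps both check out: (i) Lemma \ref{Dingyang2013} with $N_1=1$ and $\eta_0^{(1,Q)}=-1$ shows every nonzero codeword of $\C(Q,N)$ has weight $w=\frac{(q-1)Q}{qN}$, which in particular forces $\beta\mapsto\ccc(\beta)$ to be injective and $\dim\C(Q,N)=m$; (ii) for any $b$-dimensional subcode $D$, each coordinate in $\chi(D)$ supports a nonzero $\F_q$-linear functional on $D$, hence contributes exactly $q^b-q^{b-1}$ nonzero entries, so double counting $\sum_{\ccc\in D}w_1(\ccc)=(q^b-1)w$ gives $|\chi(D)|=\frac{(q^b-1)w}{q^{b-1}(q-1)}=\frac{(q^b-1)Q}{q^bN}$, independently of $D$. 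What your route buys: it proves the stronger statement that \emph{every} $b$-dimensional subcode has this same support size (not merely that the minimum equals it), and, together with your closing observation (the hypothesis of Theorem \ref{thm8} holds trivially when $N_1=1$ since there is a single class and $A=(\eta_0^{(1,Q)})=(-1)$ is invertible, and Theorem \ref{dddd} gives $\dd_b(\C)\leq d_b(\C)$), it makes the paper's subsequent corollary $d_b(\C(Q,N))=\dd_b(\C(Q,N))=\frac{(q^b-1)Q}{q^bN}$ entirely self-contained rather than dependent on an external reference. What the citation route buys is generality: the methods of \cite{YLFL} also handle $N_1\geq 2$, where the code is no longer one-weight and your uniform double count collapses to only an averaging bound.
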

Combining Theorem \ref{thm8} and Corollary \ref{YLFLth10}, we obtain the following result directly.
\begin{corollary}
If $N_1=1$ and $2\leq b\leq m-1$, then
$$d_b(\C(Q,N))=\dd_b(\C(Q,N))=
\frac{\left(q^b-1\right)Q}{q^bN}.$$
\end{corollary}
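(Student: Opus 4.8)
The plan is to obtain the result by combining two facts that are already available in the excerpt, so that the entire argument reduces to checking that the hypotheses of Theorem~\ref{thm8} are met when $N_1=1$. First I would recall that Corollary~\ref{YLFLth10} already supplies the generalized Hamming weight side, namely $\dd_b(\C(Q,N))=\frac{(q^b-1)Q}{q^bN}$ for $2\leq b\leq m-1$ when $N_1=1$. Thus the only thing left to prove is that the $b$-symbol minimum distance $d_b(\C(Q,N))$ equals this same quantity, and the natural route is to show that $\C(Q,N)$ is in fact a \emph{constant} $b$-symbol weight code with that weight, via Theorem~\ref{thm8}.

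The key steps, in order, are as follows. When $N_1=1$ the cyclotomic classes of order $N_1$ collapse to a single class $C_0^{(1,Q)}=\F_Q^*$, so there is only one index $i=0$ to consider and the condition ``$\#U(b,i,N_1)=\#U(b,j,N_1)$ for all $i\neq j$'' in Theorem~\ref{thm8} is satisfied vacuously. Indeed, this is precisely the remark made just after the proof of Theorem~\ref{thm8}, that the equality of the $\#U$ values holds when $N_1=1$. The remaining hypothesis of Theorem~\ref{thm8} is the invertibility of the circulant matrix $A$ in~(\ref{circulant}); for $N_1=1$ this matrix is the $1\times 1$ matrix $\big(\eta_0^{(1,Q)}\big)=(-1)$ by part~(1) of Lemma~\ref{lemeta}, which is trivially invertible. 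Hence Theorem~\ref{thm8} applies directly and yields that $\C(Q,N)$ is a constant $b$-symbol weight code with $b$-symbol weight $\frac{(q^b-1)Q}{q^bN}$.

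With constancy in hand, the minimum $b$-symbol distance is immediate: since every nonzero codeword of the linear code $\C(Q,N)$ has the same $b$-symbol weight $\frac{(q^b-1)Q}{q^bN}$, and $d_b(\C)$ equals the minimum $b$-symbol weight of a nonzero codeword, we get $d_b(\C(Q,N))=\frac{(q^b-1)Q}{q^bN}$. Combining this with Corollary~\ref{YLFLth10} gives the chain of equalities $d_b(\C(Q,N))=\dd_b(\C(Q,N))=\frac{(q^b-1)Q}{q^bN}$, as claimed, for the stated range $2\leq b\leq m-1$.

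I do not expect any serious obstacle here, since the statement is explicitly flagged in the excerpt as following ``directly'' from Theorem~\ref{thm8} and Corollary~\ref{YLFLth10}. The only point requiring a moment of care is the verification of the invertibility hypothesis of Theorem~\ref{thm8}: although the general invertibility of $A$ is only conjectured in the paper, in the degenerate case $N_1=1$ it is a triviality because $A=(-1)$, so one does not need to invoke the conjecture. If one prefers to avoid Theorem~\ref{thm8} altogether, an equally short alternative is to apply the Corollary following Theorem~\ref{thm4} with $N_1=1$, where the sum $\sum_{i=0}^{N_1-1}\#U(b,i,N_1)\eta_i^{(N_1,Q)}$ reduces to the single term $(q^b-1)\eta_0^{(1,Q)}=-(q^b-1)$, giving $w_b(\ccc(\beta))=\frac{(q^b-1)Q}{q^bN}$ for every nonzero $\beta$ and hence the same conclusion.
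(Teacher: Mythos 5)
Your proof is correct and follows essentially the same route as the paper, which simply states that the result follows directly by combining Theorem~\ref{thm8} and Corollary~\ref{YLFLth10}. Your verification that the hypotheses of Theorem~\ref{thm8} hold when $N_1=1$ (the $\#U$-equality being vacuous and $A=(\eta_0^{(1,Q)})=(-1)$ being trivially invertible) is exactly the detail the paper leaves implicit, and your alternative via the corollary to Theorem~\ref{thm4} is a valid shortcut to the same constant weight.
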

\subsection{The $b$-symbol weight hierarchy when $N_1=2$}
The following result is also from \cite{YLFL}.
\begin{theorem}\label{YLFLthm10}
If $N_1=2$, then $2\leq b\leq m-1$, then
$$\dd_b(\C(Q,N))=\left\{
              \begin{array}{ll}
                \frac{\left(q^b-1\right)\left(Q-Q^{\frac{1}{2}}\right)}{Nq^b}, & \hbox{for $2\leq b\leq \frac{m}{2}$;} \\
                \frac{Qq^b-2Q+q^b}{Nq^b}, & \hbox{for $\frac{m}{2}< b\leq m-1$.}
              \end{array}
            \right.
$$
\end{theorem}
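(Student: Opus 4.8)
The plan is to reduce the computation of $\dd_b(\C(Q,N))$ to a purely combinatorial question about squares in $\F_q$-subspaces of $\F_Q$. Since the statement ranges over $b\le m-1$, we are in the case $k_0=m$, so $\beta\mapsto\ccc(\beta)$ is an $\F_q$-linear isomorphism from $\F_Q$ onto $\C(Q,N)$ and a $b$-dimensional subcode is exactly $\{\ccc(\beta):\beta\in W\}$ for a $b$-dimensional $\F_q$-subspace $W\subseteq\F_Q$. A coordinate $j$ lies outside the support of this subcode iff $T_{Q/q}(\beta\theta^j)=0$ for all $\beta\in W$, i.e. iff $\theta^j\in W^{\perp}$, the orthogonal complement for the nondegenerate trace form. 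As $j$ runs over $\{0,\dots,n-1\}$ the elements $\theta^j$ run over the cyclic group $\langle\theta\rangle$ of order $n$, so the support has size $n-|\langle\theta\rangle\cap W^{\perp}|$. Letting $W^{\perp}$ range over all $(m-b)$-dimensional subspaces, I obtain
\begin{equation*}
\dd_b(\C(Q,N))=n-M(m-b),\qquad M(r):=\max_{\dim_{\F_q}V=r}\bigl|\langle\theta\rangle\cap V\bigr|.
\end{equation*}

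Next I would exploit $N_1=2$. This forces $q$ odd and $m$ even (otherwise $(Q-1)/(q-1)$ is odd and cannot share the factor $2$ with $N$), and it forces $N$ even, so $\theta=\alpha^N$ and hence all of $\langle\theta\rangle$ lie in the squares $(\F_Q^*)^2=\langle\alpha^2\rangle$. Two elementary group computations then organize everything: the scalar subgroup $H:=\F_q^*\cap\langle\theta\rangle=\langle\alpha^{\mathrm{lcm}(N,(Q-1)/(q-1))}\rangle$ has order $2(q-1)/N$, and $\langle\theta\rangle\cdot\F_q^*=\langle\alpha^{\gcd(N,(Q-1)/(q-1))}\rangle=\langle\alpha^{2}\rangle=(\F_Q^*)^2$. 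Because $V$ is $\F_q$-linear, $\langle\theta\rangle\cap V$ is a union of $H$-cosets contributing either $0$ or $|H|$ on each $\F_q$-line of $V$; moreover $\F_q^*\subseteq(\F_Q^*)^2$ (here $m$ even is used), so every line is either entirely square or entirely non-square, and a line meets $\langle\theta\rangle$ precisely when it lies in the squares. Consequently, for every $V$ one has $|\langle\theta\rangle\cap V|=\tfrac{|H|}{q-1}\,S(V)$, whence
\begin{equation*}
M(r)=\frac{2(q-1)}{N}\cdot\frac{S(r)}{q-1}=\frac{2}{N}\,S(r),\qquad S(r):=\max_{\dim_{\F_q}V=r}\#\{x\in V\setminus\{\0\}:x\in(\F_Q^*)^2\},
\end{equation*}
and the whole problem collapses to evaluating the universal quantity $S(r)$.

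I would then prove that $S(r)=q^r-1$ for $r\le m/2$ and $S(r)=\tfrac12(q^{m/2}-1)(q^{r-m/2}+1)$ for $r\ge m/2$; substituting into $\dd_b=n-\tfrac2N S(m-b)$ and simplifying reproduces the two displayed formulas (the range $m/2<b\le m-1$ matching $r<m/2$, and $2\le b\le m/2$ matching $r\ge m/2$). For the lower bounds I use explicit subspaces. The subfield $\F_{q^{m/2}}$ is totally square of dimension $m/2$, so any $r\le m/2$ subspace of it gives $S(r)\ge q^r-1$, which is also the trivial maximum. For $r>m/2$ I write $\F_Q=\F_{q^{m/2}}\oplus\delta\F_{q^{m/2}}$ with $\delta^2=\nu$ a non-square of $\F_{q^{m/2}}$, so that $x=a+\delta b$ is a square in $\F_Q$ iff the anisotropic norm form $a^2-\nu b^2$ is a non-zero square in $\F_{q^{m/2}}$; taking $V=\F_{q^{m/2}}\oplus\delta U$ with $\dim_{\F_q}U=r-m/2$ and using the standard conic count (for fixed $b\ne0$ exactly $(q^{m/2}-1)/2$ values of $a$ make $a^2-\nu b^2$ a non-zero square) yields $S(V)=(q^{m/2}-1)+(q^{r-m/2}-1)\tfrac{q^{m/2}-1}{2}=\tfrac12(q^{m/2}-1)(q^{r-m/2}+1)$, the claimed value.

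The crux, and what I expect to be the main obstacle, is the matching upper bound for $r>m/2$. The easy half is that totally-square subspaces have dimension at most $m/2$: if $V\setminus\{\0\}$ consists of squares and $\mu$ is a fixed non-square, then $\mu V$ consists of non-squares, so $V\cap\mu V=\{\0\}$ and $2\dim V\le m$; this already gives $S(r)<q^r-1$ once $r>m/2$. Pinning down the exact value requires controlling the quadratic-character sum, since $S(V)=\tfrac12(q^r-1)+\tfrac12\sum_{x\in V}\chi_2(x)$ for the quadratic character $\chi_2$ of $\F_Q$; equivalently, after the reduction above, one must show that $\sum_{(a,b)\in V}\chi\bigl(a^2-\nu b^2\bigr)$ (with $\chi$ the quadratic character of $\F_{q^{m/2}}$) is maximized, with value $q^{m/2}-q^{r-m/2}$, exactly by the subspaces $\F_{q^{m/2}}\oplus\delta U$ above. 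I would establish this by optimizing over the relative position of $V$ with respect to $\F_{q^{m/2}}$ and $\delta\F_{q^{m/2}}$, bounding the contribution of the part of $V$ transverse to the totally-square subfield via Gauss-sum/Weil estimates for the anisotropic form; this is the genuinely technical step, while the remaining algebra needed to recover the stated expressions for $\dd_b$ is routine.
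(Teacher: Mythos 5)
First, a caveat about what you are being compared to: this paper never proves Theorem~\ref{YLFLthm10} at all --- it is quoted as a known result from \cite{YLFL} --- so the relevant benchmark is that reference's Gauss-sum argument. Your skeleton is in fact the same as the one used there: with $k_0=m$ the map $\beta\mapsto\ccc(\beta)$ is an $\F_q$-isomorphism, coordinates outside the support of a $b$-dimensional subcode correspond to elements of $\langle\theta\rangle$ in the trace-dual subspace, so $\dd_b(\C(Q,N))=n-\max_{\dim V=m-b}\left|\langle\theta\rangle\cap V\right|$; your bookkeeping with the scalar subgroup $H$ and the ``square lines'' correctly collapses this to the square-counting function $S(r)$, your two lower-bound constructions (subspaces of $\F_{q^{m/2}}$, and $\F_{q^{m/2}}\oplus\delta U$ with the conic count $(q^{m/2}-1)/2$) are correct, and substituting the claimed values of $S(r)$ does reproduce both displayed formulas. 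In particular, for $\frac{m}{2}\leq b\leq m-1$ your proof is complete, since there $r=m-b\leq\frac{m}{2}$ and $S(r)=q^r-1$ is the trivial maximum, attained inside the subfield.

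The genuine gap is the matching upper bound $S(r)\leq\frac{1}{2}(q^{m/2}-1)(q^{r-m/2}+1)$ for $r>\frac{m}{2}$, which you flag as ``the crux'' but never prove. This is not a finishing detail: it is exactly the inequality $\dd_b(\C(Q,N))\geq\frac{(q^b-1)(Q-Q^{1/2})}{Nq^b}$ for $2\leq b<\frac{m}{2}$, i.e.\ half of the theorem in that range, and the strategy you sketch (stratifying $V$ by its position relative to $\F_{q^{m/2}}$ and invoking Weil-type estimates) will not pin down the exact constant --- Weil bounds give square-root errors per fiber and cannot certify that $\F_{q^{m/2}}\oplus\delta U$ is the maximizer. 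The step you are missing is a short duality argument, which is essentially how \cite{YLFL} proceed. Let $\lambda$ be the quadratic character of $\F_Q$ (with $\lambda(0)=0$) and $T(V)=\sum_{x\in V}\lambda(x)$, so that $S(V)=\frac{1}{2}(q^r-1)+\frac{1}{2}T(V)$. Poisson summation over the $\F_q$-subspace $V$ with respect to the pairing $(x,y)\mapsto\chi_1(xy)$, together with $\widehat{\lambda}(y)=\lambda(y)G(\lambda)$ for $y\neq0$ and $\widehat{\lambda}(0)=0$, yields
\begin{equation*}
T(V)=\frac{q^{r}}{Q}\,G(\lambda)\sum_{y\in V^{\perp}}\lambda(y),
\end{equation*}
where $V^{\perp}$ is the trace-dual, an $\F_q$-subspace of dimension $m-r<\frac{m}{2}$. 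Since $|G(\lambda)|=Q^{1/2}=q^{m/2}$, the \emph{trivial} bound $\left|\sum_{y\in V^{\perp}}\lambda(y)\right|\leq q^{m-r}-1$ gives $T(V)\leq q^{r-m/2}(q^{m-r}-1)=q^{m/2}-q^{r-m/2}$, hence $S(V)\leq\frac{1}{2}(q^{m/2}-1)(q^{r-m/2}+1)$, which your explicit subspaces attain. With this two-line insertion your proof becomes complete; without it, the case $2\leq b<\frac{m}{2}$ is only an upper bound on $\dd_b$.
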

When $N_1=2$, the two types of the minimum distance of the same irreducible cyclic code are equal under some restrictions.
\begin{theorem}
If $N_1=2$ and $1\leq b\leq m-1$, then
$$d_b(\C(Q,N))=\left\{
              \begin{array}{ll}
                \frac{\left(q^b-1\right)\left(Q+Q^{\frac{1}{2}}\right)}{Nq^b}-\frac{2Q^{\frac{1}{2}}}{Nq^b}\#U(b,0,2), & \hbox{if $\#U(b,0,2)\geq \frac{q^b-1}{2}$;} \\
                \frac{\left(q^b-1\right)\left(Q-Q^{\frac{1}{2}}\right)}{Nq^b}+\frac{2Q^{\frac{1}{2}}}{Nq^b}\#U(b,0,2), & \hbox{if $\#U(b,0,2)< \frac{q^b-1}{2}$.}
              \end{array}
            \right.
$$
\end{theorem}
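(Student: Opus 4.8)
The plan is to reduce the minimum $b$-symbol distance to a minimum over the two weights already computed in Theorem~\ref{thm1515}, and then to decide which of the two is smaller according to the size of $\#U(b,0,2)$ relative to $\frac{q^b-1}{2}$.

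First I would use linearity: for the linear code $\C(Q,N)$ one has $d_b(\C(Q,N))=\min_{\0\neq\ccc\in\C(Q,N)}w_b(\ccc)$, because the differences $\ccc-\ccc'$ of distinct codewords range exactly over the nonzero codewords. By Theorem~\ref{thm1515} every nonzero codeword $\ccc(\beta)$ has $b$-symbol weight equal to $u_1$ (when $\beta\in C_0^{(2,Q)}$) or $u_2$ (when $\beta\in C_1^{(2,Q)}$), and both classes are nonempty, so $\C(Q,N)$ is a two-weight code and $d_b(\C(Q,N))=\min\{u_1,u_2\}$. This turns the problem into an elementary comparison of $u_1$ and $u_2$.

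Next I would put both weights into a common shape. Writing $M=\frac{(q^b-1)Q}{Nq^b}$, $S=\frac{Q^{1/2}}{Nq^b}$ and $D=(q^b-1)-2\#U(b,0,2)$, and letting $\varepsilon=1$ if $p\equiv1\pmod 4$ and $\varepsilon=(-1)^{sm/2}$ if $p\equiv3\pmod 4$, the expressions of Theorem~\ref{thm1515} regroup as $u_1=M-\varepsilon SD$ and $u_2=M+\varepsilon SD$; this is just a rearrangement of the terms $\frac{(q^b-1)Q^{1/2}}{Nq^b}$ and $\frac{2Q^{1/2}}{Nq^b}\#U(b,0,2)$. Since $S>0$ and $|\varepsilon|=1$, it follows that $\min\{u_1,u_2\}=M-S|D|$, an expression in which the sign $\varepsilon$ has disappeared. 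I would then split on the sign of $D$: if $\#U(b,0,2)\geq\frac{q^b-1}{2}$ then $D\leq0$ and $M-S|D|$ simplifies to $\frac{(q^b-1)(Q+Q^{1/2})}{Nq^b}-\frac{2Q^{1/2}}{Nq^b}\#U(b,0,2)$, while if $\#U(b,0,2)<\frac{q^b-1}{2}$ then $D>0$ and one obtains $\frac{(q^b-1)(Q-Q^{1/2})}{Nq^b}+\frac{2Q^{1/2}}{Nq^b}\#U(b,0,2)$; the boundary case $\#U(b,0,2)=\frac{q^b-1}{2}$ gives $u_1=u_2=M$ and is absorbed into the first branch by the ``$\geq$''.

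The point that needs care, rather than a genuine obstacle, is the sign $\varepsilon=(-1)^{sm/2}$ in the case $p\equiv3\pmod 4$. One should first note it is well defined: when $N_1=2$ and $p$ is odd, $\frac{Q-1}{q-1}=1+q+\cdots+q^{m-1}\equiv m\pmod 2$ must be even, forcing $m$ and hence $sm$ even. More importantly, flipping $\varepsilon$ from $+1$ to $-1$ merely interchanges the roles of $u_1$ and $u_2$, leaving the unordered pair $\{u_1,u_2\}$ and therefore the minimum unchanged; this is exactly why the final formula carries no $(-1)^{sm/2}$ factor and why passing to the absolute value $|D|$ is the natural bookkeeping device.
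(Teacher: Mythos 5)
Your proposal is correct and follows essentially the same route as the paper: both arguments reduce $d_b(\C(Q,N))$ to the minimum of the two weights $u_1,u_2$ of Theorem \ref{thm1515} and decide which is smaller according to whether $\#U(b,0,2)\geq\frac{q^b-1}{2}$. The difference is only presentational --- you fold the sign cases ($p\equiv1\pmod 4$ versus $p\equiv3\pmod 4$ with $(-1)^{sm/2}$) into a single symbol $\varepsilon$ and take $M-S|D|$, whereas the paper carries out the same comparison in explicit sub-cases; your remark that $N_1=2$ forces $sm$ to be even is a nice well-definedness check but not a new idea.
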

Moreover, $d_b(\C(Q,N))=\dd_b(\C(Q,N))$ if and only if one of the following two statement holds:
\begin{itemize}
  \item [{\rm (1)}]$1\leq b\leq \frac{m}{2}$ and $\#U(b,0,2)=q^b-1.$
  \item [{\rm (2)}]$\frac{m}{2}< b\leq m$ and $\#U(b,i,2)=\frac{(q^b-Q^{\frac{1}{2}})(Q^{\frac{1}{2}}+1)}{2Q^{\frac{1}{2}}}$ for some $i\in \{0,1\}.$
\end{itemize}
\begin{proof}
Assume that $\beta_0\in C_0^{(2,Q)}$ and $\beta_1\in C_1^{(2,Q)}$.
If $\#U(b,0,2)\geq \frac{q^b-1}{2}$, then we have
$$\left\{
  \begin{array}{ll}
    w_b(\ccc(\beta_0)) = \frac{(q^b-1)(Q-Q^{\frac{1}{2}})}{Nq^b}
    +\frac{2Q^{\frac{1}{2}}}{q^bN}\#U(b,0,2) & \hbox{~} \\
    \geq w_b(\ccc(\beta_1))= \frac{(q^b-1)(Q+Q^{\frac{1}{2}})}{Nq^b}-\frac{2Q^{\frac{1}{2}}}{q^bN}\#U(b,0,2), & \hbox{if $p\equiv1({\rm mod~ 4})$ or $\frac{sm}{2}$ is even;} \\
    w_b(\ccc(\beta_0)) = \frac{(q^b-1)(Q+Q^{\frac{1}{2}})}{Nq^b}
    -\frac{2Q^{\frac{1}{2}}}{q^bN}\#U(b,0,2) & \hbox{~} \\
    \geq w_b(\ccc(\beta_1))= \frac{(q^b-1)(Q-Q^{\frac{1}{2}})}{Nq^b}+\frac{2Q^{\frac{1}{2}}}{q^bN}\#U(b,0,2), & \hbox{if $p\equiv3({\rm mod ~4})$ and $\frac{sm}{2}$ is odd.}
  \end{array}
\right.
$$
If $\#U(b,0,2)< \frac{q^b-1}{2}$, then we have
$$\left\{
  \begin{array}{ll}
    w_b(\ccc(\beta_0)) = \frac{(q^b-1)(Q-Q^{\frac{1}{2}})}{NQ^b}
    +\frac{2Q^{\frac{1}{2}}}{q^bN}\#U(b,0,2) & \hbox{~} \\
    <w_b(\ccc(\beta_1))= \frac{(q^b-1)(Q+Q^{\frac{1}{2}})}{NQ^b}
    -\frac{2Q^{\frac{1}{2}}}{q^bN}\#U(b,0,2), & \hbox{if $p\equiv1({\rm mod ~4})$ or $\frac{sm}{2}$ is even;} \\
    w_b(\ccc(\beta_0)) = \frac{(q^b-1)(Q+Q^{\frac{1}{2}})}{NQ^b}
    -\frac{2Q^{\frac{1}{2}}}{q^bN}\#U(b,0,2) & \hbox{~} \\
    >w_b(\ccc(\beta_1))= \frac{(q^b-1)(Q-Q^{\frac{1}{2}})}{NQ^b}+\frac{2Q^{\frac{1}{2}}}{q^bN}\#U(b,0,2), & \hbox{if $p\equiv3({\rm mod ~4})$ and $\frac{sm}{2}$ is odd.}
  \end{array}
\right.
$$
Therefore, the minimum $b$-symbol distance of $\C(Q,N)$ is
$$d_b(\C(Q,N))=\left\{
              \begin{array}{ll}
                \frac{(q^b-1)(Q+Q^{\frac{1}{2}})}{Nq^b}-\frac{2Q^{\frac{1}{2}}}{Nq^b}\#U(b,0,2), & \hbox{if $\#U(b,0,2)\geq \frac{q^b-1}{2}$;} \\
                \frac{(q^b-1)(Q-Q^{\frac{1}{2}})}{Nq^b}+\frac{2Q^{\frac{1}{2}}}{Nq^b}\#U(b,0,2), & \hbox{if $\#U(b,0,2)< \frac{q^b-1}{2}$.}
              \end{array}
            \right.
$$
The two types of the minimum distance are equal if and only if the following equations hold:
\begin{itemize}
  \item If $2\leq b\leq \frac{m}{2}$, then
  $$
  (q^b-1)(Q-Q^{\frac{1}{2}})=\left\{
                               \begin{array}{ll}
                                 (q^b-1)(Q+Q^{\frac{1}{2}})-2Q^{\frac{1}{2}}\#U(b,0,2), & \hbox{if $\#U(b,0,2)\geq \frac{q^b-1}{2};$} \\
                                 (q^b-1)(Q-Q^{\frac{1}{2}})+2Q^{\frac{1}{2}}\#U(b,0,2), & \hbox{if $\#U(b,0,2)< \frac{q^b-1}{2};$}
                               \end{array}
                             \right.
  $$
  $$\Longleftrightarrow \#U(b,0,2)=0~~\hbox{or}~~\#U(b,0,2)=q^b-1.$$
  According to the part (5) of Lemma \ref{lem7}, $\#U(b,0,2)$ can not be zero, then $\#U(b,0,2)=q^b-1.$
  \item If $\frac{m}{2}< b\leq m-1$, then
  $$
  Qq^b-2Q+q^b=\left\{
                               \begin{array}{ll}
                                 (q^b-1)(Q+Q^{\frac{1}{2}})-2Q^{\frac{1}{2}}\#U(b,0,2), & \hbox{if $\#U(b,0,2)\geq \frac{q^b-1}{2};$} \\
                                 (q^b-1)(Q-Q^{\frac{1}{2}})+2Q^{\frac{1}{2}}\#U(b,0,2), & \hbox{if $\#U(b,0,2)< \frac{q^b-1}{2};$}
                               \end{array}
                             \right.
  $$
  $$\Longleftrightarrow \#U(b,0,2)=\frac{(q^b+Q^{\frac{1}{2}})(Q^{\frac{1}{2}}-1)}{2Q^{\frac{1}{2}}}~~
  \hbox{or}~~\#U(b,0,2)=\frac{(q^b-Q^{\frac{1}{2}})(Q^{\frac{1}{2}}+1)}{2Q^{\frac{1}{2}}}.$$
\end{itemize}
This completes the proof.
\end{proof}
\begin{example}
When $q=3$, $m=10$ and $b=2$, we have $\#U(2,0,2)=8=q^b-1.$ Then the minimal symbol-pair distance of $\C(3^{10},2)$ equals $\dd_2(\C(3^{10},2))=\frac{(3^2-1)(3^5-1)\cdot3^5}{2\cdot 3^2}=26136$.
\end{example}
\subsection{The $b$-symbol weight hierarchy in the semi-primitive cases}
\begin{theorem}\label{thm25}
Let $s\cdot m$ be even and $N_1>2$. Assume that there exists a positive integer $j$ such that $p^j\equiv-1({{\rm{mod}}~N_1})$, and the $j$ is the least such. Let $Q=p^{2j\gamma}$ for some integer $\gamma$. Then
\begin{small}
\begin{equation*}
  d_b(\C(Q,N))=\left\{
                 \begin{array}{ll}
                   \frac{(q^b-1)(Q+Q^{\frac{1}{2}})}
  {q^bN} -\frac{N_1Q^{\frac{1}{2}}}{q^bN}\cdot
  \max\{\#U(b,i,N_1)| i \in \mathbb{Z}_{N_1}\}, & \hbox{if $\gamma$ is odd and $N_1\leq Q^{\frac{1}{2}}$,}\\
                   \frac{(q^b-1)(Q-Q^{\frac{1}{2}})}
  {q^bN} +\frac{N_1Q^{\frac{1}{2}}}{q^bN}\cdot
  \min\{\#U(b,i,N_1)| i \in \mathbb{Z}_{N_1}\}, & \hbox{if $\gamma$ is even.}
                 \end{array}
               \right.
\end{equation*}
\end{small}
\end{theorem}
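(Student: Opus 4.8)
The plan is to reduce the computation of $d_b(\C(Q,N))$ to a minimization over the finitely many per-class $b$-symbol weights that Theorem \ref{thm16} already supplies. First I would recall that the $b$-symbol distance is translation invariant, $d_b(\x,\mathbf{y})=w_b(\x-\mathbf{y})$, so that for the linear code $\C(Q,N)$ one has $d_b(\C(Q,N))=\min\{w_b(\ccc):\ccc\in\C(Q,N),\ \ccc\neq\mathbf{0}\}$. Since the proof of Theorem \ref{thm16} records that no nonzero $\beta$ yields $\ccc(\beta)=\mathbf{0}$, and that $w_b(\ccc(\beta))$ depends only on the cyclotomic class $C_i^{(N_1,Q)}$ containing $\beta$, and since these classes exhaust $\F_Q^*$, this minimum is attained over the class-representatives:
\[
d_b(\C(Q,N))=\min_{i\in\mathbb{Z}_{N_1}}w_b(\ccc(\beta_i)),\qquad \beta_i\in C_i^{(N_1,Q)}.
\]
Everything then comes down to optimizing the closed forms of Theorem \ref{thm16} as $i$ ranges over $\mathbb{Z}_{N_1}$.

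Next I would treat the case $\gamma$ even, which lands in the ``all other cases'' branch of Theorem \ref{thm16} with $(-1)^{\gamma}=1$. There each weight equals
\[
\frac{(q^b-1)(Q-Q^{\frac{1}{2}})}{q^bN}+\frac{N_1Q^{\frac{1}{2}}}{q^bN}\,\#U(b,j_2,N_1),\qquad i+j_2\equiv0\pmod{N_1}.
\]
The key observation is that $i\mapsto j_2=(-i)\bmod N_1$ is a bijection of $\mathbb{Z}_{N_1}$, so as $i$ runs over all classes the index $j_2$ runs over all of $\mathbb{Z}_{N_1}$. Because the coefficient $\frac{N_1Q^{1/2}}{q^bN}$ is strictly positive, the minimum weight is obtained by minimizing $\#U(b,i,N_1)$, which gives exactly the second line of the claimed formula (and here, as in Theorem \ref{thm16}, no hypothesis $N_1\le Q^{1/2}$ is needed).

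For $\gamma$ odd I would flag the one genuine subtlety: this hypothesis straddles both branches of Theorem \ref{thm16}, so I must verify they collapse to a single expression. In the branch where $\gamma,p,\frac{p^j+1}{N_1}$ are all odd the weight is $\frac{(q^b-1)(Q+Q^{1/2})}{q^bN}-\frac{N_1Q^{1/2}}{q^bN}\#U(b,j_1,N_1)$ with $i+j_1\equiv\frac{N_1}{2}\pmod{N_1}$; in the complementary branch with $\gamma$ odd we have $(-1)^{\gamma}=-1$, and the weight becomes $\frac{(q^b-1)(Q+Q^{1/2})}{q^bN}-\frac{N_1Q^{1/2}}{q^bN}\#U(b,j_2,N_1)$ with $i+j_2\equiv0\pmod{N_1}$. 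In both branches the shift $i\mapsto j_1$ (resp. $i\mapsto j_2$) is a bijection of $\mathbb{Z}_{N_1}$, the leading constant $\frac{(q^b-1)(Q+Q^{1/2})}{q^bN}$ is identical, and the coefficient of $\#U$ is strictly negative, so minimizing the weight amounts to maximizing $\#U(b,i,N_1)$. This yields the first line of the formula, with $N_1\le Q^{1/2}$ inherited from Theorem \ref{thm16} to guarantee the weights are legitimate (in fact $\ge b$).

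The argument is essentially bookkeeping layered on Theorem \ref{thm16}; the only point requiring care, and hence the main obstacle, is confirming that the $\gamma$-odd hypothesis induces the \emph{same} optimization problem in both branches of Theorem \ref{thm16}, so that a single $\max$ expression is correct. I would close by noting that $\min_i\#U(b,i,N_1)$ and $\max_i\#U(b,i,N_1)$ are attained since $\mathbb{Z}_{N_1}$ is finite, and that by part (5) of Lemma \ref{lem7} these quantities are positive, so the extremal class weight sits strictly below the generic one, consistent with its role as the minimum distance.
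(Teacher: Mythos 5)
Your proof is correct and takes essentially the same route as the paper: the paper's own (very terse) proof likewise just reads off the per-class weights from Theorem \ref{thm16} and inserts $\max$ or $\min$ of $\#U(b,i,N_1)$ according to the sign of its coefficient, while you make explicit the bookkeeping it leaves implicit (reduction of $d_b$ to a minimum over cyclotomic classes, the index bijection $i\mapsto j_1$ or $i\mapsto j_2$, and the unification of the two $\gamma$-odd branches into one $\max$ formula). One peripheral slip in your closing remark: Lemma \ref{lem7}(5) only guarantees $\#U(b,0,N_1)>0$, and $\min_i\#U(b,i,N_1)$ can equal $0$ (e.g. $U(1,j,N_1)=\emptyset$ for $j\neq 0$), but nothing in your actual argument depends on that positivity claim.
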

\begin{proof}
From Theorem \ref{thm16}, if $\gamma, p$ and $\frac{p^j+1}{N_1}$ are all odd, we have the following result. Assume that $N_1\leq Q^{\frac{1}{2}}$.
Then
\begin{eqnarray*}
  d_b(\C(Q,N)) &=& \frac{(q^b-1)(Q+Q^{\frac{1}{2}})}{q^bN} -\frac{N_1Q^{\frac{1}{2}}}{q^bN}
  \max\{\#U(b,i,N_1)| i \in \mathbb{Z}_{N_1}\}.
\end{eqnarray*}
In all other cases, we have the following result. Assume that $N_1\leq Q^{\frac{1}{2}}$ if $\gamma$ is odd.
Then
\begin{small}
\begin{equation*}
  d_b(\C(Q,N))=\left\{
                 \begin{array}{ll}
                   \frac{(q^b-1)(Q+Q^{\frac{1}{2}})}
  {q^bN} -\frac{N_1Q^{\frac{1}{2}}}{q^bN}\cdot
  \max\{\#U(b,i,N_1)| i \in \mathbb{Z}_{N_1}\}, & \hbox{if $\gamma$ is odd,}\\
                   \frac{(q^b-1)(Q-Q^{\frac{1}{2}})}
  {q^bN} +\frac{N_1Q^{\frac{1}{2}}}{q^bN}\cdot
  \min\{\#U(b,i,N_1)| i \in \mathbb{Z}_{N_1}\}, & \hbox{if $\gamma$ is even.}
                 \end{array}
               \right.
\end{equation*}
\end{small}
This completes the proof.
\end{proof}
\begin{remark}
From the Magma experimental data, $\#U(b,0,N_1)$ is always the maximum value of the set $\{\#U(b,i,N_1)| i \in \mathbb{Z}_{N_1}\}$. So it is important to determine the value of $\#U(b,0,N_1)$.
\end{remark}
\section{Shortened codes from irreducible cyclic codes}
In this section, we introduce a technique for constructing new codes from old codes, which we call $b$-symbol shortened construction. To this end, we need the following definition of $b$-symbol support of a vector.
\begin{definition}
The $b$-symbol support of a vector $\x$ is defined by
$$\mathcal{I}_b(\x)=supp(\pi_b(\x))=\bigcup_{i=0}^{b-1}supp(\tau^i(\x)),$$
where $supp(\x)$ denotes the support of the vector $\x.$ Let $\overline{\mathcal{I}_b(\x)}
=\{1,2,\ldots,n\}\setminus\mathcal{I}_b(\x).$
\end{definition}
\begin{remark}
In the $b$-th generalized Hamming metric, there is a definition about the support of the subcode $\D$ of $\C$ which defined to be
 \begin{equation*}
   \chi(\D)=\{i:0\leq i\leq n-1|c_i\neq0~ \hbox{for some} ~ (c_0,c_1,\ldots,c_{n-1})\in \D\}.
 \end{equation*}
The two definitions on the support can be viewed as two different generalizations for the support of a vector.
\end{remark}
The following proposition gives an interesting shortening technique.
\begin{prop}\label{prop1}
Let $\C$ be a cyclic code with parameters $[n,K,d_H(\C)]$ over $\F_q$. Let $1\leq b\leq K$ and let $\ccc$ be a codeword with the minimal $b$-symbol weight of $\C$. Then the shortened code $C_{\overline{\mathcal{I}_b(\ccc)}}$ has parameters $[d_b(\C), b, \geq d_H(\C)].$
\end{prop}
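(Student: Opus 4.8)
The plan is to understand precisely what the shortened code $C_{\overline{\mathcal{I}_b(\ccc)}}$ is and then verify each of the three claimed parameters $[d_b(\C),\,b,\,\geq d_H(\C)]$ in turn. Shortening a code on a coordinate set $\mathcal{T}$ means taking all codewords that vanish on $\mathcal{T}$ and then deleting the coordinates in $\mathcal{T}$; here $\mathcal{T}=\mathcal{I}_b(\ccc)$, the $b$-symbol support of the fixed minimal-weight codeword $\ccc$, so the surviving coordinates form $\overline{\mathcal{I}_b(\ccc)}$. The \textbf{length} of the shortened code is therefore $|\overline{\mathcal{I}_b(\ccc)}| = n - |\mathcal{I}_b(\ccc)| = n - w_b(\ccc) = n - d_b(\C)$; but wait, the claimed length is $d_b(\C)$, so I must instead read the construction as \emph{keeping} the coordinates indexed by $\mathcal{I}_b(\ccc)$ and shortening on its complement, i.e. the shorten set is $\mathcal{T}=\overline{\mathcal{I}_b(\ccc)}$ and the retained length is $|\mathcal{I}_b(\ccc)| = w_b(\ccc)=d_b(\C)$. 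This is consistent with the abstract's phrasing that $\mathcal{T}$ is the complementary set of the $b$-symbol support. So the \textbf{length} claim is immediate once the construction is pinned down, using $w_b(\ccc)=d_b(\C)$ because $\ccc$ is chosen to have minimal $b$-symbol weight.

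\textbf{Next I would establish the dimension.} The shortened code consists of the codewords of $\C$ supported entirely inside $\mathcal{I}_b(\ccc)$, restricted to those coordinates. The key observation is that $\mathcal{I}_b(\ccc)=\bigcup_{i=0}^{b-1}supp(\tau^i(\ccc))$ is exactly the union of the ordinary supports of $\ccc$ and its first $b-1$ cyclic shifts. Therefore each of the $b$ shifts $\ccc,\tau(\ccc),\ldots,\tau^{b-1}(\ccc)$ is itself supported inside $\mathcal{I}_b(\ccc)$ and hence survives shortening. The plan is to show these $b$ shifts are linearly independent (so the dimension is at least $b$) and that no further dimension can appear (so it is exactly $b$). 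Independence should follow because $b\le K=\dim(\C)$ and $\ccc$ is nonzero: in a cyclic code the cyclic shifts of a single nonzero codeword span a subcode whose dimension equals the degree considerations of the associated generator, and for $b$ at most the dimension the first $b$ shifts cannot collapse. For the upper bound I would argue that any codeword vanishing outside $\mathcal{I}_b(\ccc)$ already lies in the span generated by $\ccc$ and its relevant shifts, relating back to the structure $V_b(\ccc)$ from Lemma \ref{lem3}; this ties the dimension count directly to the $b$-symbol construction.

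\textbf{Finally I would bound the minimum Hamming distance.} Every nonzero codeword of the shortened code is the restriction of a nonzero codeword $\ccc'\in\C$ that vanishes on $\overline{\mathcal{I}_b(\ccc)}$, so its Hamming weight in the shortened code equals its Hamming weight in $\C$, which is at least $d_H(\C)=d_1(\C)$. This gives the claimed lower bound $d_H(C_{\overline{\mathcal{I}_b(\ccc)}})\geq d_H(\C)$ essentially for free, since shortening never decreases the minimum Hamming distance of a nonzero codeword.

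\textbf{The main obstacle} I anticipate is the exact dimension claim, specifically the lower bound $\dim = b$: one must verify that $\ccc$ and its first $b-1$ cyclic shifts are linearly independent and all lie in the shortened code. The containment is clear from the definition of $\mathcal{I}_b(\ccc)$, but independence requires using that $\ccc$ is a genuine codeword of a cyclic code with $b\le K$ and invoking that cyclic shifts of a nonzero word of an irreducible (or general cyclic) code stay independent up to the code dimension. I would lean on Theorem \ref{kgeqb}, which guarantees $d_b(\C)=n$ exactly when $b\ge K$ and implies nontrivial behaviour of shifts for $b<K$, to secure independence, and this is the step deserving the most care.
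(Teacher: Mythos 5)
Your proposal is correct and takes essentially the same route as the paper: the paper stacks $\ccc,\tau(\ccc),\ldots,\tau^{b-1}(\ccc)$ into a matrix $G_b(\ccc)$, observes that its nonzero columns are exactly the $|\mathcal{I}_b(\ccc)|=d_b(\C)$ retained positions, that its rank equals $b$ (citing \cite[Lemma 16]{BUG}, which plays the role of your appeal to Theorem \ref{kgeqb} and equation (\ref{equa1})), and concludes since the shortened code is generated by these restricted rows, the distance bound following because deleting zero coordinates preserves Hamming weight. This is precisely your length/dimension/distance plan, down to the shared feature that the upper bound $\dim\leq b$ (that no codeword supported in $\mathcal{I}_b(\ccc)$ lies outside the span of the $b$ shifts) is asserted rather than argued in detail.
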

\begin{proof}
Let $G_b(\ccc)$ be the matrix
\begin{equation*}
  G_b(\ccc)=\left(
    \begin{array}{c}
      \ccc \\
      \tau(\ccc) \\
      \vdots \\
      \tau^{b-1}(\ccc) \\
   \end{array}
  \right)_{b\times n}
  =\left(
             \begin{array}{cccc}
              \mathbf{r}_1 & \mathbf{r}_2 & \cdots & \mathbf{r}_n \\
            \end{array}
          \right)_{b\times n},
\end{equation*}
where $\mathbf{r}_i$ are column vectors belonging to $ \F_q^b$ for $i \in \{1,\ldots,n\}.$
From \cite[Lemma 16]{BUG}, the rank of $G_b(\ccc)$ equals $b$.
Let
 $$G_b^{\prime}(\ccc)=\left(
                                    \begin{array}{cccc}
                                      \mathbf{r}_{j_1} & \mathbf{r}_{j_2} & \ldots & \mathbf{r}_{j_m} \\
                                    \end{array}
                                  \right)_{b\times m},
$$ where $\mathbf{r}_{j_1},\ldots,\mathbf{r}_{j_m}$ are all the nonzero columns of $G_b(\ccc)$.
By the definition of the $b$-symbol weight metric, $m=d_b(\C).$
 This yields the desired result, since the shortened code $C_{\overline{\mathcal{I}_b(\ccc)}}$ is generated by $G_b^{\prime}(\ccc)$.
\end{proof}
The following bound is the famous Griesmer bound. It was proved by Griesmer \cite{GRIE} for binary codes, and later generalized by Solomon and Stiffler \cite{SS} for $q>2$.
\begin{theorem}\label{thm31}
Let $\C$ be a linear codes with parameters $[n,K,d_H(\C)]$ over $\F_q$ with $K\geq 1$. Then $$
n\geq \sum_{i=0}^{K-1}\left\lceil\frac{d_H(\C)}{q^i}
\right\rceil.
$$
\end{theorem}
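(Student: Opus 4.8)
The plan is to prove the Griesmer bound by induction on the dimension $K$, using the classical device of the \emph{residual code}. When $K=1$ the right-hand side collapses to $\lceil d_H(\C)/q^0\rceil=d_H(\C)$, and the inequality $n\geq d_H(\C)$ is immediate since no codeword can have weight exceeding the length. So the whole argument rests on passing from a code of dimension $K$ to one of dimension $K-1$ while deleting exactly $d:=d_H(\C)$ coordinates and shrinking the minimum distance by at most a factor of $q$.

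Concretely, let $\ccc$ be a codeword of minimum Hamming weight $d$. After a coordinate permutation and a scaling (neither of which affects the parameters) I would assume $\ccc$ is supported on the first $d$ coordinates with every nonzero entry equal to $1$. Define the residual code as the image of $\C$ under the map $P$ that deletes these $d$ coordinates; it has length $n-d$. First I would pin down its dimension: the kernel of $P$ consists of codewords supported inside the first $d$ positions, and any such nonzero codeword has weight at most $d$ (by its support) and at least $d$ (by minimality of $\ccc$), hence weight exactly $d$ with full support on those positions. Subtracting a suitable scalar multiple of $\ccc$ then produces a codeword of weight $<d$, forcing it to vanish; thus the kernel is exactly $\langle\ccc\rangle$, one-dimensional, and by rank--nullity the residual code has dimension $K-1$.

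The crux is to show the residual code has minimum distance at least $\lceil d/q\rceil$; this averaging step is the main obstacle. Given a nonzero residual codeword $\overline{\x}=P(\x)$, its preimages in $\C$ are exactly $\x+\lambda\ccc$ for $\lambda\in\F_q$, each nonzero and hence of weight $\geq d$. Splitting each weight into the contribution on the last $n-d$ coordinates (which equals $w_1(\overline{\x})$, independent of $\lambda$) and the contribution on the first $d$ coordinates, the key observation is that each of the $d$ support positions of $\ccc$ is cancelled for exactly one value of $\lambda$, so summing over all $q$ scalars the total front contribution is $d(q-1)$. This yields
$$
d(q-1)+q\,w_1(\overline{\x})\;=\;\sum_{\lambda\in\F_q}w_1(\x+\lambda\ccc)\;\geq\;qd,
$$
whence $w_1(\overline{\x})\geq d/q$ and, being an integer, $w_1(\overline{\x})\geq\lceil d/q\rceil$.

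Finally I would close the induction. The residual code has parameters $[n-d,\,K-1,\,d']$ with $d'\geq\lceil d/q\rceil$, so the inductive hypothesis, the monotonicity of the right-hand side in the distance, and the integer identity $\lceil\lceil d/q\rceil/q^i\rceil=\lceil d/q^{i+1}\rceil$ give
$$
n-d\;\geq\;\sum_{i=0}^{K-2}\Bigl\lceil\tfrac{d'}{q^i}\Bigr\rceil\;\geq\;\sum_{i=0}^{K-2}\Bigl\lceil\tfrac{\lceil d/q\rceil}{q^i}\Bigr\rceil\;=\;\sum_{i=1}^{K-1}\Bigl\lceil\tfrac{d}{q^i}\Bigr\rceil,
$$
and adding $d=\lceil d/q^0\rceil$ to both sides produces the claimed bound. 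The only genuinely delicate points are verifying that the kernel of the puncturing map is precisely $\langle\ccc\rangle$ and handling the ceiling identity cleanly; the distance bound itself drops out of the averaging identity displayed above.
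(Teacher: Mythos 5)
Your proof is correct and complete. Note, however, that the paper itself does not prove this statement at all: it quotes the Griesmer bound as a classical result, attributing the binary case to Griesmer and the general $q$ case to Solomon and Stiffler, and then only \emph{uses} it (to certify that the shortened codes of Theorem \ref{short1} are Griesmer codes). So there is no internal argument to compare yours against; what you have written is the standard residual-code proof, and every step of it checks out. The base case $n\geq d_H(\C)$ is immediate; your identification of the kernel of the puncturing map as exactly $\langle\ccc\rangle$ (via the observation that a nonzero codeword supported in the first $d$ positions must have full support there, and then subtracting $x_i\ccc$ kills it) correctly gives dimension $K-1$ by rank--nullity; the averaging identity
$$
\sum_{\lambda\in\F_q}w_1(\x+\lambda\ccc)\;=\;d(q-1)+q\,w_1(\overline{\x})\;\geq\;qd
$$
is the right way to get $w_1(\overline{\x})\geq\lceil d/q\rceil$, since each of the $d$ front positions vanishes for exactly one $\lambda$; and the nested-ceiling identity $\bigl\lceil\lceil d/q\rceil/q^i\bigr\rceil=\lceil d/q^{i+1}\rceil$ together with monotonicity closes the induction cleanly. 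Had the authors wanted a self-contained paper, your argument is precisely the one they would insert.
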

A code which achieves the Griesmer bound is called a Griesmer code.
 The following theorem gives a class of Griesmer codes from $\C(Q,N)$ that shorten some proper coordinates.
\begin{theorem}\label{short1}
Let $N_1=1$ and $1\leq b\leq m$. Let $\ccc$ be a nonzero codeword of $\C(Q,N)$. Then the parameters of the shortened codes $\C(Q,N)_{\overline{\mathcal{I}_b(\ccc)}}$ are  $$\left[\frac{(q^b-1)Q}{q^bN},b,\frac{(q-1)Q}{qN}\right].$$
Moreover, $\C(Q,N)_{\overline{\mathcal{I}_b(\ccc)}}$ are Griesmer codes if $N|q-1$.
\end{theorem}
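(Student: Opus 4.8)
The plan is to read off the three parameters from the constant-weight behaviour of $\C(Q,N)$ in the case $N_1=1$, invoke Proposition \ref{prop1}, and then upgrade the resulting lower bound on the distance to an equality, before finishing with a direct Griesmer computation. A preliminary remark makes Proposition \ref{prop1} usable on the full range $1\le b\le m$: when $N_1=1$ one necessarily has $k_0=m$, so $\dim\C(Q,N)=m$. Indeed, writing $M=\frac{q^m-1}{q^{k_0}-1}$ (an integer since $k_0\mid m$), one checks that $M$ divides both $N=\frac{Q-1}{n}$ and $\frac{Q-1}{q-1}$, hence $M\mid N_1=1$, which forces $m=k_0$. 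Thus $b\le m=\dim\C$ and Proposition \ref{prop1} applies. Next I would extract the two weights. Since $N_1=1$, every nonzero $\beta$ lies in $C_0^{(1,Q)}$ and $\eta_0^{(1,Q)}=-1$, so Lemma \ref{Dingyang2013} gives $w_1(\ccc(\beta))=\frac{(q-1)Q}{qN}$ for all nonzero $\beta$; hence $\C(Q,N)$ is a constant Hamming weight code with $d_H(\C)=\frac{(q-1)Q}{qN}$. For the $b$-symbol weight, the condition $\#U(b,i,N_1)=\#U(b,j,N_1)$ is vacuous when $N_1=1$ (and $A=(-1)$ is invertible), so Theorem \ref{thm8} yields that $\C(Q,N)$ is a constant $b$-symbol weight code with $d_b(\C)=\frac{(q^b-1)Q}{q^bN}$ for $1\le b\le m-1$, the endpoint $b=m$ being covered by $d_m(\C)=n=\frac{(q^m-1)Q}{q^mN}$ from Theorem \ref{kgeqb}.

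Feeding $d_b(\C)$ and $d_H(\C)$ into Proposition \ref{prop1} yields length $\frac{(q^b-1)Q}{q^bN}$, dimension $b$, and the bound $\ge\frac{(q-1)Q}{qN}$ on the distance. To see the distance is \emph{exactly} this value, I would show the shortened code is again a constant Hamming weight code. Every codeword of $\C(Q,N)_{\overline{\mathcal{I}_b(\ccc)}}$ is the restriction to $\mathcal{I}_b(\ccc)$ of some $\ccc'=\sum_{j=1}^b u_j\tau^{j-1}(\ccc)\in V_b(\ccc)$. Since $\mathrm{supp}(\ccc')\subseteq\bigcup_{j}\mathrm{supp}(\tau^{j-1}(\ccc))=\mathcal{I}_b(\ccc)$, this restriction discards no nonzero entry, so the codeword has Hamming weight $w_H(\ccc')$. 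Because $G_b(\ccc)$ has rank $b$ (the fact from \cite[Lemma 16]{BUG} underlying Proposition \ref{prop1}), a nonzero $(u_1,\dots,u_b)$ gives a nonzero $\ccc'$, whence $w_H(\ccc')=d_H(\C)=\frac{(q-1)Q}{qN}$ by the constant-weight property. Thus every nonzero codeword of the shortened code has weight exactly $\frac{(q-1)Q}{qN}$, and its minimum distance equals this value.

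For the Griesmer statement, assume $N\mid q-1$ and set $r=\frac{q-1}{N}\in\Z$, so that $d:=\frac{(q-1)Q}{qN}=rq^{m-1}$. For $0\le i\le b-1$ one has $m-1-i\ge m-b\ge0$, so $\frac{d}{q^i}=rq^{m-1-i}$ is an integer and $\lceil d/q^i\rceil=rq^{m-1-i}$. Then $\sum_{i=0}^{b-1}\lceil d/q^i\rceil=rq^{m-b}\frac{q^b-1}{q-1}=\frac{q^{m-b}(q^b-1)}{N}=\frac{(q^b-1)Q}{q^bN}$, which is precisely the length, so by Theorem \ref{thm31} the code is a Griesmer code.

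The main obstacle is the passage from the inequality $\ge d_H(\C)$ supplied by Proposition \ref{prop1} to equality; the observation that resolves it is that restricting a member of $V_b(\ccc)$ to $\mathcal{I}_b(\ccc)$ preserves Hamming weight, so the shortened code inherits the constant-weight property of $\C(Q,N)$. The preliminary reduction $k_0=m$ is also needed so that Proposition \ref{prop1} and the weight formulas cover the endpoint $b=m$.
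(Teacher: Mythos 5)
Your proposal is correct and takes essentially the same route as the paper: constant Hamming and $b$-symbol weight (Lemma \ref{Dingyang2013}, Theorem \ref{thm8}), Proposition \ref{prop1} for the shortened parameters, and the identical Griesmer computation. You are in fact more careful than the paper's terse proof, which silently upgrades the bound $\geq d_H(\C)$ from Proposition \ref{prop1} to an exact distance and does not comment on $k_0=m$ or the endpoint $b=m$; your restriction-preserves-weight argument and your observation that $N_1=1$ forces $k_0=m$ fill exactly those gaps.
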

\begin{proof}
When $N_1=1$, any nonzero codeword of $\C(Q,N)$ has the minimal $b$-symbol weight. Combining Proposition \ref{prop1} and Theorem \ref{thm8}, we obtain the desired parameters. Recall that $Q=q^m$. Then we have
\begin{eqnarray*}
  \sum_{i=0}^{b-1}\left\lceil\frac{\frac{(q-1)Q}{qN}}{q^i}\right\rceil=\frac{q-1}{N}\sum_{i=1}^{b}q^{m-i}=n
  .
\end{eqnarray*}
Therefore, $\C(Q,N)_{\overline{\mathcal{I}_b(\ccc)}}$ are Griesmer codes if $N|q-1$.
\end{proof}
\begin{remark}
When $N=q-1$, the code $\C(Q,N)$ is the Simplex code.
Liu {\it et al.} \cite{LDT} considered the code from the Simplex code $S(m,q)$ which shortened any one or two coordinates. Theorem \ref{short1} consider the case where the size of shortening set is greater than $2$.
\end{remark}
\begin{example}
As we all know, the parameters of the Simplex code $S(m,q)$ over
 $\F_q$
  are
   $[\frac{q^m-1}{q-1},m,q^{m-1}].$
Let $\ccc$ be a nonzero codeword of $S(m,q)$.
The parameters of the shortened codes $S(m,q)_{\overline{\mathcal{I}_b(\ccc)}}$ are given in  Table 2. These codes are all Griesmer codes. From the numerical example, we can see that when the size of the shortened set exceeds $2$, we still get the codes with very good parameters.
\begin{table}[H]
  \centering
  \caption{Numeral examples of Theorem \ref{thm31}}\label{A}

\begin{tabular}{c|c}
  \hline
   Shortened code & Parameters  \\
  \hline
  $S(4,2)_{\overline{\mathcal{I}_3(\ccc)}}$ & $[14,3,8]_2$ \\
  $S(4,2)_{\overline{\mathcal{I}_2(\ccc)}}$ & $[12,2,8]_2$ \\
  $S(5,2)_{\overline{\mathcal{I}_4(\ccc)}}$ & $[30,4,16]_2$ \\
  $S(5,2)_{\overline{\mathcal{I}_3(\ccc)}}$ & $[28,3,16]_2$ \\
  $S(5,2)_{\overline{\mathcal{I}_2(\ccc)}}$ & $[24,2,16]_2$ \\
  $S(4,3)_{\overline{\mathcal{I}_3(\ccc)}}$ & $[39,3,27]_3$ \\
  $S(4,3)_{\overline{\mathcal{I}_2(\ccc)}}$ & $[36,2,27]_3$ \\
  $S(5,3)_{\overline{\mathcal{I}_4(\ccc)}}$ & $[120,4,81]_3$ \\
  $S(5,3)_{\overline{\mathcal{I}_3(\ccc)}}$ & $[117,3,81]_3$ \\
  $S(5,3)_{\overline{\mathcal{I}_2(\ccc)}}$ & $[108,2,81]_3$ \\
  $S(4,4)_{\overline{\mathcal{I}_3(\ccc)}}$ & $[84,3,64]_4$ \\
  $S(4,4)_{\overline{\mathcal{I}_2(\ccc)}}$ & $[80,2,64]_4$ \\
  $S(5,4)_{\overline{\mathcal{I}_4(\ccc)}}$ & $[340,4,256]_4$ \\
  $S(5,4)_{\overline{\mathcal{I}_3(\ccc)}}$ & $[336,3,256]_4$ \\
  $S(5,4)_{\overline{\mathcal{I}_2(\ccc)}}$ & $[320,2,256]_4$ \\
  \hline
\end{tabular}
\end{table}
\end{example}
\section{Summary and concluding remarks}
The main contributions of this paper are the following:
\begin{itemize}
  \item A general formula for computing the $b$-symbol weight of a nonzero codeword of an irreducible cyclic code is given. It is a generalization of the formula for computing the Hamming weight of a nonzero codeword of an irreducible cyclic code from \cite{DY2013}.
  \item The $b$-symbol weight hierarchies of some irreducible cyclic codes are given. The two types of weight hierarchies of the same irreducible cyclic code mentioned in this paper are compared. In particular, the two weight hierarchies are equal under certain conditions.
  \item  we present an application of the $b$-symbol weight hierarchy
of cyclic codes in the shortening technique and construct some new shortened codes with
nice parameters. Compared to the work in \cite{LDT}, when the object code is an irreducible cyclic code, the size of the shortened set can be greater than $2$.
 \item The results that the $b$-symbol weight hierarchies of irreducible cyclic codes provide nice upper bounds on the generalized weight hierarchies of irreducible cyclic codes.
\end{itemize}
\section*{Acknowledgement}
This research is supported by Natural Science Foundation of China (12071001), Excellent Youth Foundation of Natural Science Foundation of Anhui Province (1808085J20).

\end{document}